\definecolor{darkred}  {rgb}{0.5,0,0}
\definecolor{darkblue} {rgb}{0,0,0.5}
\definecolor{darkgreen}{rgb}{0,0.5,0}
\theoremstyle{definition}
\newtheorem{proposition}{Proposition}
\definecolor{cool_green}{rgb}{0.0, 0.5, 0.0}
\newcommand{\yk}[1]{#1}
\newcommand{\ykwang}[1]{{#1}}
\newcommand{\sisi}[1]{\textcolor{teal}{[Sisi: #1]}}
\begin{document}


\title{Advancing quantum imaging through learning theory}

\author{Yunkai Wang}
\email{ywang10@perimeterinstitute.ca}
\affiliation{Perimeter Institute for Theoretical Physics, Waterloo, Ontario N2L 2Y5, Canada.}
\affiliation{Department of Applied Mathematics, University of Waterloo, Ontario N2L 3G1, Canada.}
\affiliation{Institute for Quantum Computing, University of Waterloo, Ontario N2L 3G1, Canada.}

\author{Changhun Oh}
\affiliation{Department of Physics, Korea Advanced Institute of Science and Technology, Daejeon 34141, Republic of Korea.}

\author{Junyu Liu}
\affiliation{Department of Computer Science, The University of Pittsburgh, Pittsburgh, PA 15260, USA.}
\affiliation{Pritzker School of Molecular Engineering, The University of Chicago, Chicago, IL 60637, USA.}

\author{Liang Jiang}
\email{liangjiang@uchicago.edu}
\affiliation{Pritzker School of Molecular Engineering, The University of Chicago, Chicago, IL 60637, USA.}

\author{Sisi Zhou}
\email{sisi.zhou26@gmail.com}
\affiliation{Perimeter Institute for Theoretical Physics, Waterloo, Ontario N2L 2Y5, Canada.}
\affiliation{Department of Applied Mathematics, University of Waterloo, Ontario N2L 3G1, Canada.}
\affiliation{Institute for Quantum Computing, University of Waterloo, Ontario N2L 3G1, Canada.}
\affiliation{Department of Physics and Astronomy, University of Waterloo, Ontario N2L 3G1, Canada.}

\begin{abstract}


\yk{We study quantum imaging by applying the resolvable expressive capacity (REC) formalism developed for physical neural networks (PNNs). In this paradigm of quantum learning, the imaging system functions as a physical learning device that maps input parameters to measurable features, while complex practical tasks are handled by training only the output weights, enabled by the systematic identification of well-estimated features (eigentasks) and their corresponding sample thresholds. Using this framework, we analyze both direct imaging and superresolution strategies for compact sources, defined as sources with sizes bounded below the Rayleigh limit. In particular, we introduce the orthogonalized SPADE method—a nontrivial generalization of existing superresolution techniques—that achieves superior performance when multiple compact sources are closely spaced. This method relaxes the earlier superresolution studies' strong assumption that the entire source must lie within the Rayleigh limit, marking an important step toward developing more general and practically applicable approaches. \ykwang{Using the example of face recognition, which involve complex structured sources,} we demonstrate the superior performance of our orthogonalized SPADE method and highlight key advantages of the quantum learning approach—its ability to tackle complex imaging tasks and enhance performance by selectively extracting well-estimated features.
 }

\end{abstract}

\maketitle

\textit{Introduction.---} \yk{The quality of an image formed by a single-lens system depends on several factors, including the lens's resolution, the measurement strategy employed in the image plane, and the number of collected samples.} Notably, it has been shown that by optimizing the measurement design, one can resolve two-point sources within the Rayleigh limit, surpassing the Rayleigh's criterion \cite{tsang2016quantum}. 
\yk{The concept of  superresolution has been extended in various directions~\cite{pirandola2018advances}, focusing on imaging a single compact source—an object much smaller than the Rayleigh limit. These extensions include more careful treatment of the measurement and data analysis for two point sources~\cite{sorelli2021optimal,grace2020approaching}, general sources within the Rayleigh limit~\cite{tsang2019quantum,tsang2017subdiffraction,zhou2019modern}, sources beyond the weak-source limit~\cite{wang2021superresolution,nair2016far,lupo2016ultimate}, and point sources in higher dimensions~\cite{napoli2019towards,yu2018quantum,ang2017quantum}. These theories have been experimentally demonstrated for estimating point sources under various scenarios~\cite{yang2016farfield,tang2016fault,paur2016achieving,tham2017beating,parniak2018beating,santamaria2024single,tan2023quantum,rouviere2024ultra} and for estimating source moments~\cite{tan2023quantum}.}

\ykwang{Superresolution often relies on conventional statistical tools, such as the Fisher-information-matrix (FIM) approach for quantifying parameter-estimation precision \cite{pirandola2018advances,sorelli2021optimal,grace2020approaching,tsang2019quantum,tsang2017subdiffraction,zhou2019modern,wang2021superresolution,nair2016far,lupo2016ultimate,napoli2019towards,yu2018quantum,ang2017quantum,yang2016farfield,tang2016fault,paur2016achieving,tham2017beating,parniak2018beating,santamaria2024single,tan2023quantum,rouviere2024ultra,tan2023quantum}, and on the Chernoff bound or likelihood-ratio method for discrimination tasks \cite{zanforlin2022optical,huang2021quantum,zhang2020super,lu2018quantum,grace2022identifying}. However, these conventional statistical tools face at least two fundamental challenges when applied to complex imaging tasks.
(i) Modeling complexity and ambiguity. It is often unclear which parameterization is most appropriate for a practical imaging task. Image moments, Fourier coefficients, and pixel intensities all provide complete representations but lead to different interpretations and performance. In addition, the likelihood-ratio method requires a full statistical model of the object, which is extremely difficult to construct for complex tasks—for example, in face recognition.
(ii) Finite-sample reliability. With limited data, only a  subset of features can be estimated accurately; many others are dominated by noise.  Effective use of the data therefore requires identifying and retaining only well-estimated features for downstream analysis—an especially difficult step in imaging, where the number of parameters is, in principle, infinite. }

The machine learning approach has achieved tremendous success in imaging applications, enabling the handling of complex practical tasks. In this work, we implement imaging tasks with a paradigm of quantum learning—physical neural networks (PNNs)~\cite{boyd1985fading, tanaka2019recent, mujal2021opportunities, wilson2018quantum, garcia-beni2023scalable, havlicek2019supervised, rowlands2021reservoir, lin2018all-optical, pai2023experimentally, dambre2012information, sheldon2022computational, schuld2021effect, wu2021expressivity, hu2023tackling}—following in particular the resolvable expressive capacity (REC) formalism in Ref.~\cite{hu2023tackling} to address the limitations of the FIM approach. The quantum learning approach adopted here encompasses both model training and inference using the trained model.
PNNs encode inputs—such as the positions of incoherent point sources—into an analog physical system whose evolution is fixed and governed by its underlying physics, mapping the inputs to the measured high-dimensional features.
A key advantage is that practical tasks can be performed by training only the output weight—equivalent to applying linear or logistic regression to the measured features—while leaving the internal structure unchanged. The relationship between the specific system dynamics and the class of tasks it can realize is a fundamental question in PNNs. The REC formalism addresses this question by identifying the resolvable features and quantifying their achievable precision in the presence of finite-sample noise \cite{hu2023tackling}. The imaging system is an analog physical system that naturally functions as a learning device in PNNs, with output weights trained to perform imaging tasks. \ykwang{And applying the REC formalism to imaging identifies well-estimated features through eigentasks that are invariant under parameterization, thereby guiding the formulation of complex practical tasks for a given imaging system and prior information, directly addressing the first challenge faced by conventional statistical tools. }
Moreover, it provides a way to estimate the sample threshold required to detect each eigentask. This makes it particularly useful for determining which measured features can be reliably used in downstream analysis by selecting low-noise eigentasks based on a threshold. \ykwang{This key advantage, highlighted in the original paper~\cite{hu2023tackling}, can enhance the performance of diverse machine learning tasks, such as classification, regression, and clustering, and  addresses the second challenge faced by conventional statistical tools.} A more detailed introduction to the PNNs is provided in Sec.~\ref{SI:PNNs} of the Supplemental Material.



\yk{Besides adopting a quantum learning perspective to study the imaging problem, we extend the existing discussions on superresolution \cite{tsang2016quantum,pirandola2018advances,sorelli2021optimal,grace2020approaching,tsang2019quantum,tsang2017subdiffraction,zhou2019modern,wang2021superresolution,nair2016far,lupo2016ultimate,napoli2019towards,yu2018quantum,ang2017quantum,yang2016farfield,tang2016fault,paur2016achieving,tham2017beating,parniak2018beating,zanforlin2022optical,santamaria2024single,tan2023quantum,rouviere2024ultra,tan2023quantum} to the broader challenge of imaging multiple compact sources, each individually constrained within the Rayleigh limit, or equivalently general sources that exceed the Rayleigh limit while exhibiting clustered substructures.  This motivation arises as follows: prior studies typically assume that the whole source lies within the Rayleigh limit, i.e., it is a single compact source. However, in practical imaging, sources are often larger than the Rayleigh limit, while we aim to resolve fine features within the source that are below the Rayleigh limit. To address this, a natural idea is to partition the source into compact regions and apply superresolution techniques locally, effectively treating the problem as imaging multiple compact sources.
But a straightforward application of superresolution to individual compact sources, referred to as the \emph{separate SPADE method}, fails to offer an advantage over direct imaging. This is because nearby sources, when separated by distances not much greater than the width of the point spread function (PSF), introduce additional noise into the measurement. However, our generalized approach to superresolution, called the \emph{orthogonalized SPADE method}, can achieve superresolution even when the separation between compact sources is as small as the PSF width.
This discussion advances superresolution to imaging multiple reasonably nearby compact sources, constituting a nontrivial generalization of the earlier SPADE method and a step toward making the approach more practical.}

\textit{Preliminary.---} \yk{We now provide a more detailed explanation of how imaging tasks are realized using PNNs. The imaging system can be regarded as an input–output map, where the input is a set of system parameters $\boldsymbol{\theta}$ (e.g., the locations of point sources) and the output is a set of measured features, which are the probabilities $P_j(\boldsymbol{\theta}) = \mathrm{Tr}[\rho(\boldsymbol{\theta}) M_j]$, where $M_j$ denotes the $j$th element of the positive operator-valued measure (POVM). A general learning task—such as classification or regression—can be formulated in terms of a target function $f(\boldsymbol{\theta})$ of the input parameters. Realizing an imaging task is then equivalent to approximating this target function using the measured features, which corresponds to training the output weights of the PNNs \cite{boyd1985fading,  tanaka2019recent, mujal2021opportunities,  wilson2018quantum, garcia-beni2023scalable, havlicek2019supervised, rowlands2021reservoir, lin2018all-optical, pai2023experimentally, dambre2012information, sheldon2022computational, schuld2021effect, wu2021expressivity}.

For a given physical system, it is important to determine the class of functions that can be approximated using the measured features. Moreover, due to sampling noise, the quantities $\bar{P}_j(\boldsymbol{\theta})$ can only be estimated approximately. It is also necessary to quantify the precision with which these functions can be approximated. These questions, concerning the capability of a physical system when regarded as a PNN, are addressed by the recently developed REC formalism \cite{dambre2012information,hu2023tackling}. REC formalism analyzes the resolvable function space of a physical system under sample noise via the concept of REC,
\begin{equation}\label{eq:C_f}
    C[f]=1-\min _W \frac{\mathbb{E}_{\boldsymbol{\theta}} \left[\mathbb{E}_{\mathcal{X}}\left[\left(\sum_{j} W_j\bar{P}_j(\boldsymbol{\theta})-f(\boldsymbol{\theta})\right)^2\right]\right]}{\mathbb{E}_{\boldsymbol{\theta}}[ f(\boldsymbol{\theta})^2 ]},
\end{equation}
where we take the expectation value for the output samples $\mathcal{X}$ and the prior distribution $p(\boldsymbol{\theta})$,  $f(\boldsymbol{\theta})$ is approximated by  a linear combination of measured functions $\sum_{j} W_j\bar{P}_j(\boldsymbol{\theta})$, $W_j$ represents the weight coefficient in front of $\bar{P}_j(\boldsymbol{\theta})$ to be optimized to achieve the optimal linear approximation of $f(\boldsymbol{\theta})$, where the index $j$ corresponds to different measurement outcomes. REC $C[f]$, which takes values between 0 and 1, can be understood as the normalized mean-squared accuracy of approximating $f(\boldsymbol{\theta})$, where $C[f]=1$ represents a perfect approximation using the measured features, and deviation from 1 indicates that the target function cannot be well approximated.

\begin{figure*}[!t]
\begin{center}
\includegraphics[width=1.8\columnwidth]{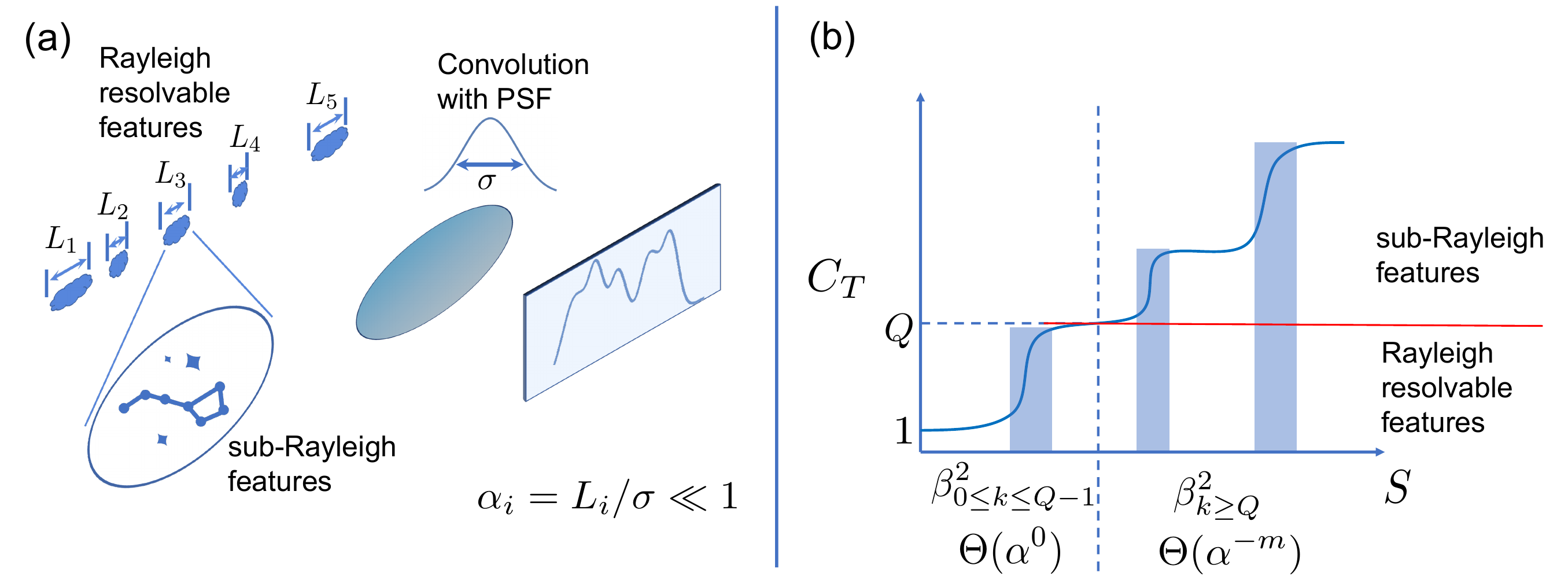}
\caption{ Imaging of both Rayleigh resolvable features and sub-Rayleigh features. (a) Multiple compact sources of size $L_i$ are imaged by a lens with a PSF width of $\sigma$. 
Besides Rayleigh-resolvable features, we would also like to extract information from sub-Rayleigh features, which are associated with the small parameter $\alpha=L/\sigma$. 
(b) The total REC, $C_T$, which shows a stepwise increase, is plotted as a function of the number of samples, $S$. 
The threshold of $S$ for each stepwise increase of $C_T$ in the shaded region is determined by the eigenvalue $\beta_k^2$ associated with the corresponding eigentask in learning. 
Each time $C_T$ increases by 1, there is a corresponding eigenvalue $\beta_k^2$, with the sample number threshold following $S \sim \Theta(\beta_k^2)$.
The intensity of each compact source corresponds to resolvable features, which can be imaged with a constant number of samples, scaling as $S \sim \beta^2_{0\leq k\leq Q-1}=\Theta(\alpha^0)$, independent of the source size. In contrast, sub-Rayleigh features that reveal detailed information about each compact source require a number of samples scaling inversely with the source size,  following $S \sim \beta^2_{ k\geq Q}= \Theta(\alpha^{-m})$, where $\alpha$ is determined by the compact sources, and $m$ depends on the order of  moments. } 
\label{set_up}
\end{center}
\end{figure*}

To quantify the overall performance of an imaging system as PNNs, we are interested in identifying the set of functions $f(\boldsymbol{\theta})$ that can be approximated in this way and in quantifying the effective size of this set, given the physical imaging system, the finite number of samples $S$, and prior knowledge about the input $\boldsymbol{\theta}$. The total REC $C_T := \sum_{k} C[g_k]$~\cite{hu2023tackling} fulfills this purpose, where $\{g_k\}_k$ can be any complete orthonormal basis of functions in the Hilbert space equipped with inner product $\mathbb{E}_{\boldsymbol{\theta}}[g_k(\boldsymbol{\theta})g_\ell(\boldsymbol{\theta})]$. The value of total REC can be obtained from the following eigenvalue problem, 
\allowdisplaybreaks
\begin{gather}\label{eq:D_G_main_text}
D_{k j}=\delta_{kj}\operatorname{Tr}\left\{{M}_k \hat{\rho}^{(1)}\right\}, \, G_{j k}=\operatorname{Tr}\left\{\left({M}_j \otimes {M}_k\right) \hat{\rho}^{(2)}\right\},\\
\hat{\rho}^{(t)}=\mathbb{E}_{\boldsymbol{\theta}} [ {\rho}(\boldsymbol{\theta})^{\otimes t} ],\\
V=D-G,\quad V{r}_{k}=\beta_k^2 G {r}_{k},\label{eq:REC_equations}
\end{gather}
where $\beta_k^2$ and $r_k$ are the $k$th eigenvalue and eigenvector (in increasing order). 
The eigenbasis $r_k$ correspond to a minimal set of  eigentasks $f_k(\boldsymbol{\theta}) :=\sum_m r_{km}P_m(\boldsymbol{\theta})$ that saturate the available REC of the system in the space of all functions of input parameters $\boldsymbol{\theta}$. Then 
\begin{gather}\label{eq:C_T_f}
C_T=\sum_{k}C[f_k]=\sum_{k} \frac{1}{1+\beta_k^2 / S},
\end{gather}
where $S$ is the number of samples. 
Intuitively, $C_T$ quantifies how many independent features of the underlying signal can be captured by the measurement process, and capturing more features improves performance on complex learning tasks by providing greater freedom to approximate the target function $f(\boldsymbol{\theta})$.
Treating the imaging system as a PNN involves first identifying the eigentasks in the REC formalism and then training the output weights—essentially performing logistic or linear regression using the values of eigentasks $f_k(\boldsymbol{\theta})$—while including only the well-estimated eigentasks under a finite sample size and discarding noisy ones to ensure optimal performance. Another important property of this formalism is that reparameterization leaves both the total REC and the eigentasks unchanged, making the strategy independent of any artificial parameterization and determined solely by the physical system and the structure of the problem. 
A more detailed introduction to this formalism, proposed in Ref.~\cite{hu2023tackling}, is provided in Sec.~\ref{SI:REC} of the Supplemental Material.}

\yk{
When we apply this quantum learning approach to the  superresolution setting of imaging multiple compact sources, as shown in Fig.~\ref{set_up}, we observe that the images exhibit two types of features: Rayleigh resolvable features and sub-Rayleigh features. The Rayleigh resolvable features are determined by the intensity of each compact source, which can be measured with a constant number of samples independent of the source size for both direct imaging and superresolution methods. In contrast, the sub-Rayleigh features involve details below the Rayleigh limit, requiring the number of samples to scale inversely with the source size, where carefully designed superresolution methods demonstrate clear advantages. 
Intuitively, the total REC quantifies the number of reliably estimated features—both large and small—and increases as smaller features become resolvable. We first examine the distinctive behavior of the eigenvalues $\beta_k^2$ and the total REC $C_T$ in each case of superresolution, and introduce our orthogonalized SPADE method. We then demonstrate the advantages of both this quantum learning approach and our new orthogonalized SPADE method through a concrete learning task.
}

\textit{Resolving two-point sources.---}As a simplest example, we begin with the imaging 
of two incoherent point sources in one dimension. A single photon received on the image plane can be described as
$\rho(L)=\frac{1}{2}(\ket{\psi_1}\bra{\psi_1}+\ket{\psi_2}\bra{\psi_2})$,
where $\ket{\psi_i}=\int du\psi(u-u_i)\ket{u}$, $\ket{u}=a^\dagger_u\ket{0}$ is the single photon state at position $u$,  and we choose the PSF $\psi(u)={\exp(-u^2/4\sigma^2)}/{(2\pi\sigma^2)^{1/4}}$. Define the separation $L$, which is the input of learning task $\boldsymbol{\theta}=L$ and assume $u_2=L/2$, $u_1=-L/2$.
To enable analytical analysis, we focus on the binary SPADE measurement, which is capable of achieving superresolution in resolving two-point sources as introduced in Ref.~\cite{tsang2016quantum}, \yk{where POVM
$M_0=\ket{\phi_0}\bra{\phi_0}$, $M_1=I-M_0$,} $\ket{\phi_0}=\int du\phi_0(u)\ket{u}$, $\phi_0(u)=\frac{1}{(2\pi\sigma^2)^{1/4}}\exp(-\frac{u^2}{4\sigma^2})$. 
Assuming the prior knowledge about the separation is described as 
$p(L)=\frac{1}{\sqrt{2\pi}\gamma}\exp(-\frac{L^2}{2\gamma^2})
$, we can calculate the total REC, $C_T$, which here represents the total number of linearly independent functions $f(L)$ that can be expressed as a linear combination of the measured probabilities $P_k=\tr(\rho(L)M_k)$. 
 \yk{Assuming $\gamma \ll \sigma$ to exhibit the advantage of superresolution within the Rayleigh limit,} we find that 
\begin{equation}\begin{aligned}\label{eq:two_point_beta}
&\beta_0^2=0,\\
&\beta_1^2=\frac{8}{\alpha^2} + \frac{3}{4} - \frac{1}{64 }\alpha^2+O(\alpha^4), \quad \alpha=\gamma/\sigma,
\end{aligned}
\end{equation}
where $\alpha$ is roughly the ratio between the separation and width of the PSF and $\alpha\ll1$ when the two-point sources are very close to each others. 


We can compare this with the direct imaging case, where we directly project onto each spatial mode ${E_x=\ket{x}\bra{x}}_x$. In this case, we find that $\beta_0^2 = O(1)$, $\beta_1^2 = \Theta(\alpha^{-4})$, and $\beta_2^2 = \Theta(\alpha^{-8})$ after discretizing the spatial coordinate. \yk{More details of the calculations for both the direct imaging and SPADE methods are provided in Sec.~\ref{SI:two_point} of the Supplemental Information.} The much larger eigenvalue $\beta_1^2$ in direct imaging indicates poorer performance compared to binary SPADE, as it requires a larger number of samples $S$ to achieve the same $C_T$.



\yk{\textit{Resolving a single compact source.---}We now consider the problem of imaging a single compact source, defined as a generally distributed source whose spatial extent is bounded well below the Rayleigh limit. This represents the most general setting for applying superresolution in imaging that has been considered in previous works \cite{pirandola2018advances,sorelli2021optimal,grace2020approaching,tsang2019quantum,tsang2017subdiffraction,zhou2019modern,wang2021superresolution,nair2016far,lupo2016ultimate,napoli2019towards,yu2018quantum,ang2017quantum,yang2016farfield,tang2016fault,paur2016achieving,tham2017beating,parniak2018beating,zanforlin2022optical,santamaria2024single,tan2023quantum,rouviere2024ultra,tan2023quantum}.}
 Assume the normalized source intensity $I(u)$ 
is confined within the interval $[-L/2, L/2]$. We can define the moments as $\int du \, I(u) \left(\frac{u-u_0}{L}\right)^{n} = x_{n}$, which completely describe the source and are the input for the learning task $\boldsymbol{\theta}=\vec{x}=[x_0,x_1,x_2,\cdots]$. Within the Rayleigh limit,  $\alpha = L/\sigma \ll 1$, where $\sigma$ represents the width of the PSF, the size of the compact source is significantly smaller than the resolution limit. \yk{For any prior $p(\vec{x})$ and PSF, we find that for the direct imaging}
\begin{equation}\label{beta_direct1}
\beta_0^2=0,\; \beta_1^2=\Theta(\alpha^{-2}), \;\beta_2^2=\Theta(\alpha^{-4}),\; \cdots
 \end{equation}
where $\beta_0^2=0$ is a trivial eigenvalue which corresponds to the fact that $\sum_m P_m=1$.

\begin{figure}[!tb]
\begin{center}
\includegraphics[width=0.8\columnwidth]{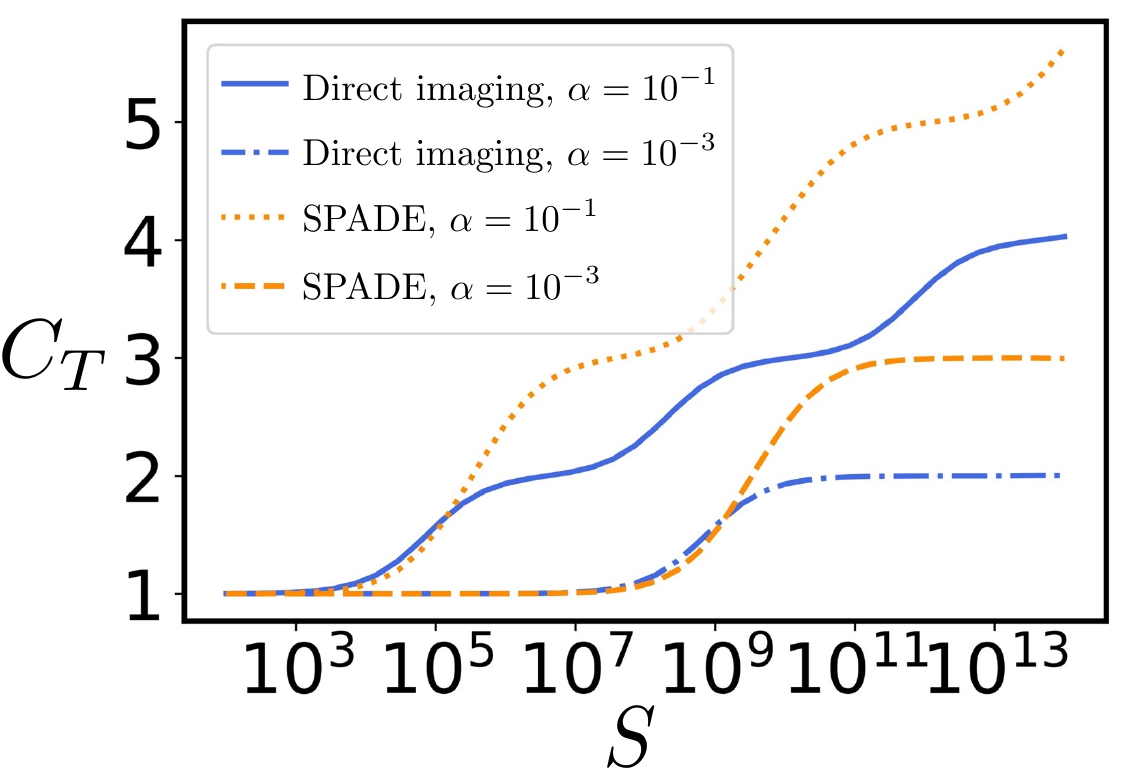}
\caption{ Total REC $C_T$ for direct imaging and the SPADE method as a function of $S$, when imaging one generally distributed compact source with different $\alpha$. 
}
\label{CT_combined}
\end{center}
\end{figure}


\yk{For superresolution, we adopt the measurement construction from Ref. \cite{zhou2019modern}, as reviewed in Sec. \ref{SI:preliminary_superresolution} of the Supplemental Material. For simplicity, we still refer to this method as the SPADE method  throughout the discussion. Intuitively, the SPADE method isolates higher-order moments by constructing probability distributions that start at higher-order terms of $\alpha$, which serves as the signal strength, thereby suppressing lower-order terms of $\alpha$ that act as noise. This yields a much better signal-to-noise ratio for estimating those moments, especially in the weak-signal regime. For any prior $p(\vec{x})$ and PSF, we find that for the SPADE method}
\begin{gather}
\label{beta_superresolution1}
\beta_0^2=0, \;\beta_1^2=\Theta(\alpha^{-2}), \;\beta_2^2=\Theta(\alpha^{-2}),\\
\beta_3^2=\Theta(\alpha^{-4}), \; \beta_4^2=\Theta(\alpha^{-4}),\,\cdots\notag
\end{gather}
Compared to direct imaging, the SPADE method achieves smaller $\beta_k^2$, which significantly reduces the required $S$ to achieve the same $C_T$.


{\color{blue}
}

We demonstrate the significance of $\beta_k^2$ as the threshold for the stepwise increase in the total REC $C_T$, as shown in Fig.~\ref{CT_combined}. \yk{The total REC $C_T = \sum_{k} \frac{1}{1 + \beta_k^2 / S}$ shows that each $\beta_k^2$ sets the sample size at which its eigentask contributes significantly, with contributions near 1 when $S \gg \beta_k^2$ and negligible when $S \ll \beta_k^2$.}
For direct imaging, $C_T$ increases by 1 at each step, while for the SPADE method, $C_T$ increases by 2 per step, as expected. As $\alpha$ decreases, the plateau regions expand. \yk{All numerical calculations in this work assume a Gaussian PSF $\psi(u) = \exp(-u^2/4\sigma^2)/(2\pi\sigma^2)^{1/4}$; however, our method is applicable to any PSF. We choose the prior distribution for the moment vectors $\vec{x}$ by randomly generating a set of images and assuming they occur with equal probability, thereby establishing $p(\vec{x})$ as the empirical distribution of the resulting moment vectors, as detailed in Sec.~\ref{SI:numerics} of the Supplemental Material. For illustration purpose, we plot the data for only one instance of the randomly generated prior distributions (and in all figures below where the prior is picked randomly). }

\yk{Note that the threshold of $S$ is not precisely located at $\alpha^{-2n}$ in each case. This deviation arises from a constant prefactor in $\beta_k^2$. This constant prefactor is independent of $\alpha$ and is approximately $10^2$ in Fig.~\ref{CT_combined}. This is reasonable because, even in the simpler imaging task where sources are extended outside the Rayleigh limit (i.e., $\alpha \gg 1$), hundreds or more samples are still required to effectively image a source. The prefactors depend on the imaging strategy and the prior information. We provide a more detailed discussion of these prefactors in Sec.~\ref{SI:prefactor} of the Supplemental Material. }

\yk{The $n$th eigentask corresponds to $\sum_m r_{nm} P_m$ as a function of $\vec{x}$, where $r_{nm}$ are the components of the eigenvectors obtained by solving Eq.~\ref{eq:REC_equations}, and the corresponding REC is given by $1 / (1 + \beta_n^2 / S)$. For direct imaging of a single compact source, we observe that the eigentasks converge to $x_n$ in the limit of small $\alpha$ to the leading order. For the SPADE method, we show that $r_{nm}$ becomes an triangular matrix as $\alpha \to 0$. The first two leading-order terms of both the $2k$-th and $(2k+1)$-th eigentasks have coefficients $x_{2k}$ and $x_{2k+1}$. 
}

Further details on the derivation of the scaling of $\beta_k^2$ and the eigenvectors $r_n$, based on perturbation theory and confirmed by numerical calculations, are provided in Sec.~\ref{SI:single_compact_source} of the Supplemental Material. 
So far, our discussion has focused on sources where the entire source lies within the Rayleigh limit. In this case, 
the total intensity, trivially equal to $1$, is the Rayleigh resolvable feature that contributes to the total REC when only a constant number of samples (smaller than $1/\alpha^2$) is available. 
\yk{The sub-Rayleigh features contribute to the total REC when $\Omega(\alpha^{-2})$ samples are available.} As we will see later, the Rayleigh resolvable features can become nontrivial when dealing with multiple compact sources.

\textit{New superresolution methods on multiple compact sources.}---We now want to consider the scenario where we have multiple compact sources, each with a size within the Rayleigh limit, but collectively distributed over a region larger than the Rayleigh limit. The quantum state from these multiple compact sources is given by
\begin{equation}\label{eq:rho_multiple}
\rho=\sum_{q=1}^Q\int du du_1du_2I_q(u)\psi(u-u_1)\ket{u_1}\bra{u_2}\psi^*(u-u_2),
\end{equation}
where $Q$ is the number of compact sources, $I_q(u)$ is the intensity distribution for $q$th compact source. We can expand near the centroid $u_q$ of $q$th source and reorganize the state as
\begin{equation}\begin{aligned}\label{main_eq:rho_expansion}
&\rho=\sum_{q=1}^Q \sum_{m,n=0}^\infty x_{m+n,q}\ket{\psi_q^{(m)}}\bra{\psi_q^{(n)}},\\
&\ket{\psi_q^{(m)}}=\int du \psi_q^{(m)}(u)\ket{u}, \\
&\psi^{(n)}_q(v)=\frac{\partial^n \psi(v-u)}{\partial u^n}\bigg|_{u=u_q}\frac{L_q^n}{n!},
\end{aligned}\end{equation}
where $L_q$ is the size (diameter) of $q$th source, $x_{n,q}=\int du I_q(u)\left(\frac{u-u_q}{L_q}\right)^n $ is the $n$th moment for the $q$th source and are the input for the learning task $\boldsymbol{\theta}=[x_{0,1},x_{1,1},x_{2,1},\cdots,x_{0,Q},x_{1,Q},x_{2,Q},\cdots]$.  \yk{We find that for the direct imaging}
\begin{gather}\label{eq:beta_DI_single}
\beta_{0}^2=0,\quad\beta_{1\leq i\leq Q-1}^2=\Theta(1),\\
\beta_{Q\leq i\leq 2Q-1}^2=\Theta(\alpha^{-2}), \,
\beta_{2Q\leq i\leq 3Q-1}^2=\Theta(\alpha^{-4}),\cdots\notag
\end{gather}
where $\alpha_q =\max{L_q}/\sigma$. Here, we assume that $L_q$ does not differ significantly, allowing the different $L_q$ values to be incorporated into the constant coefficients. \yk{It is then clear that there are $Q$ Rayleigh resolvable features corresponding to the intensity of each compact sources, and the number of sub-Rayleigh features also increases by a factor of $Q$.  The scaling is numerically confirmed in Fig.~\ref{beta_new_method}(a) for the first six eigenvalues, and we expect $\beta_k^2=\Theta(\alpha^{-2\lfloor k/Q\rfloor})$ to hold for eigenvalues with higher indices with any prior $p(\vec{x})$. In the numerical calculation, we assume two compact sources with centroids at $-L/4$ and $L/4$, 
with a random prior distribution for the moments (by randomly generating a set of images). For both $L = 2$ and $L = 20$, we observe the same scaling behavior in the direct imaging method. 
}


\begin{figure}[!tb]
\begin{center}
\includegraphics[width=0.5\textwidth]{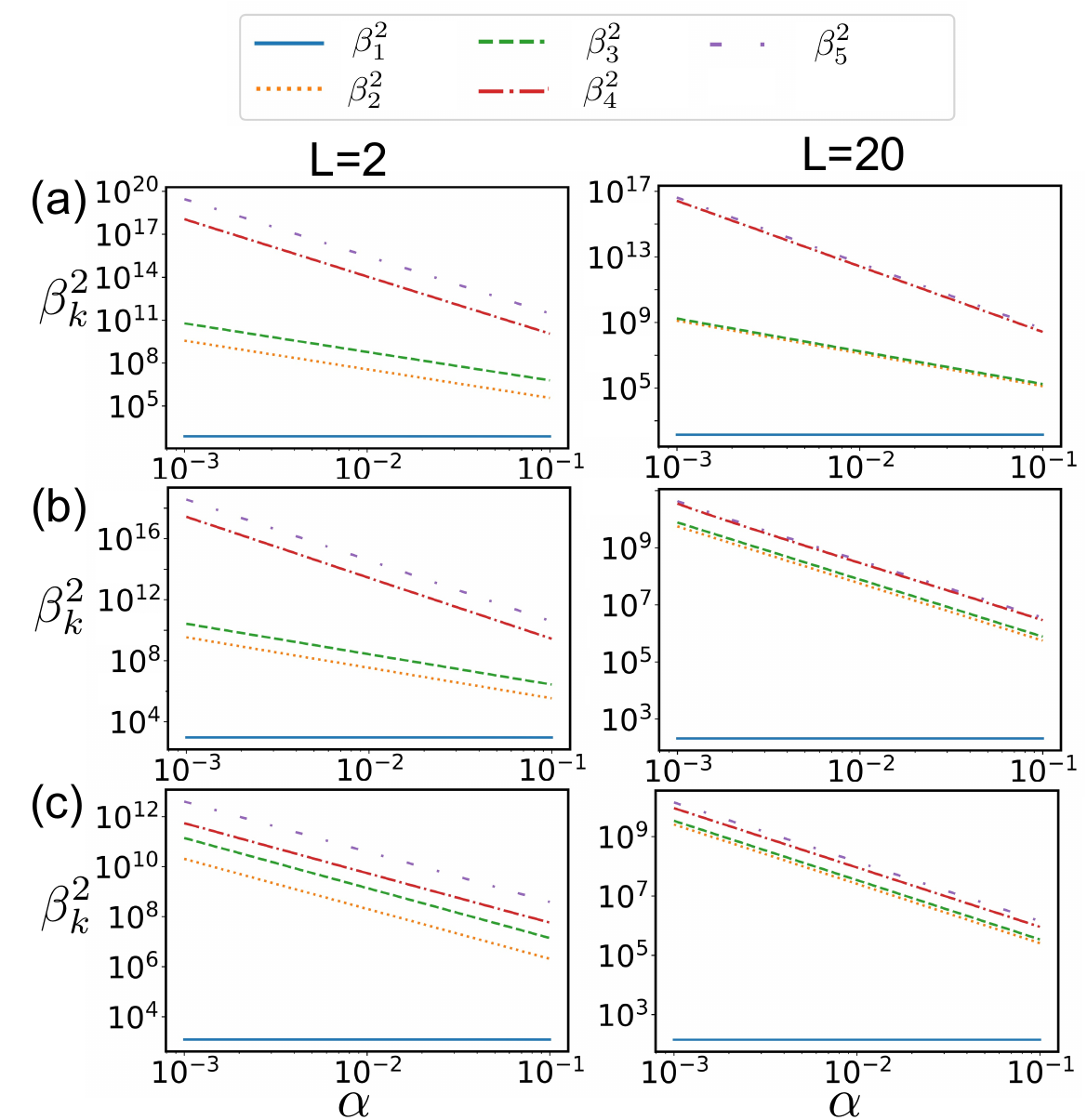}
\caption{Scaling of the $\beta_k^2$ as a function of $\alpha$ for imaging two compact sources with distance $L/2$. We consider three different cases: (a) direct imaging (b) separate SPADE method (c) orthogonalized SPADE method. Width of PSF $\sigma=1$.} 
\label{beta_new_method}
\end{center}
\end{figure}

\begin{figure}[!tb]
\begin{center}
\includegraphics[width=1\columnwidth]{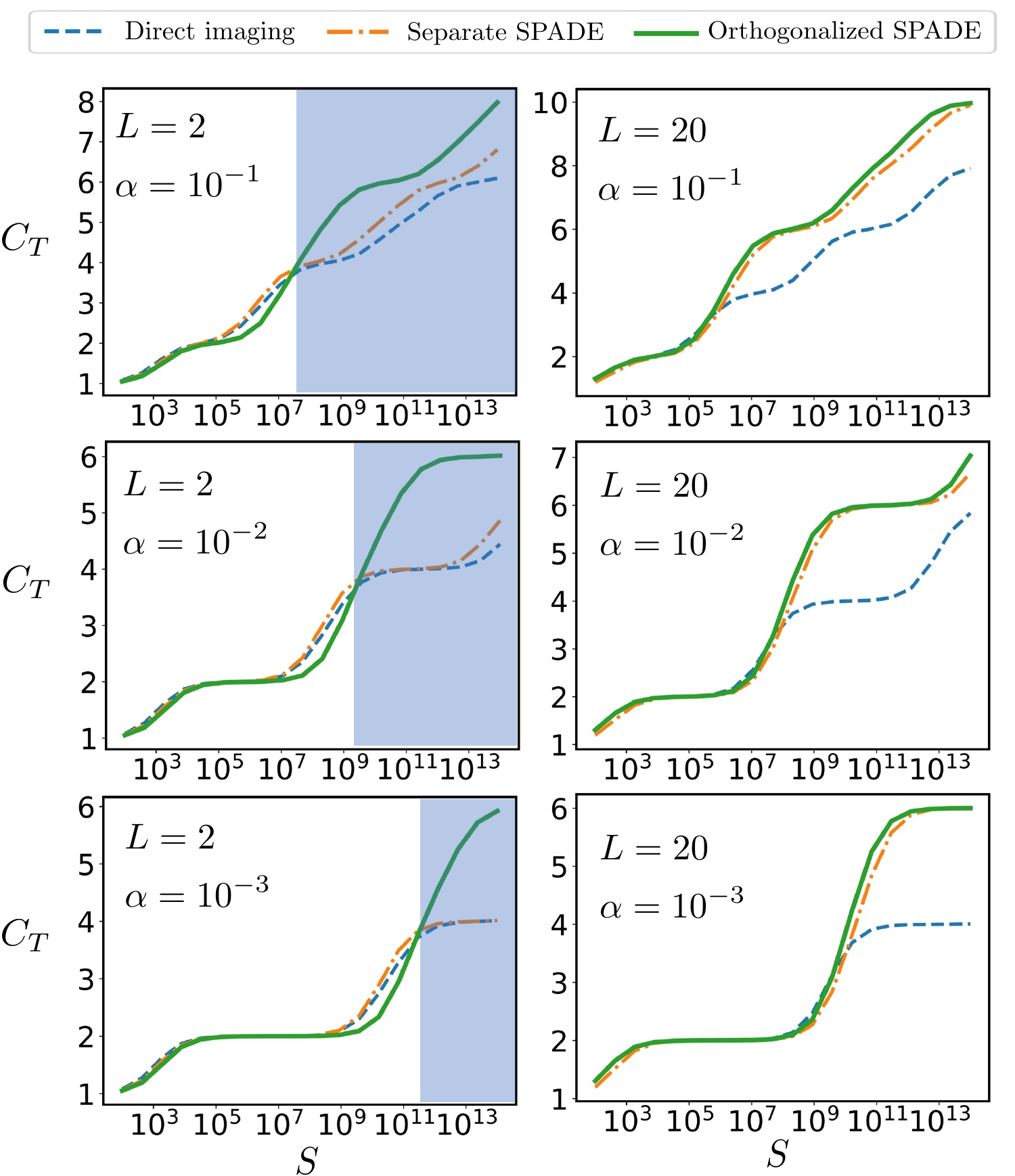}
\caption{Total REC $C_T$ for direct imaging, the separate SPADE method and the orthogonalized SPADE method as a function of $S$ for imaging of two compact sources with $\alpha=10^{-1},10^{-2},10^{-3}$, number of compact source $Q=2$, $\sigma=1$. The distance between the centroid of the two compact sources is $L/2$. When $L=2$, the orthogonalized SPADE method demonstrates a clear advantage over both the separate SPADE method and direct imaging in the shaded region. 
When $L=20$, the performance of SPADE and orthogonalized SPADE is comparable (essentially because they become equivalent when the two compact sources are sufficiently far apart) and both outperform direct imaging. Overall, the orthogonalized SPADE protocol demonstrates excellent performance, achieving a high $C_T$ (compared to the best from direct and SPADE protocols) for various choices of $L = 2$ and $L = 20$, as well as across a wide range of sample sizes $S$.}
\label{C_T_new_method}
\end{center}
\end{figure}

\begin{figure*}[!tb]
\begin{center}
\includegraphics[width=2\columnwidth]{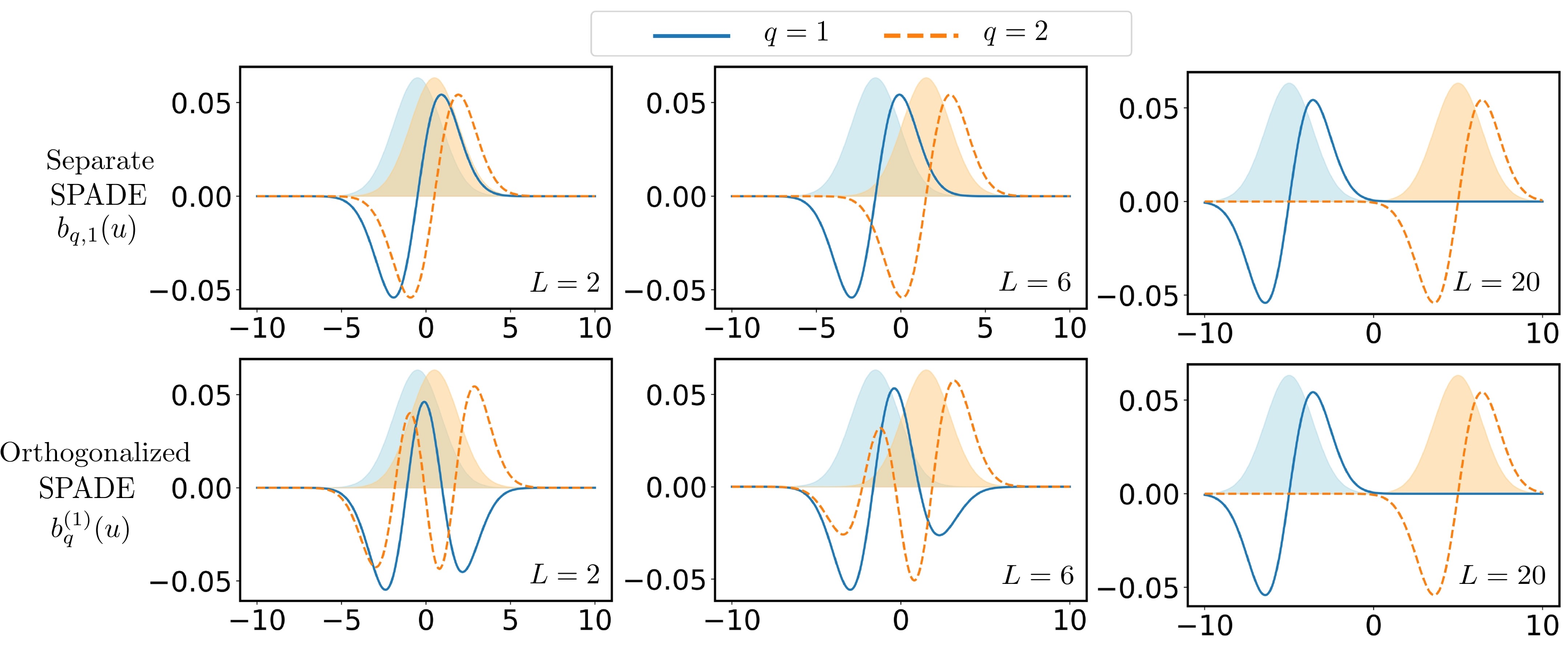}
\caption{We compare the basis states constructed in Eq.~\ref{eq:bjl} for the orthogonalized SPADE measurement, $\ket{b_{q}^{(l=1)}} = \int du \, b_{q}^{(1)}(u)\ket{u}$, with the corresponding basis states for the separate SPADE measurement, $\ket{b_{q,1}} = \int du \, b_{q,1}(u)\ket{u}$. 
\yk{Here, $q = 1, 2$ corresponds to the case of two compact sources ($Q = 2$) with centroids located at $\pm L/4$, so that the separation between the two sources is $L/2$. } We examine cases where $L = 2, 6, 20$. The shaded regions show two Gaussian PSFs of width $\sigma = 1$ located at the centers of two compact sources, that are used to construct the basis states $\ket{b_q^{(l)}}$ and $\ket{b_{q,m}}$.
}
\label{povm}
\end{center}
\end{figure*}

To improve imaging performance, one could apply the SPADE method to each compact source individually—a technique referred to here as the \emph{separate SPADE method}. Unfortunately, it only achieves the same scaling as direct imaging when the sources are not sufficiently spaced apart. This is because the proximity of other compact sources introduces significant noise when estimating higher-order moments. Alternatively, we can construct the orthonormal basis $\ket{b_j^{(l)}}$ using the Gram-Schmidt procedure, such that
\begin{equation}\label{eq:bjl}
\bra{\psi^{(m)}_k}\ket{b^{(l)}_j}\left\{
\begin{array}{ccc}
=0 & \quad   &  m\leq l-1\\
=0 &\quad   & m=l \,\,\& \,\, k\leq j-1\\
\neq 0  & \quad   &  \text{otherwise}
\end{array}\right.
\end{equation}
Choose POVM as the projection onto
\begin{equation}
\ket{\phi_{j\pm}^{(l)}}=\frac{1}{\sqrt{2}}\left(\ket{b_j^{(l)}}\pm\ket{b_j^{(l+1)}}\right),
\end{equation}
where $\quad j=1,2,3,\cdots,Q$, $l=0,1,2,\cdots,\infty$. 
The key intuition behind this construction is to ensure that when estimating the $x_{n,q}$ term in the $\Theta(\alpha^n)$ order, lower-order terms must vanish in the probability distribution, particularly those contributions from nearby compact sources. 
We refer to this new approach as the \emph{orthogonalized SPADE method}, as it projects onto a basis that is an orthogonalization of the separate SPADE method. This construction applies analogously to any PSF beyond Gaussian PSF. Note that for a single compact source, the separate SPADE and orthogonalized SPADE methods are identical, both referred to as the SPADE method. \yk{We find that for the orthogonalized SPADE method} 
\begin{gather}\label{eq:beta_superresolution_single}
\beta_{0}^2=0,\quad\beta_{1\leq i\leq Q-1}^2=\Theta(1),\\
\beta_{Q\leq i\leq 3Q-1}^2=\Theta(\alpha^{-2}),\,
\beta_{3Q\leq i\leq 5Q-1}^2=\Theta(\alpha^{-4}),\cdots\notag
\end{gather}
\yk{The scaling is numerically confirmed in Fig.~\ref{beta_new_method} for the first six eigenvalues, and we expect $\beta_k^2=\Theta(\alpha^{-2\lceil\lfloor k/Q\rfloor/2\rceil})$ to hold for eigenvalues with higher indices and any prior $p(\vec{x})$.  In the numerical calculation, we again consider two compact sources with centroids at $-L/4$ and $L/4$ and a random prior obtained by randomly generating a set of images. We examine the separate SPADE method in Fig.~\ref{beta_new_method}(b) and the orthogonalized SPADE method in Fig.~\ref{beta_new_method}(c). When the sources are well separated ($L \gg \sigma$), both methods achieve the expected scaling, with four $\beta_k^2$ terms scaling as $\Theta(\alpha^{-2})$, compared to two for direct imaging. The doubling of the number of eigenvalues scaling as $\Theta(\alpha^{-2n})$ for each $n$ aligns with expectations for two compact sources.
However, when the sources are closer ($L=2$, $\sigma=1$), the performance of the separate SPADE method is strongly degraded, reducing the scaling to that of direct imaging, with only two $\beta_k^2$ scaling as $\Theta(\alpha^{-2})$. In contrast, our orthogonalized SPADE method retains four eigenvalues $\beta_k^2$ with scaling $\Theta(\alpha^{-2})$.}

In Fig.~\ref{C_T_new_method}, we demonstrate the role of $\beta_k^2$ as the thresholds for stepwise increases in total REC $C_T$ for two compact sources. When $L=2$ (sources close together), direct imaging and separate SPADE show that $C_T$ increases by 2 at each step after the initial two $\Theta(1)$ eigenvalues. In contrast, for orthogonalized SPADE, $C_T$ is increased by 4 after the initial two $\Theta(1)$ eigenvalues, highlighting the advantage of our method for two close compact sources. 
When $L=20$ (with sources well separated), both separate SPADE and orthogonalized SPADE yield a $C_T$ increase of 4 at each step. Note that in certain regions of the sample number when $\alpha = 10^{-2}$, the orthogonalized SPADE method may perform slightly worse than the separate SPADE method. This difference arises from the different constant prefactors in $\beta_k^2$. 
To ensure optimal performance, we can adopt an adaptive approach: for a given sample size, we select either the orthogonalized or separate SPADE method based on the total REC of each method, choosing the one that offers superior performance.


Note that in the imaging of multiple compact sources, the eigentasks identified from the eigenvectors $r_k$ generally do not have the simple form seen in the single compact source case. The structure of the eigentasks is strongly influenced by the practical imaging model and the prior knowledge, such as the positions of the individual sources. We explicitly demonstrate this in Sec. \ref{SI:eigentask_multiple} of the Supplemental Material. This observation suggests that, in practical applications, the quantum learning approach offers nontrivial guidance on which features should be incorporated into the downstream analysis. 

In Fig.~\ref{povm}, we numerically illustrate the shape of the constructed basis  for the separate SPADE method and the orthogonalized SPADE method, \yk{considering different distances between the centroids of the two compact sources. Note that the construction of the basis states   defined in Eq.~\ref{eq:bjl} does not depend on the size of each source $L_q$.} It is evident that when the two compact sources are close to each other, the basis for the separate and orthogonalized SPADE methods differ significantly. However, when the two compact sources are sufficiently far from each other, the basis for the separate and orthogonalized SPADE methods become nearly identical. Given the complicated form of the basis, the spatial light modulator could serve as a practical tool for its implementation, as previously discussed in the context of superresolution \cite{ozer2022reconfigurable}. In Sec. \ref{SI:multiple_compact_source} of the Supplemental Material, we demonstrate that the Hermite-Gaussian mode sorter \cite{lavery2012refractive,beijersbergen1993astigmatic,ionicioiu2016sorting,zhou2017sorting,zhou2018hermite} can be used to implement the orthogonalized SPADE method with some additional steps for a Gaussian PSF, and we also provide more details on the derivation of the scaling of $\beta_k^2$ and the eigenvectors $r_{k}$ based on numerical calculations.


\begin{figure*}[!tb]
\begin{center}
\includegraphics[width=2\columnwidth]{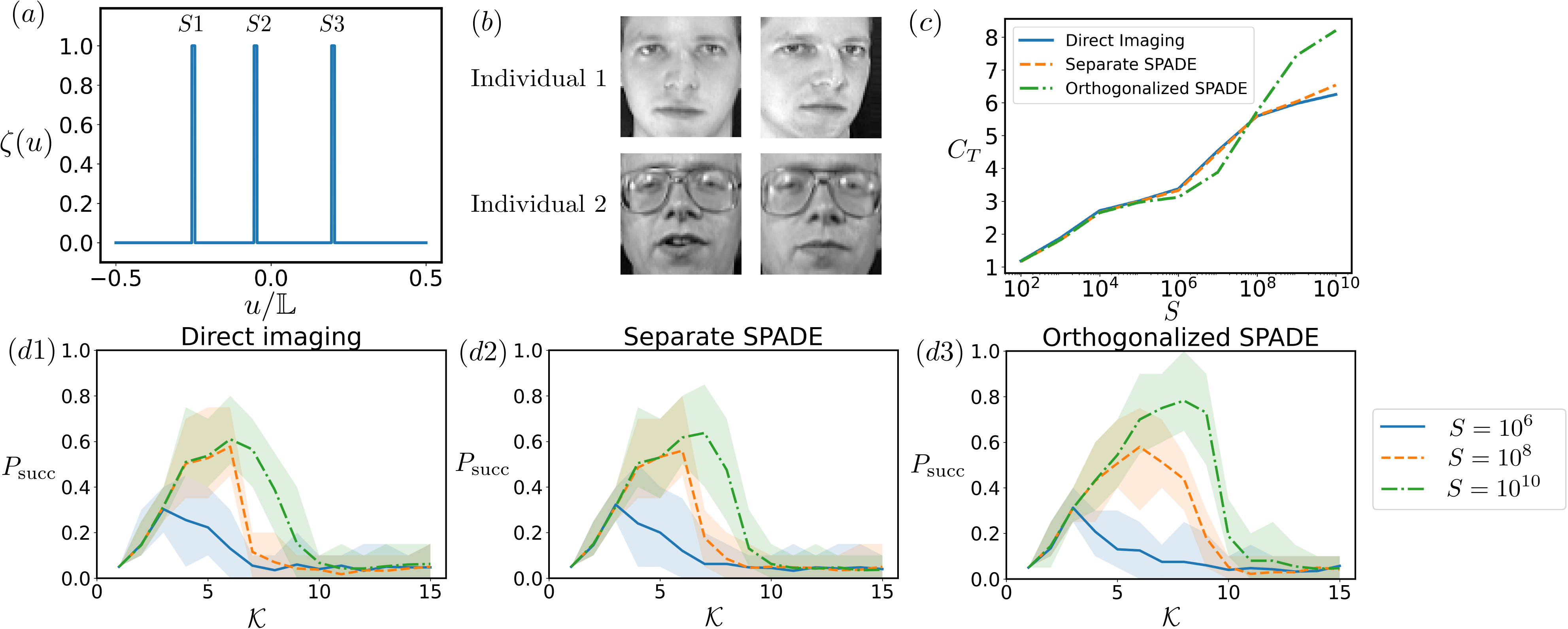}
\caption{\ykwang{Simulation for face recognition. (a) The position function $\zeta(u)$ specifies the positions of three compact sources, labeled $S1$, $S2$, and $S3$. All three compact sources are confined within the interval $[-\mathbb{L}/2, \mathbb{L}/2]$ with $\mathbb{L} = 10$, and each source has a size of $0.1$, which is smaller than the PSF width $\sigma = 1$. (b) Example images of two individuals from the Olivetti Faces dataset.
(c) The total REC $C_T$ as a function of the sample number $S$ for the three approaches.
(d1)–(d3) The success probability $P_{\text{succ}}$ as a function of the highest order $\mathcal{K}$ of the eigentasks included in the training and testing procedures for face recognition, evaluated at different sample sizes $S$. The lines represent the mean success probability, and the shaded region shows the maximum and minimum values across the 20 repetitions. We set $\alpha = 0.1$ in this figure.
} }
\label{fig:face_recognition_main}
\end{center}
\end{figure*}

\ykwang{\textit{Demonstrative example.---}We now present a demonstrative example to illustrate the advantage of the quantum learning approach and our orthogonalized SPADE method in a face-recognition imaging task.
The face images used in our simulation are taken from the Olivetti Faces dataset provided by AT\&T Laboratories Cambridge and distributed through \texttt{scikit-learn}, with examples shown in Fig.~\ref{fig:face_recognition_main}(b). Each individual has $64\times 64$ grayscale images with different facial expressions and small variations in pose. These images are converted into one-dimensional images by rasterization, a standard preprocessing technique in machine learning, and the resulting images are partitioned into $M=3$ segments that serve as compact sources placed over the interval $[-\mathbb{L}/2, \mathbb{L}/2]$ according to the position function $\zeta(u)$ in Fig.~\ref{fig:face_recognition_main}(a).
Our goal is to determine the identity of each given face image. In the following, we treat the imaging system itself as PNNs—the quantum learning devices—and use it to perform this face-recognition task. We emphasize that this approach can address a
wide range of imaging tasks beyond this one, such as
regression and clustering.

We use face images from $N_{\text{person}} = 20$ individuals, which form the training set whose statistics are used to compute the prior information. Using the REC formalism in Eq.~\ref{eq:REC_equations}, we then calculate the eigenvectors $r_k$.
For the $m$th image, the measurement yields a probability distribution $P_m(j)$, where $j$ labels the measurement outcome. The $k$th eigentask is obtained as a linear combination of these distributions with coefficients given by the $k$th eigenvector $r_k$, yielding $\xi_{km} = \sum_j r_{kj} P_m(j)$. This defines the eigentask vector for the $m$th image $\vec{\xi}_m = [\xi_{0m}, \xi_{1m}, \xi_{2m}, \dots, \xi_{\mathcal{K}m}]$, where the eigentasks are truncated at order $\mathcal{K}$.
The truncation is introduced because higher-order eigentasks are noisy and poorly estimated, and including them in downstream analysis, namely training and inference, can degrade the performance.
The shapes of the eigenvectors are generally complex, reflecting the complexity of the imaging task, which is common in practical applications. PNNs perform classification by training only the output weights using logistic regression. 
For each individual, we obtain eigentask vectors $\vec{\xi}_m$, forming datasets that serve as the training inputs for the logistic regression classifier.
Before training, each component of $\vec{\xi}_m$ is normalized by dividing by its mean absolute value across the training set to ensure balanced feature scaling.
During training, we use logistic regression from the standard Python package \texttt{scikit-learn} to fit a  multi-class classifier to the labeled eigentask vectors, modeling the probabilities $P(y \mid \vec{\xi})$ with a multinomial logistic (softmax) model.
During testing, we use the new face images for each individual. 
For each image, we fix the total number of detected photons (sample number) to be $S$. By counting the number of getting each outcome, we obtain an empirical  eigentask vector $\hat{\vec{\xi}}_m = [\hat{\xi}_{0m}, \hat{\xi}_{1m}, \hat{\xi}_{2m}, \dots, \hat{\xi}_{\mathcal{K}m}]$ for the $m$th face image. These empirical eigentask vectors are then used for inference with the trained model.

In the context of imaging multiple compact sources, we show the performance of face recognition  using the three approaches in Fig.~\ref{fig:face_recognition_main}(d1)-(d3).  We observe that the success probability $P_{\text{succ}}$ first increases with $\mathcal{K}$ and then decreases. This behavior is intuitive: increasing $\mathcal{K}$ captures more information and improves classification, but beyond a point, higher-order tasks become noisy due to limited sample size $S$, degrading performance and reducing success probability.
Comparing the success probability $P_{\text{succ}}$ of the three approaches, since the source distance is comparable to the PSF width $\sigma$, separate SPADE and direct imaging perform similarly, whereas our orthogonalized SPADE achieves higher performance, with its peak $P_{\text{succ}}$ exceeding the others at $S=10^{10}$.

This simulation highlights the operational meaning of the total REC $C_T$. First, it estimates the number of eigentasks that can be reliably included in the downstream logistic regression. Since $\beta_k^2$ sets the sample threshold for estimating the $k$th eigentask and $C_T = \sum_k 1/(1+\beta_k^2/S)$, with each term approaching 1 when $S \gg \beta_k^2$, $C_T$ roughly counts the reliably estimated eigentasks. At $S=10^6$ and $S=10^8$, $C_T$ is similar across all approaches, matching the $P_{\text{succ}}$ behavior. At $S=10^{10}$, $C_T$ for orthogonalized SPADE rises to about 8, aligning with its peak $P_{\text{succ}}$, while direct imaging and separate SPADE reach about 6, consistent with their peaks. Second, a larger $C_T$ allows more well-estimated eigentasks to be included, capturing more information and thus improving the success probability in this example; therefore, a larger $C_T$ indicates better imaging performance. 
Further details of this simulation, as well as additional examples beyond face recognition, are provided in Sec.~\ref{SI:demonstrative_example} of the Supplemental Material.

In the broader context, we here present an example that directly addresses the imaging task using PNNs.  PNNs employ analog systems with a fixed internal structure—here, the imaging system—and performs practical tasks by training only the output weights, which in this example corresponds to logistic regression, though alternatives like linear regression can be used depending on the task. In this sense, the PNNs framework provides a systematic approach to modeling and solving complex imaging problems, which is especially valuable given the infinite degrees of freedom inherent to imaging. The REC formalism, developed for the PNNs paradigm, further guides the training step by identifying eigentasks and selecting those that are well-estimated and low-noise. This formalism has also proven highly effective in superresolution problems involving complex source structures, as demonstrated in our simulation.

Many machine-learning tasks, including face recognition, can be viewed as discrimination problems that are in principle solvable by the likelihood-ratio method and whose performance is bounded by the Chernoff bound. While simple discrimination settings have been analyzed using the Chernoff bound in superresolution \cite{zanforlin2022optical,huang2021quantum,zhang2020super,lu2018quantum,grace2022identifying}, the likelihood-ratio method requires an accurate statistical model of the objects being imaged, whereas learning methods can handle far more complex structures. In our face-recognition example involving tens of individuals, it is infeasible to write down a closed-form likelihood function for an exact likelihood-ratio calculation.
Moreover, the Chernoff bound is asymptotic and may appear to be reached even when the success probability is already above $99\%$ in some cases, offering little guidance on the practically relevant regime where we care about reaching moderate performance levels such as $70\%$. It also does not suggest which measurement strategy should be used when the likelihood-ratio method is inapplicable.
By contrast, our learning-based approach yields the total REC as a meaningful figure of merit and, crucially, provides a principled strategy for tackling discrimination tasks  in the finite-sample regime. Simulated examples illustrating the above discussion are provided in Sec.~\ref{SI:Chernoff} of the Supplemental Material.
}

\textit{Imaging general sources beyond the Rayleigh limit.---}
An intriguing question is whether, when a general source cannot be split into multiple compact sources, we can still split the source into $Q$ small pieces and apply our orthogonalized SPADE method, improving the imaging performance. Unfortunately, there is a constraint.  Each source generates a set of states $\{\ket{\psi_q^{(m)}}\}_{m=0,1,2,\dots}$, which are used to construct $\ket{b_j^{(l)}}$ via the Gram-Schmidt procedure. When $Q \gg L/\sigma$, the differences between $\ket{\psi_q^{(m)}}$ for different $q$ can be vanishingly small, leading to the potentially suboptimal performance of orthogonalized SPADE method due to large prefactors of the eigenvalues $\beta_k^2$. 
In Sec.~\ref{SI:new_method_superresolution} of the Supplemental, we demonstrate that when $Q$ exceeds roughly $L/\sigma$, the stepwise increase in $C_T$ is smoothed out, and numerically, we find that both the separate SPADE and the orthogonalized SPADE no longer offer advantages over direct imaging.  
In conclusion,  the superresolution methods from Ref.~\cite{tsang2016quantum,zhou2019modern} can be properly generalized to resolve multiple compact sources but may fail for generally distributed sources. \yk{We also present a discussion of imaging such a general source using direct imaging, showing that $C_T$ is approximately related to the ratio between the source size and the PSF width, as detailed in Sec. \ref{SI:general_source} of the Supplemental Material.}

\textit{Conclusion.---}In this work, we treat the imaging systems in superresolution as PNNs, i.e., quantum learning devices, thereby  providing a systematic framework for addressing practical imaging tasks with complex structures and operating in the finite-sample regime. Based on the REC formalism, the measurable features of the source and their corresponding sample thresholds can be identified with eigentasks, while the total REC serves as a principled metric for selecting relevant features for downstream analysis and quantifying the performance of an imaging method. We further extend the superresolution framework to handle multiple compact sources and propose the orthogonalized SPADE method—a nontrivial generalization that relaxes the strong assumptions of earlier superresolution studies, thereby improving practical applicability. We show that superresolution exhibits a stepwise increase in total REC, with thresholds determined by the ratio of source size to PSF width. \ykwang{The advantages of this quantum learning approach and our orthogonalized SPADE method are demonstrated through total REC calculations and concrete examples, including face recognition.}

It would also be worthwhile to investigate other potentially advantageous imaging protocols, e.g., entangled measurements on multiple copies of photon states, which exhibit advantages over separable measurements in tomography~\cite{o2016efficient, haah2016sample}. 
We may also explore the potential applications of quantum computing schemes \cite{metger2024pseudorandom,ma2024construct,schuster2024random}, where the quantum advantage inherent in these schemes could benefit specific imaging tasks. \yk{Note that the total REC depends on the POVM used in the detection. It may be interesting to explore whether a closed-form expression of the total REC optimized over all possible POVMs can be found, at least in some special cases—for example, when the measurable features require compatible measurements.}

\textit{Acknowledgements.---}We would like to thank Hakan E. Tureci for helpful discussion.  Y.W. and S.Z. acknowledges funding provided by Perimeter Institute for Theoretical Physics, a research institute supported in part by the Government of Canada through the Department of Innovation, Science and Economic Development Canada and by the Province of Ontario through the Ministry of Colleges and Universities. Y.W. also acknowledges funding from the Canada First Research Excellence Fund. L.J. acknowledges support from the the ARO(W911NF-23-1-0077), ARO MURI (W911NF-21-1-0325), AFOSR MURI (FA9550-19-1-0399, FA9550-21-1-0209, FA9550-23-1-0338), DARPA (HR0011-24-9-0359, HR0011-24-9-0361), NSF (OMA-1936118, ERC-1941583, OMA-2137642, OSI-2326767, CCF-2312755), NTT Research, Packard Foundation (2020-71479), and the Marshall and Arlene Bennett Family Research Program. J.L. acknowledges startup funds provided by the Department of Computer Science, School of Computing and Information, the University of Pittsburgh, Pittsburgh Quantum Institute (PQI), funding from IBM Quantum through the Chicago Quantum Exchange, and AFOSR MURI (FA9550-21-1-0209). C.O. acknowledges support from Quantum Technology R\&D Leading Program~(Quantum Computing) (RS-2024-00431768) through the National Research Foundation of Korea~(NRF) funded by the Korean government (Ministry of Science and ICT~(MSIT))

\textit{Author contributions} - Y.W. carried out the analytical calculation and the numerical simulation. L.J. conceived the project. S.Z., J.L., and L.J. supervised the project. All authors contributed to the development of ideas and the writing of the manuscript.

\textit{Competing interests} - The authors declare no competing interests.

\textit{Data availability} - No data sets were
generated or analyzed during the current study.

\textit{Code availability} - All codes used in this paper have been deposited in GitHub at 
https://github.com/ykwang-phys/quantum-learning-imaging

\textit{Correspondence} and requests for materials should be addressed to
Yunkai Wang, Sisi Zhou or Liang Jiang.


\bibliography{arxiv}

\appendix

\onecolumngrid

\newpage

\begin{longtable}{p{0.27\linewidth} p{0.69\linewidth}}
\caption{\yk{Summary of notations.}}\\
\toprule
\textbf{Notation} & \textbf{Definition / Meaning} \\
\midrule
\endfirsthead
\toprule
\textbf{Notation} & \textbf{Definition / Meaning} \\
\midrule
\endhead
\bottomrule
\endfoot

\multicolumn{2}{l}{\textit{General setup and probabilities}}\\
$\boldsymbol{\theta}$ & Input parameters of the imaging task (e.g., separation $L$, or moment vector $\vec{x}$). \\
$p(\boldsymbol{\theta})$ & Prior distribution over parameters $\boldsymbol{\theta}$. \\
$\rho(\boldsymbol{\theta})$ & Quantum state on the image plane encoding $\boldsymbol{\theta}$. \\
$M_j$ & $j$th POVM element (measurement operator). \\
$P_j(\boldsymbol{\theta}) = \mathrm{Tr}[\rho(\boldsymbol{\theta}) M_j]$ & Probability of measurement outcome $j$. \\
$S$ & Number of detected samples/photons. \\

\addlinespace
\multicolumn{2}{l}{\textit{REC (Resolvable Expressive Capacity) formalism}}\\
$f(\boldsymbol{\theta})$ & Target function of $\boldsymbol{\theta}$. \\
$C[f]$ & REC of $f(\boldsymbol{\theta})$, measuring accuracy of approximating $f$ with measured features. \\
$C_T = \sum_k \frac{1}{1+\beta_k^2/S}$ & Total REC, effective number of resolvable features at sample size $S$. \\
$f_k(\boldsymbol{\theta}) = \sum_m r_{km} P_m(\boldsymbol{\theta})$ & $k$th eigentask. \\
$\vec{r}_k=[r_{k0},r_{k1},r_{k2},\cdots]$ & $k$th eigenvector defining $k$th eigentask $f_k$. \\
$\beta_k^2$ & Eigenvalue from REC eigenproblem, sets sample threshold for $f_k$. \\
$\hat{\rho}^{(t)} = \mathbb{E}_\theta[\rho(\boldsymbol{\theta})^{\otimes t}]$ & Prior-averaged $t$-copy state. \\
$D_{kj} = \delta_{kj}\,\mathrm{Tr}[M_k \hat{\rho}^{(1)}]$ & Diagonal expectation matrix. \\
$G_{jk} = \mathrm{Tr}[(M_j\!\otimes\! M_k)\, \hat{\rho}^{(2)}]$ & Covariance matrix of outcomes. \\
$V = D - G$ & Matrix defining REC eigenproblem. \\

\addlinespace
\multicolumn{2}{l}{\textit{Imaging model and PSF}}\\
$|u\rangle = a_u^\dagger |0\rangle$ & Single-photon position eigenstate at $u$. \\
$\psi(u)$ & Point spread function (PSF). Gaussian example: $\psi(u)=\exp(-u^2/4\sigma^2)/(2\pi\sigma^2)^{1/4}$. \\
$\sigma$ & Width of the PSF. \\
$I(u)$ & Normalized source intensity distribution. \\

\addlinespace
\multicolumn{2}{l}{\textit{Two incoherent point sources}}\\
$L$ & Separation between two point sources. \\
$u_1=-L/2,\ u_2=+L/2$ & Coordinates of the sources. \\
$\rho(L)=\tfrac12(|\psi_1\rangle\langle\psi_1|+|\psi_2\rangle\langle\psi_2|)$ & State of two incoherent points. \\
$p(L) = \tfrac{1}{\sqrt{2\pi}\gamma} e^{-L^2/2\gamma^2}$ & Gaussian prior for separation $L$. \\
$\alpha = \gamma/\sigma$ & Small parameter in two-point source analysis. \\
$E_x = |x\rangle\langle x|$ & Projector onto position $x$. \\
$M_0 = |\phi_0\rangle\langle \phi_0|,\ M_1 = I-M_0$ & Binary SPADE POVM. \\
$\phi_0(u) = (2\pi\sigma^2)^{-1/4} e^{-u^2/4\sigma^2}$ & Fundamental Gaussian mode. \\

\addlinespace
\multicolumn{2}{l}{\textit{Single compact source (moments)}}\\
$L$ & Source size. \\
$\alpha = L/\sigma$ & Small parameter for compact source. \\
$x_n = \int du\, I(u)\left(\tfrac{u-u_0}{L}\right)^n$ & $n$th normalized moment. \\
$\vec{x}=[x_0,x_1,\dots]$ & Moment vector. \\
$\rho = \sum_{k=0}^\infty x_k \rho^{(k)}$ & Expansion of state in PSF derivatives. \\
$|\psi^{(m)}\rangle$ & Basis vector defined by the $m$th derivative of the PSF. \\
$|b_{l}\rangle$ & Basis obtained via Gram–Schmidt orthogonalization of the set $\{|\psi^{(m)}\rangle\}_{m=0,1,2,\cdots}$. \\
$|\phi_{n,\pm}\rangle = (|b_n\rangle \pm |b_{n+1}\rangle)/\sqrt{2}$ & Basis vectors of the projective measurement used in SPADE.  \\
$P_{n,\pm}$ & Outcome probabilities. \\
$c_{nm},\, c^{\pm}_{nm}$ & Expansion coefficients in probabilities. \\

\addlinespace
\multicolumn{2}{l}{\textit{Multiple compact sources}}\\
$Q$ & Number of compact sources. \\
$I_q(u)$ & Intensity of the $q$th source. \\
$u_q$ & Centroid of the $q$th source. \\
$L_q$ & Size of the $q$th source. \\
$x_{n,q} = \int du\, I_q(u)\left(\tfrac{u-u_q}{L_q}\right)^n$ & $n$th moment of the $q$th source. \\
$|\psi_q^{(m)}\rangle$ & Basis vector defined by the $m$th derivative of the PSF centered at $u_q$. \\
$\ket{b_{q,m}}$ & Basis obtained via Gram-Schmidt orthogonalization of set $\{\ket{\psi_q^{(m)}}\}_{m=0,1,2,\cdots,\infty}$\\
$\ket{\phi_{q,m,\pm}}=\frac{1}{\sqrt{2}}(\ket{b_{q,m}}\pm\ket{b_{q,m+1}})$ & Basis vectors of the projective measurement used in separate SPADE method.\\
$|b^{(l)}_j\rangle$ & Basis obtained via Gram–Schmidt orthogonalization of the set $\{|\psi^{(m)}_q\rangle\}_{m=0,1,2,\cdots,\,q=1,2,\cdots,Q}$.\\
$\ket{\phi_{j\pm}^{(l)}}=\frac{1}{\sqrt{2}}\left(\ket{b_j^{(l)}}\pm\ket{b_j^{(l+1)}}\right)$ & Basis vectors of the projective measurement used in orthogonalized SPADE method.\\

\addlinespace
\multicolumn{2}{l}{\textit{Demonstrative examples}}\\
$\eta(u)$ & Indicator function used to randomly generate intensity distributions. \\
$\zeta(u)$ & The function that specifies the positions of  compact sources.\\
$d$ & Region size containing all compact sources. \\
$\alpha=0.1,\ \sigma=1$ & Example numerical parameters used in demonstrations. \\
$P_m(j)$ & Probability of the $j$th measurement outcome for the $m$th intensity distribution. \\
$\vec{\xi}_m=[\xi_{0m},\xi_{1m},\dots,\xi_{\mathcal{K}m}]$ & Eigentask vector for the $m$th distribution, truncated at order $\mathcal{K}$. \\
$\vec{r}_k=[r_{k0},r_{k1},r_{k2},\cdots]$ & $k$th eigenvector defining $k$th eigentask. \\
$\xi_{km}=\sum_j r_{kj}P_m(j)$ & $k$th eigentask for the $m$th distribution. \\
$\mathcal{K}$ & Truncation order of the eigentask vector. \\

\addlinespace
\multicolumn{2}{l}{\textit{Auxiliary / supplemental symbols}}\\
$F$ & Fisher information matrix. \\
$\lambda_k$ & Generalized eigenvalue, related to $\beta_k^2$. \\
$W = D^{-1/2} G D^{-1/2}$ & Symmetrized matrix used in perturbation analysis. \\
$C_n$ & Coefficient vectors in $\alpha$-expansions. \\
$d_k$,\quad $g_{ij}$ & Prior-dependent quantities: $d_k=\int d\vec{x}\,p(\vec{x})\,x_k$, $g_{ij}=\int d\vec{x}\,p(\vec{x})\,x_i x_j$. \\
$z_n$ & Orthonormal basis vectors introduced for convenience in perturbative analysis.  \\
$\Pi_m$ & Projectors for block decomposition in perturbation theory. \\
$\vec{y}_k$ & Auxiliary vectors appearing as eigenvectors of an equivalent  form of the REC eigenproblem. \\
 \\

\end{longtable}

\section{Preliminary}
\subsection{Review of quantum-metrology-inspired superresolution}
\label{SI:preliminary_superresolution}
The Fisher information matrix $F$ provides a lower bound on the variance $\Sigma_{\vec{\theta}}$ associated with estimating unknown parameters $\vec{\theta} = [\theta_1, \theta_2, \cdots, \theta_n]$, such that $\Sigma_{\vec{\theta}} \geq F^{-1}$ \cite{braunstein1994statistical,paris2009quantum}. Quantum-metrology-inspired superresolution leverages the insight that a more carefully designed measurement can yield significantly higher Fisher information compared to direct imaging, when the size of the source is much smaller than the Rayleigh limit of the imaging system. Ref.~\cite{tsang2016quantum} highlighted that, when estimating the separation between two point sources, the Fisher information from direct imaging approaches zero as the separation between the sources decreases to zero, which indicates the Rayleigh't limit. However, the Fisher information for a spatial mode demultiplexing (SPADE) measurement onto Hermite-Gaussian modes remains constant, suggesting that it is possible to circumvent Rayleigh's criterion by carefully designing the measurement strategy.


Later on, this discussion is extended to consider a general source within the Rayleigh limit. We now review the discussion in Ref. \cite{zhou2019modern} using our notation in more detail for later use. A general incoherent source can be modeled as
\begin{equation}\label{eq:rho_basic}
\rho=\int du du_1 du_2 I(u)\psi(u-u_1)\ket{u_1}\bra{u_2}\psi(u-u_2),
\end{equation}
where $\ket{u}=a^\dagger_u\ket{0}$ is the single photon state at position $u$, $\psi(u)$ is the PSF whose width is quantified as $\sigma$, such as $\psi(u)={\exp(-u^2/4\sigma^2)}/{(2\pi\sigma^2)^{1/4}}$.  \yk{Assume the normalized source intensity $I(u)$ is confined within the interval $[-L/2, L/2]$}, define $\alpha=L/\sigma\ll 1$, we expand the PSF 
\begin{equation}
\psi(u_1-u)=\sum_{n=0}^\infty \left(\frac{\partial^n \psi(u_1-u)}{\partial u^n}\bigg|_{u=u_0}\frac{L^n}{n!}\right)\left(\frac{u-u_0}{L}\right)^n=\sum_{n=0}^\infty \psi^{(n)}(u_1)\left(\frac{u-u_0}{L}\right)^n.
\end{equation}
We can then get
\begin{equation}\begin{aligned}
\rho&=\sum_{m,n=0}^\infty\left(\int du I(u)\left(\frac{u-u_0}{L}\right)^{m+n}\right)\left(\int du_1\psi^{(m)}(u_1)\ket{u_1}\right)\left(\int du_2\psi^{(m)}(u_2)\bra{u_2}\right)\\
&=\sum_{m,n=0}^\infty x_{m+n}\ket{\psi^{(m)}}\bra{\psi^{(n)}}=\sum_{k=0}^\infty x_k\rho^{(k)},
\end{aligned}\end{equation}
where $\rho^{(k)}=\sum_{m+n=k}\ket{\psi^{(m)}}\bra{\psi^{(n)}}$. We aim to construct a measurement such that, when estimating higher-order moments, the lower-order terms of $\alpha$, which introduce significant noise, can be canceled. This can be achieved by constructing an orthonormal measurement basis ${\ket{b_l}}_l$ through the Gram-Schmidt procedure such that
\begin{equation}
a_{ml}=\bra{\psi^{(m)}}\ket{b_l}\left\{
\begin{array}{cc}
=0    &  m\leq l-1\\
\neq 0     &  m\geq l
\end{array}\right.
\end{equation}
The measurement is then constructed as 
\begin{equation}\label{ref5_POVM}
\left\{\frac{1}{2}\ket{\phi_{i,\pm}}\bra{\phi_{i,\pm}},\frac{1}{2}\ket{b_0}\bra{b_0}\right\}, 
\end{equation}
\begin{equation}
\ket{\phi_{i,\pm}}=(\ket{b_i}\pm\ket{b_{i+1}})/\sqrt{2},\quad i=0,1,2,\cdots
\end{equation}
The probability of projecting onto state $\ket{\phi_{n,\pm}}$ is 
\begin{equation}\begin{aligned}\label{eq:P_superresolution2}
&P_0=\frac{1}{2}\sum_{m=0}^\infty x_m \rho_{0,m,0}=\sum_{m=0}^\infty c_m x_m\alpha^m\\
&P_{n,\pm}=\frac{1}{4}\sum_{m=2n}^\infty  x_m \rho_{n,m,n}+\frac{1}{4}\sum_{m=2n+2}^\infty x_m \rho_{(n+1),m,(n+1)}\pm \frac{1}{4}\sum_{m=2n+1}^\infty  x_m (\rho_{n,m,(n+1)}+\rho_{(n+1),m,n})\\
&\quad\quad=\sum_{m\geq 2n}c_{nm}^{\pm}x_m\alpha^m,\quad n=0,1,2,3,\cdots\\
&\rho_{n,m,l}=\bra{b_n}\rho^{(m)}\ket{b_l}\sim \Theta(\alpha^m),
\end{aligned}\end{equation}
where we can find each coefficients $c_{nm}$. 

\yk{As suggested above, the key feature in the construction of the superresolution method is that we can have the probability distribution $P_{n,\pm}$ starting from the $\Theta(\alpha^{2n})$ order. This is crucial because, when estimating higher-order moments, these moments appear as the coefficients of higher-order terms in the probability distribution. For example, when we want to estimate the moment $x_2$, this information is encoded in $P_0=c_0x_0+c_1x_1\alpha+c_2x_2\alpha^2+\cdots$, $P_{0,\pm}=c_{00}^{\pm}x_0+c_{01}^{\pm}x_1\alpha+c_{02}^{\pm}x_2\alpha^2+\cdots$ as well, but these contain lower-order terms of $\alpha$ such as $c_0x_0$, $c_1x_1\alpha$, etc.
In contrast, the probability distribution $P_{1,\pm}=c_{12}^{\pm}x_2\alpha^2+c_{13}^{\pm}x_3\alpha^3+\cdots$ has $x_2$ in the leading-order term. Intuitively, the lower-order terms of $\alpha$ serve as noise in estimating higher-order moments, while the term containing $x_2$ constitutes the signal. As $\alpha \rightarrow 0$, the signal-to-noise ratio in $P_0$ and $P_{0,\pm}$ becomes very poor due to the dominance of these noise terms. In contrast, in $P_{1,\pm}$, where lower-order terms are absent, the signal-to-noise ratio is significantly improved. }

This improvement is explicitly quantified by the Fisher information.
The Fisher information  contributed by the basis $\ket{\phi_{n,\pm}}$ is
\begin{equation}
F_{x_ix_j}^{(n)}=\left\{
\begin{aligned}
&\Theta(\alpha^{i+j-2n})&, \quad& i,j \geq 2n \\
&0&, \quad & i<2n \quad\text{or} \quad j<2n
\end{aligned}
\right.
\end{equation}
where $i,j,n=0,1,2,\cdots$.  Adding the contribution of all basis $\ket{\phi_{n,\pm}}$, the total Fisher information $F=\sum_n F^{(n)}$ is given by
\begin{equation}\label{Fij_suboptimal}
F_{x_ix_j}=\Theta(\alpha^{i+j-2\lfloor \min\{i,j\}/2 \rfloor}).
\end{equation}
In particular, the diagonal elements $F_{x_ix_i}$ scale as $\Theta(\alpha^{0}), \Theta(\alpha^{2}), \Theta(\alpha^{2}), \Theta(\alpha^{4}), \Theta(\alpha^{4}), \cdots$ for $i = 0, 1, 2, 3, 4, \cdots$. Note that the scaling of the Fisher information in the above equation differs from that in Ref.\cite{zhou2019modern} by a factor of $\alpha^2$. This discrepancy arises because we define the moment as $x_k = \int du  I(u)  (u-u_0)^k/L^k$, while in Ref.\cite{zhou2019modern}, the moment was defined as $M_k=(\int du I(u)(u-u_0)^k)^{1/k}$. Here, $x_k$ is independent of $\alpha$, whereas $M_k \propto \alpha$, leading to the factor of $\alpha^2$ difference in the Fisher information.


For a direct imaging approach, we simply project onto the state $\ket{u}\bra{u}$ to estimate the intensity at each position $u$ on the image plane, with the probability given by
\begin{equation}\begin{aligned}\label{P_direct}
&P_{n}=\sum_{m=0}^\infty c_{nm}x_m\alpha^m,\quad n=0,1,2,3,\cdots
\end{aligned}\end{equation}
Note that in practical detection scenarios, it is not feasible to estimate each spatial mode individually as $\ket{u}\bra{u}$. Instead, we measure discretized pixels, represented as $M_n = \int_{X_n - l/2}^{X_n + l/2} dx \ket{x}\bra{x}$, where $l$ is the size of each pixel in the direct imaging, $X_n$ is the centroid of each pixel. Consequently, we will use a discretized version of direct imaging for both analytical and numerical discussions.
And we can calculate the FI of estimating $x_n$ as
\begin{equation}
F_{x_ix_j}=\Theta(\alpha^{i+j}),
\end{equation}
which is much lower compared to the measurement constructed in Eq. \ref{Fij_suboptimal}. Specifically, the diagonal elements scale as $F_{x_ix_i} = \Theta(\alpha^{0}), \Theta(\alpha^{2}), \Theta(\alpha^{4}), \Theta(\alpha^{6}), \Theta(\alpha^{8}), \cdots$ for $i = 0, 1, 2, 3, 4, \cdots$. This indicates that sensitivity can be improved by carefully designing the measurement strategy. 

The measurement scheme from Ref.~\cite{zhou2019modern} works universally for any PSF. For the Gaussian PSF $\psi(u) = \exp(-u^2/4\sigma^2)/(2\pi\sigma^2)^{1/4}$, the scheme from Ref.~\cite{zhou2019modern} is the projective measurement  $\left\{\frac{1}{4}(\ket{b_i}\pm\ket{b_{i+1}})(\bra{b_i}\pm\bra{b_{i+1}}),\frac{1}{2}\ket{b_0}\bra{b_0}\right\}_{i=0,1,2,\cdots}$ as in Eq.~\ref{ref5_POVM}, where $\ket{b_q} = \int du \, \phi_q(u) \ket{u}$ and $\phi_q(u) = (1 / 2 \pi \sigma^2)^{1/4} (1 / \sqrt{2^q q!}) H_q(\frac{x}{\sqrt{2}\sigma}) \exp(-\frac{u^2}{4 \sigma^2})$, with $H_q$ as the Hermite polynomial. 
Note that the SPADE measurement first used in Ref.~\cite{tsang2016quantum} is the projective measurement  $\{\ket{b_q}\bra{b_q}\}_{q=0,1,2,3,\cdots}$ when PSF is Gaussian, the measurement scheme from Ref.~\cite{zhou2019modern} is the projective measurement 
$\left\{\frac{1}{4}(\ket{b_i}\pm\ket{b_{i+1}})(\bra{b_i}\pm\bra{b_{i+1}}),\frac{1}{2}\ket{b_0}\bra{b_0}\right\}_{i=0,1,2,\cdots}$. 
For simplicity, however, we still refer to the measurement from Ref.~\cite{zhou2019modern} for any PSF as the SPADE method  throughout the discussion. However, all the analytical results presented in this work apply to any PSF and to the measurement approach from Ref.~\cite{zhou2019modern}.

\yk{
We briefly summarize related theoretical works for comparison with our results.
Sorelli et al. \cite{sorelli2021optimal} propose a useful estimator for the superresolution problem. However, their work focuses on a simple scenario—two point sources—and remains within the conventional parameter-estimation framework based on Fisher information, targeting a single well-defined parameter: the separation distance.
In contrast, we address the imaging of multiple general compact sources with more complex structures using a quantum learning approach. Rather than estimating specific parameters like moments, our method identifies which features can be reliably estimated, given prior knowledge and the measurement model in the finite-sample regime. The resulting eigentasks are problem-dependent. Our framework can tackle complex tasks by informing downstream analysis strategies, making it more versatile and practically relevant than the approach of Sorelli et al.
Grace et al. \cite{grace2020approaching} focus on estimating the separation and centroid of two point sources of equal strength, framed as a simple parameter estimation problem. In contrast, our approach adopts a quantum learning framework and addresses more general imaging scenarios, where the source can have structures far more complex than just two point sources.
}


\yk{
\subsection{Physical neural networks}\label{SI:PNNs}

In parametric quantum circuit learning \cite{benedetti2019parameterized,cerezo2021variational,schuld2021effect,schuld2020circuit,farhi2018classification,farhi2014quantum,holmes2022connecting,abbas2021power,schuld2021effect,larose2020robust,sim2019expressibility,cong2019quantum}, a widely studied paradigm in quantum machine learning, one prepares an initial state and applies a sequence of parameterized unitaries $U(\vec{\omega})$, where the gate operations depend on tunable parameters $\vec{\omega}$. The input parameters $\boldsymbol{\theta}$ are first embedded through a data-encoding unitary $U_{\text{enc}}(\boldsymbol{\theta})$, so that the overall state is $|\psi(\boldsymbol{\theta},\vec{\omega})\rangle = U(\vec{\omega}) U_{\text{enc}}(\boldsymbol{\theta})|0\rangle$. Measuring an observable $O$ at the output yields an expectation value $f(\boldsymbol{\theta},\vec{\omega}) = \langle 0| U_{\text{enc}}^\dagger(\boldsymbol{\theta}) U^\dagger(\vec{\omega}) O U(\vec{\omega}) U_{\text{enc}}(\boldsymbol{\theta}) |0\rangle$, which defines a family of functions parameterized jointly by the input $\boldsymbol{\theta}$ and the trainable parameters $\vec{\omega}$. By iteratively updating $\vec{\omega}$ during training, the expectation values can be steered toward desired outcomes, enabling the circuit to approximate target functions of $\boldsymbol{\theta}$ and thereby address a wide range of tasks in optimization, learning, and quantum simulation. The expressive power of the model is governed by the architecture of $U(\vec{\omega})$—including its depth and connectivity—as well as by the choice of data-encoding map $U_{\text{enc}}(\boldsymbol{\theta})$ and measurement observables.

Physical neural networks (PNNs) can be regarded as special instances of parametric quantum circuit learning \cite{boyd1985fading, tanaka2019recent, mujal2021opportunities, wilson2018quantum, garcia-beni2023scalable, havlicek2019supervised, rowlands2021reservoir, lin2018all-optical, pai2023experimentally, dambre2012information, sheldon2022computational, wu2021expressivity, hu2023tackling}. Unlike variational quantum circuits, PNNs contain no tunable gates within the physical evolution itself. Once the physical setup is specified, the dynamics are entirely fixed, so the parameters $\vec{\omega}$  and the corresponding unitary $U(\vec{\omega})$ that would normally be optimized in a parametric quantum circuit are predetermined. The evolution then produces a fixed set of measurement features $\eta_j(\boldsymbol{\theta})=\langle 0| U_{\text{enc}}^\dagger(\boldsymbol{\theta}) U^\dagger(\vec{\omega}) O_j U(\vec{\omega}) U_{\text{enc}}(\boldsymbol{\theta}) |0\rangle$, where $\boldsymbol{\theta}$ denotes the input parameters. Learning takes place only at the output stage, where the features ${\eta_j(\boldsymbol{\theta})}$ are linearly combined with adjustable weights $W_j$ to realize the desired task, $f(\boldsymbol{\theta})=\sum_j W_j \eta_j(\boldsymbol{\theta})$. In this sense, PNNs represent a limiting case of parametric quantum circuits: the mapping from input to features is dictated entirely by the fixed physics of the system, while optimization is restricted to the output weights ${W_j}$.

An important advantage of the PNN approach is that it avoids the need to optimize over a large number of variational parameters within the circuit. In standard parametric quantum circuit learning, training can be hindered by barren plateaus, slow convergence, and the heavy computational cost of navigating a high-dimensional parameter space \cite{cerezo2023does,cerezo2021cost,mcclean2018barren}. PNNs circumvent these challenges entirely: the physical evolution is fixed by the setup, and learning reduces to an  optimization over only the output weights. This makes training efficient and scalable with classical resources, while the underlying physics directly supplies a structured and interpretable feature map. Moreover, owing to the intrinsic complexity of the physical dynamics, PNNs may naturally generate highly expressive feature maps, offering the potential for computational advantages over classical learning models. Taken together, these features make PNNs a robust and data-efficient framework for quantum learning, with particular promise for near-term applications where experimental control is limited and simple, reliable training procedures are essential.

In the imaging context, the role of the PNN is realized by the optical system together with its measurement scheme. Information about the source is encoded into a quantum state that arrives at the imaging plane, evolves through the imaging channel, and is subsequently probed by measurements that generate the feature functions. Imaging tasks such as object recognition, classification, or image reconstruction can then be modeled as learning problems within the PNN framework.

PNNs share the same high-level structure as parametric quantum circuit learning—consisting of state encoding, physical evolution, and measurement—but they differ in a crucial respect: the internal dynamics are fixed once the physical setup is chosen, so the expressive capacity of the feature map is determined entirely by the physics of the system. This raises the central question of which functions, or equivalently which learning tasks, can be supported by a given PNN architecture. The resolvable expressive capacity  (REC) formalism established in Ref. \cite{hu2023tackling} provides a natural framework to address this question, offering a quantitative characterization of the expressive capacity and revealing how the approximability of target functions and the associated sample complexity are governed by the eigentasks of the system.

}

\subsection{Resolvable expressive capacity }\label{SI:REC}

\yk{We will use the tools of REC formalism introduced in Ref.~\cite{hu2023tackling}. } If our goal is approximate a function $f(\boldsymbol{\theta})$, define the capacity
\begin{equation}\label{SI_eq:C_f}
C[f]=1-\min _W\frac{ \mathbb{E}_{\boldsymbol{\theta}} \left[\mathbb{E}_{\mathcal{X}}\left[\left(\sum_{k=0}^{K-1} W_k \bar{\eta}_k(\boldsymbol{\theta})-f(\boldsymbol{\theta})\right)^2\right]\right] }{\int f^2(\boldsymbol{\theta}) p(\boldsymbol{\theta}) d \boldsymbol{\theta}},
\end{equation}
where we take the expectation value for the output sample $\mathcal{X}$ and the prior distribution $p(\boldsymbol{\theta})$, where $f(\boldsymbol{\theta})$ is approximated by a linear combination of measured functions $\bar{\eta}_k(\boldsymbol{\theta})$. 
\yk{$C[f]$ takes values between 0 and 1, representing the relative error of approximating the target function $f(\boldsymbol{\theta})$ as a linear combination of the measured features $\bar{\eta}_j(\boldsymbol{\theta})$. Here, $\bar{\eta}_j(\boldsymbol{\theta})$ denotes the measured features, which approximate the actual features ${\eta}_j(\boldsymbol{\theta})$.

The value of $C[f]$ depends on two aspects. Firstly, whether $f(\boldsymbol{\theta})$ can be well approximated by the features ${\eta}_j(\boldsymbol{\theta})$. Note not all functions of the input parameters $\boldsymbol{\theta}$ can be approximated as a linear combination of the features ${\eta}_j(\boldsymbol{\theta})$. The spanned linear subspace by all the features ${\eta}_j(\boldsymbol{\theta})$ actually quantifies the capability of the imaging system. A larger subspace spanned by ${\eta}_j(\boldsymbol{\theta})$ means we can approximate more functions of the input parameters $\boldsymbol{\theta}$. And which functions lie in this subspace gives the information of what can be obtained using this imaging system.
In the case that $f(\boldsymbol{\theta})$ is not in this subspace spanned by ${\eta}_j(\boldsymbol{\theta})$, the value of $C[f]$ will always significantly deviate from 1 even if we increase $S\rightarrow\infty$. In the worst case, we can at least choose $W_j=0$ for $\forall j$, which gives $C[f]=0$. So, if we cannot well approximate the target function, $C[f]$ takes values not close to 1.
Secondly, in the case that $f(\boldsymbol{\theta})$ is in this subspace spanned by ${\eta}_j(\boldsymbol{\theta})$, this means that with the actual value of features ${\eta}_j(\boldsymbol{\theta})$, we have $\sum_{j} W_j\bar{\eta}_j(\boldsymbol{\theta})-f(\boldsymbol{\theta})\approx 0$.  Furthermore, since $\bar{\eta}_j(\boldsymbol{\theta})$ is the measured feature as an approximation of the actual feature ${\eta}_j(\boldsymbol{\theta})$, when the sample number $S$ is not sufficiently large, the deviation of $\bar{\eta}_j(\boldsymbol{\theta})$ from ${\eta}_j(\boldsymbol{\theta})$ will also cause some error. But when $S\rightarrow\infty$, we have a very good approximation. In this case, $C[f]\approx 1$.
}
The mean squared error loss serves as an effective approximation of the first term in the Taylor expansion for a wide range of  nonlinear post-processing and non-quadratic loss functions \cite{hu2023tackling}.

\yk{The total REC is defined as $C_T=\sum_{l=0}^\infty C[f_l]$, where $\{f_l\}_l$ represents a complete and orthonormal set of basis functions in the space of all functions of input parameters $f(\boldsymbol{\theta})$. Note that this set of basis functions has infinitely many elements, as the space of functions of $\boldsymbol{\theta}$ is infinite-dimensional. However, the calculation of $C_T=\sum_{l=0}^\infty C[f_l]$ can be reduced to just the eigentasks defined based on the following eigenvalue problem. }
\begin{equation}\label{SI_eq:C_T_trace}
C_T=\operatorname{Tr}\left(\left(G+\frac{1}{S} V\right)^{-1}G\right)=\sum_{k=0}^{K-1} \frac{1}{1+\beta_k^2 / S},
\end{equation}
\begin{equation}
\begin{gathered}\label{SI_eq:eigenproblem}
D_{k k}=\operatorname{Tr}\left\{\hat{M}_k \hat{\rho}^{(1)}\right\}, \quad G_{j k}=\operatorname{Tr}\left\{\left(\hat{M}_j \otimes \hat{M}_k\right) \hat{\rho}^{(2)}\right\},\quad \hat{\rho}^{(t)}=\int {\rho}(\boldsymbol{\theta})^{\otimes t} p(\boldsymbol{\theta}) d \boldsymbol{\theta},\\
V=D-G,\quad Vr_{k}=\beta_k^2 G r_{k}.
\end{gathered}
\end{equation}
where $p(\boldsymbol{\theta})$ describes our prior knowledge about the input, $\rho(\boldsymbol{\theta})$ is the quantum state encoded with input data $\boldsymbol{\theta}$, $S$ is the number of samples, $\{M_k\}$ is a set of $K$ POVM elements. The total REC $C_T$ describes the capability of approximating arbitrary functions $f(\boldsymbol{\theta})$ using $\bar{\eta}_k(\boldsymbol{\theta})$. In this work, $\bar{\eta}_k(\boldsymbol{\theta})$ corresponds to the observed frequency of occurrence of the $k$th measurement outcome and $\mathbb{E}_{\mathcal{X}}[\overline{{\eta}}_k(\boldsymbol{\theta})] = \eta_k(\boldsymbol{\theta}) = P_k(\boldsymbol{\theta}) = \tr(\rho(\boldsymbol{\theta}) M_k)$. 
A larger $C_T$, or smaller $\beta_k^2$, indicates better learning performance. Note that this theoretical framework incorporates only $\hat{\rho}^{(t=1,2)}$. This is because $C[f]$ is defined based on the square error, so when deriving the equation, only the expectation and covariance of $\bar{\eta}_k(\boldsymbol{\theta})$ are involved. Since we only need the expectation and covariance of the probability distribution, $\hat{\rho}^{(t=1,2)}$ is sufficient.



The eigenvectors $r_k$ determines the eigentasks 
\begin{equation}\label{SI_eq:f_k}
f_k(\boldsymbol{\theta})=\sum_j r_{kj} \eta_j(\boldsymbol{\theta}),
\end{equation}
where $\eta_j(\boldsymbol{\theta})=\mathbb{E}_{\mathcal{X}}[\overline{{\eta}}_j(\boldsymbol{\theta})]$ is probability of obtaining the $k$th measurement outcome. \yk{The REC of each eigentask is given by $C[f_k(\boldsymbol{\theta})] = 1/(1 + \beta_k^2 / S)$. Intuitively, the total REC $C_T$ accounts for all possible basis functions. In practice, however, most basis functions contribute negligibly to $C_T$, and it suffices to consider only the eigentasks $f_k$ constructed from the vectors $r_k$ that have significant contributions. Importantly, in the finite-sample regime, the relevance of each eigentask is well quantified by its associated eigenvalue $\beta_k^2$. } In the actual experiment, the direct measurement results yield ${\overline{\eta}_j(\boldsymbol{\theta})}$. 
In general, the information obtained from each $\eta_j$ may not be independent. On the other hand, the eigentasks $f_k$ are orthonormal in the sense of 
\begin{equation}
\mathbb{E}_{\boldsymbol{\theta}}[f_k(\boldsymbol{\theta})f_j(\boldsymbol{\theta})]=\delta_{kj}. 
\end{equation} 
Intuitively, these can be interpreted as the ``independent'' tasks with learning precision quantified by $\beta_k^2$. The calculation of eigentasks can give us a sense of what function is hard to be estimated for our measurement device. 

\yk{
We emphasize that a key advantage of the REC formalism is that the identified eigentasks and the total REC are invariant under reparameterization. These eigentasks are determined by the prior information, the physical model of the learning device, and the structure of the learning tasks. They are then used as the actual feature vectors for downstream analysis, including model training and inference with the trained model. This ensures that the results are not artificially influenced by the choice of parameterization. The invariance under reparameterization is formally established in the following proposition.

\begin{proposition}[Reparameterization invariance of total REC and eigentasks]\label{prop:reparameterization}
Let $\rho(\boldsymbol{\theta})$ be a family of states parameterized by $\boldsymbol{\theta}$ with prior $p(\boldsymbol{\theta})$, a fixed POVM $\{M_i\}_{i=0}^{K-1}$, sample size $S$, and measured features $\eta_i(\boldsymbol{\theta})=\operatorname{Tr}[\rho(\boldsymbol{\theta})M_i]$.
In the parameterization of $\boldsymbol{\theta}$, the total REC at sample size $S$ is $C_T(S)$, with eigenvalues $\{\beta_k^2\}$ and eigentasks defined by solving Eq.~\ref{SI_eq:eigenproblem}, yielding $f_k(\boldsymbol{\theta})=\sum_j r_{kj}\,\eta_j(\boldsymbol{\theta})$ as in Eq.~\ref{SI_eq:f_k}.
Let $\boldsymbol{\phi}=h(\boldsymbol{\theta})$ be a bijective differentiable reparameterization, with pushforward prior $p_\Phi(\boldsymbol{\phi})=p(h^{-1}(\boldsymbol{\phi}))\,|\det J_{h^{-1}}(\boldsymbol{\phi})|$ so that $p_\Phi(\boldsymbol{\phi})\,d\boldsymbol{\phi}=p(\boldsymbol{\theta})\,d\boldsymbol{\theta}$, and reparameterized family $\tilde\rho(\boldsymbol{\phi})=\rho(h^{-1}(\boldsymbol{\phi}))$. As in Eq.~\ref{SI_eq:eigenproblem}, form $\tilde\rho^{(t)}$, $\tilde D$, $\tilde G$, $\tilde V$, the generalized spectrum $\{\tilde\beta_k^2\}$, eigentasks $\tilde f_k(\boldsymbol{\phi})=\sum_j \tilde r_{kj}\,\tilde\eta_j(\boldsymbol{\phi})$ with $\tilde\eta_j(\boldsymbol{\phi})=\eta_j(h^{-1}(\boldsymbol{\phi}))=\eta_j(\boldsymbol{\theta})=\operatorname{Tr}[\rho(\boldsymbol{\theta})M_j]$, and total REC $\tilde C_T(S)$. Then:
\begin{enumerate}
\item (Total REC invariance) $\;\tilde C_T(S)=C_T(S)$.
\item (Spectral invariance) $\;\{\tilde\beta_k^2\}=\{\beta_k^2\}$.
\item (Eigentask invariance) $\;\tilde f_k(\boldsymbol{\phi})=f_k\!\big(h^{-1}(\boldsymbol{\phi})\big)=f_k(\boldsymbol{\theta})$.
\end{enumerate}
\end{proposition}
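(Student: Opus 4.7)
The plan is to reduce everything to the observation that the objects defining the REC eigenproblem depend on the prior and parameterization only through the moment operators $\hat{\rho}^{(t)} = \mathbb{E}_{\boldsymbol{\theta}}[\rho(\boldsymbol{\theta})^{\otimes t}]$, which are themselves invariant under a bijective differentiable change of variables via the standard substitution rule. So the proof is almost entirely a chain of equalities; no real analysis is required beyond tracking the pushforward measure.

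First, I would establish invariance of the averaged states. Using $p_\Phi(\boldsymbol{\phi})\,d\boldsymbol{\phi} = p(\boldsymbol{\theta})\,d\boldsymbol{\theta}$ and $\tilde{\rho}(\boldsymbol{\phi}) = \rho(h^{-1}(\boldsymbol{\phi}))$, the change-of-variables formula gives
\begin{equation}
\tilde{\rho}^{(t)} = \int \tilde{\rho}(\boldsymbol{\phi})^{\otimes t}\, p_\Phi(\boldsymbol{\phi})\,d\boldsymbol{\phi} = \int \rho(\boldsymbol{\theta})^{\otimes t}\, p(\boldsymbol{\theta})\,d\boldsymbol{\theta} = \hat{\rho}^{(t)}
\end{equation}
for $t=1,2$. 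Since the POVM $\{M_i\}$ is fixed and the defining formulas $D_{kj} = \delta_{kj}\operatorname{Tr}[M_k\hat{\rho}^{(1)}]$ and $G_{jk} = \operatorname{Tr}[(M_j\otimes M_k)\hat{\rho}^{(2)}]$ involve only $\hat{\rho}^{(1,2)}$ and the $M_i$, it follows immediately that $\tilde{D} = D$, $\tilde{G} = G$, and therefore $\tilde{V} = V$.

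Second, the generalized eigenvalue problem $\tilde{V}\tilde{r}_k = \tilde{\beta}_k^2 \tilde{G}\tilde{r}_k$ coincides with $V r_k = \beta_k^2 G r_k$, so the spectra agree as multisets, $\{\tilde{\beta}_k^2\} = \{\beta_k^2\}$, and one may take $\tilde{r}_k = r_k$ (up to the usual freedom inside degenerate subspaces, which does not affect any physical quantity below). Claim 1 then follows from $C_T(S) = \sum_k 1/(1+\beta_k^2/S)$, and Claim 3 follows from
\begin{equation}
\tilde{f}_k(\boldsymbol{\phi}) = \sum_j \tilde{r}_{kj}\,\tilde{\eta}_j(\boldsymbol{\phi}) = \sum_j r_{kj}\,\eta_j(h^{-1}(\boldsymbol{\phi})) = f_k(h^{-1}(\boldsymbol{\phi})) = f_k(\boldsymbol{\theta}),
\end{equation}
which uses $\tilde{\eta}_j(\boldsymbol{\phi}) = \operatorname{Tr}[\tilde{\rho}(\boldsymbol{\phi}) M_j] = \operatorname{Tr}[\rho(h^{-1}(\boldsymbol{\phi})) M_j] = \eta_j(h^{-1}(\boldsymbol{\phi}))$.

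There is no serious obstacle; the only subtlety worth flagging is the handling of eigenvector gauge freedom when the spectrum is degenerate. In that case $\tilde{r}_k$ need not equal $r_k$ componentwise, but any orthonormal basis chosen within a degenerate eigenspace yields the same eigentask subspace, and in particular the same sum $C_T(S) = \sum_k C[f_k]$ by the invariance of $\sum_k C[f_k]$ over any complete orthonormal basis noted after Eq.~\ref{eq:C_T_f}. Hence Claims 1 and 2 are unaffected, and Claim 3 holds modulo a unitary rotation within each degenerate block. I would include a brief remark to this effect at the end of the proof.
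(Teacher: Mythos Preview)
Your proposal is correct and follows essentially the same route as the paper: show $\tilde\rho^{(t)}=\hat\rho^{(t)}$ by change of variables, deduce $\tilde D=D$, $\tilde G=G$, $\tilde V=V$, and read off all three claims from the identical generalized eigenproblem. Your closing remark on the gauge freedom in degenerate eigenspaces is a welcome addition that the paper's proof does not make explicit.
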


\begin{proof}
By definition and the pushforward prior,
\[
\tilde\rho^{(t)}
=\int_{\Phi}\tilde\rho(\boldsymbol{\phi})^{\otimes t}\,p_\Phi(\boldsymbol{\phi})\,d\boldsymbol{\phi}
=\int_{\Phi}\rho\!\big(h^{-1}(\boldsymbol{\phi})\big)^{\otimes t}\,
p\!\big(h^{-1}(\boldsymbol{\phi})\big)\,\bigl|\det J_{h^{-1}}(\boldsymbol{\phi})\bigr|\,d\boldsymbol{\phi}.
\]
Set $\boldsymbol{\theta}=h^{-1}(\boldsymbol{\phi})$. Then $d\boldsymbol{\phi}=\bigl|\det J_{h}(\boldsymbol{\theta})\bigr|\,d\boldsymbol{\theta}$ and $\bigl|\det J_{h^{-1}}(\boldsymbol{\phi})\bigr|=1/\bigl|\det J_{h}(\boldsymbol{\theta})\bigr|$, so the Jacobians cancel:
\[
\tilde\rho^{(t)}
=\int_{\Theta}\rho(\boldsymbol{\theta})^{\otimes t}\,p(\boldsymbol{\theta})\,d\boldsymbol{\theta}
=\rho^{(t)}\qquad (t=1,2).
\]
This relation implies
\[
\tilde D_{ii}=\operatorname{Tr}[M_i\,\tilde\rho^{(1)}]=\operatorname{Tr}[M_i\,\rho^{(1)}]=D_{ii},\qquad
\tilde G_{ij}=\operatorname{Tr}[(M_i\!\otimes\! M_j)\,\tilde\rho^{(2)}]
=\operatorname{Tr}[(M_i\!\otimes\! M_j)\,\rho^{(2)}]=G_{ij}.
\]
Hence $\tilde V=\tilde D-\tilde G=D-G=V$.
With $\tilde G=G$ and $\tilde V=V$, we obviously have
\[
\tilde C_T(S)=\operatorname{Tr}\!\big((\tilde G+\tilde V/S)^{-1}\tilde G\big)
=\operatorname{Tr}\!\big((G+V/S)^{-1}G\big)=C_T(S).
\]
The generalized eigenproblems
\[
V\,r_k=\beta_k^2\,G\,r_k,\qquad
\tilde V\,\tilde r_k=\tilde\beta_k^2\,\tilde G\,\tilde r_k
\]
are also identical, so $\{\tilde\beta_k^2\}=\{\beta_k^2\}$ and  $\tilde r_k=r_k$. Using $\tilde\eta(\boldsymbol{\phi})=\eta(h^{-1}(\boldsymbol{\phi}))$,
\[
\tilde f_k(\boldsymbol{\phi})=\sum_j r_{kj}\,\tilde\eta_j(\boldsymbol{\phi})
=\sum_j r_{kj}\,\eta_j\!\big(h^{-1}(\boldsymbol{\phi})\big)
=f_k\!\big(h^{-1}(\boldsymbol{\phi})\big)=f_k(\boldsymbol{\theta}).
\]
\end{proof}
}

\section{Two-point sources}\label{SI:two_point}

In this section, we give more details about the calculation of imaging two-point sources using direct imaging and the binary SPADE method proposed in Ref.~\cite{tsang2016quantum}.
For the binary SPADE method, the POVM can be written as $M_0=\ket{\phi_0}\bra{\phi_0}$, $M_1=I-M_0$, $\ket{\phi_0}=\int du\phi_0(u)\ket{u}$, $\phi_0(u)=\frac{1}{(2\pi\xi^2)^{1/4}}\exp(-\frac{u}{4\xi^2})$. We can then calculate the  matrix elements of $D,G$ for the eigenvalue problem $V=D-G$, $Vr_{k}=\beta_k^2 G r_{k}$.
\begin{equation}
D_{00}=\tr(M_0\rho^{(1)})=\frac{4\xi\sigma}{
\sqrt{(\xi^2+\sigma^2)(\gamma^2+4\xi^2+4\sigma^2)}},\quad D_{11}=1-D_{00},
\end{equation}
\begin{equation}
\begin{aligned}
&G_{00}=\tr((E_0\otimes E_0)\rho^{(2)})=\frac{4\sqrt{2}\sigma^2\xi^2}{(\sigma^2+\xi^2)^{3/2}\sqrt{\gamma^2+2\xi^2+2\sigma^2}},\\
&G_{01}=G_{10}=\tr((E_0\otimes E_1)\rho^{(2)})=\tr((E_0\otimes I)\rho^{(2)})-\tr((E_0\otimes E_0)\rho^{(2)})=D_{00}-G_{00},\\
&G_{11}=\tr((E_1\otimes E_1)\rho^{(2)})=\tr(\rho^{(2)})-\tr((I\otimes E_0)\rho^{(2)})-\tr((E_0\otimes I)\rho^{(2)})+\tr((E_0\otimes E_0)\rho^{(2)})=1-2D_{00}+G_{00},
\end{aligned}
\end{equation}
\begin{equation}
D=\left[\begin{matrix}
D_{00} & 0\\
0 & D_{11}
\end{matrix}\right],\quad G=\left[\begin{matrix}
G_{00} & G_{01}\\
G_{10} & G_{11}
\end{matrix}\right],
\end{equation}
which then gives us the eigenvalues
\begin{equation}\begin{aligned}
&\beta_0^2=0,\\
&\beta_1^2=\frac{-\sqrt{2} \xi \sigma (\gamma^2 + 4 (\xi^2 + \sigma^2)) + 
   \sqrt{\xi^2 + \sigma^2} \sqrt{\gamma^2 + 2 (\xi^2 + \sigma^2)}
     \sqrt{(\xi^2 + \sigma^2) (\gamma^2 + 
       4 (\xi^2 + \sigma^2))}}{\xi \sigma (4 \sqrt{2} \xi^2 + 
     \sqrt{2} (\gamma^2 + 4 \sigma^2) - 
     4 \sqrt{\xi^2 + \sigma^2} \sqrt{\gamma^2 + 2 (\xi^2 + \sigma^2)})}.    
\end{aligned}
\end{equation}
If we assume $\gamma\ll \sigma,\xi$ and define $\alpha=\gamma/\sigma$, we have 
\begin{equation}
\beta_1^2=\frac{16 (\xi - \sigma)^2 (\xi^2 + \sigma^2)^2}{\xi \sigma \gamma^4}+ \frac{
 2 (\xi^2 + \sigma^2) (-8 \xi \sigma + 5 (\xi^2 + \sigma^2))}{
 \xi \sigma \gamma^2} + 
 \frac{3}{8} \left(-4 + \frac{3 \xi}{\sigma} + \frac{3 \sigma}{\xi}\right) - \frac{\gamma^2}{
 64 (\xi \sigma)}+O(\alpha^4).
\end{equation}
The first-order term $\Theta(\alpha^{-4})$ can be made to vanish if we choose $\xi = \sigma$. With this choice, the leading order of $\beta_1^2$ becomes proportional to $\alpha^{-2}$.

\begin{figure}[!tb]
\begin{center}
\includegraphics[width=0.5\columnwidth]{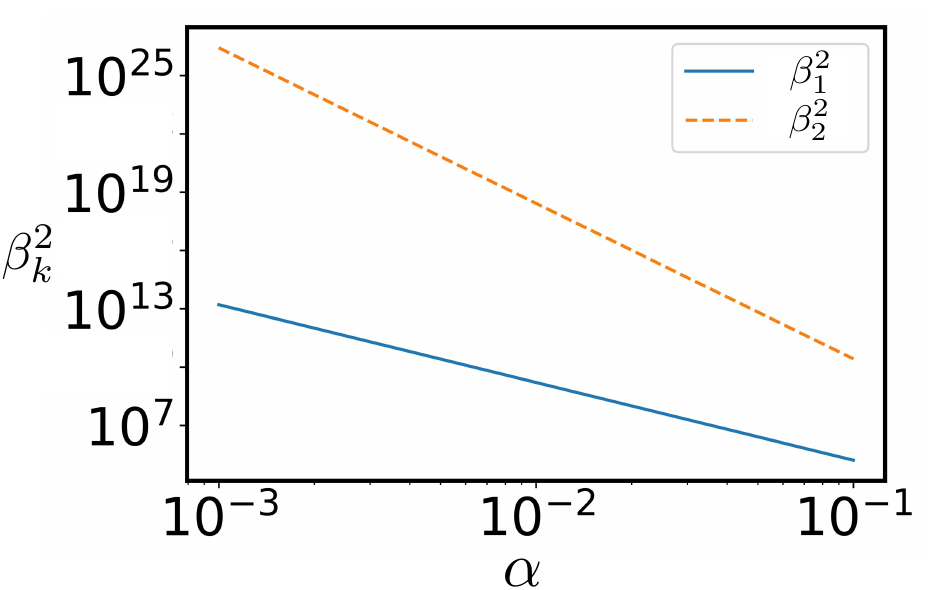}
\caption{Scaling of the eigenvalue $\beta_{1,2,3}^2$ as a function of $\alpha$ for direct imaging of two-point sources.} 
\label{beta_alpha}
\end{center}
\end{figure}

We may want to compare this with the direct imaging case where we directly project onto each spatial mode $\{E_x=\ket{x}\bra{x}\}_x$. We again calculate the matrix elements as
\begin{equation}
D_{xx}=\tr(E_x\rho^{(1)})=\frac{\sqrt{2/\pi}}{\sqrt{\gamma^2+4\sigma^2}}\exp[-2x^2/(\gamma^2+4\sigma^2)],
\end{equation}
\begin{equation}
G_{xy}=\tr((E_x\otimes E_y)\rho^{(2)})=\frac{\exp(-\frac{x^2+y^2}{2\sigma^2})}{2\sqrt{2}\pi\gamma\sigma^2\sqrt{\frac{2}{\gamma^2}+\frac{1}{\sigma^2}}}\left[\exp\left(\frac{\gamma^2(x-y)^2}{4\sigma^2(\gamma^2+2\sigma^2)}\right)+\exp\left(\frac{\gamma^2(x+y)^2}{4\sigma^2(\gamma^2+2\sigma^2)}\right)\right].
\end{equation}
It is possible to numerically find $\beta_k^2$ as shown in Fig.~\ref{beta_alpha}. We again choose $\alpha=\gamma/\sigma$.  In this case, the smallest eigenvalue $\beta_0^2 = \Theta(1)$, followed by $\beta_1^2 = \Theta(\alpha^{-4})$ and $\beta_2^2 = \Theta(\alpha^{-8})$. 
\yk{Values of $\beta_k^2$ exceeding $\sim 10^{15}$ in Fig.~\ref{beta_alpha} are interpolated using a straight-line fit at larger $\alpha$ to mitigate numerical precision loss; this interpolation is applied to all $\beta_k^2$--$\alpha$ plots.}
More details about the numerical calculation can be found in Sec.~\ref{SI:numerics}. 

\yk{
We use the prior distribution $p(L) = \frac{1}{\sqrt{2\pi}\gamma} \exp(-\frac{L^2}{2\gamma^2})$, which reflects the assumption that the separation $L$ is much smaller than the PSF width $\sigma$. This choice is motivated by the fact that the advantage of superresolution over direct imaging becomes significant only in the Rayleigh limit where $L \ll \sigma$. For a fair comparison, both methods are given the same prior. The prior serves as a probability distribution over possible separations $L$, making the overall precision—as quantified by the total REC—a weighted average of local precisions at each value of $L$. This is particularly relevant in subdiffraction imaging, where precision varies strongly with $L$: direct imaging performs poorly for small $L$, while superresolution maintains high precision. Although superresolution still performs well for larger $L$, its relative advantage over direct imaging diminishes.
Conventional parameter estimation frameworks based on Fisher information assess precision locally at the true parameter value, which is typically unknown. In contrast, the quantum learning approach incorporates the prior directly and defines precision globally across the parameter space. This makes it more suitable when the parameter may vary over a range of values.
Finally, the prior alone does not guarantee accurate estimation. The REC formalism, which measures relative error. While the prior implies $L \ll \sigma$, this does not ensure low relative error. In practice, it is easy to recognize that two sources are too close to be resolved by direct imaging, but extracting precise subdiffraction information is much more challenging—and better accomplished with superresolution. Therefore, we believe this prior is a well-motivated choice for the present discussion.

}

\section{A single compact source (generally distributed source within the Rayleigh limit)}\label{SI:single_compact_source}


\subsection{Direct imaging of a single compact source}\label{SI:one_source_direct}

We model a general incoherent source as
\begin{equation}\begin{aligned}\label{rho_k}
\rho&=\sum_{k=0}^\infty x_k\rho^{(k)} ,  
\end{aligned}
\end{equation}
where $\rho^{(k)}=\Theta(\alpha^k)$, which depends on the choice of PSF as detailed in Sec.~\ref{SI:preliminary_superresolution} of the supplemental and moments vector $\vec{x}=[x_0,x_1,\cdots,]^T$ is our input data, i.e. $\boldsymbol{\theta}=\vec{x}$. For direct imaging, we directly project onto each position $\ket{u}$, and the probability distribution for the direct imaging is given in the form of 

\begin{equation}\begin{aligned}
&P_{n}=\sum_{m=0}^\infty c_{nm}x_m\alpha^m,\quad n=0,1,2,3,\cdots
\end{aligned}\end{equation}
Again, we measure $M_n = \int_{X_n - l/2}^{X_n + l/2} du \ket{u}\bra{u}$, a discretized version of direct imaging for both analytical and numerical discussions. 
Assume we have the prior knowledge about the moments $\vec{x}=[x_0,x_1,x_2,\cdots]$ described by $p(\vec{x})$. The prior distribution can be derived from the knowledge that the image is selected from all possible images $I_m(u)$, weighted by their probabilities $p(I_m(u))$. Each image has a corresponding moment vector $\vec{x}(m)$. Therefore, the prior distribution $p(\vec{x})$ for each moment vector $\vec{x}$ can be calculated by weighting it according to the probabilities $p(I_m(u))$ of all the images $I_m(u)$ associated with the moments $\vec{x}$. For any moment vector $\vec{x}$ that does not correspond to a valid image $I_m(u)$, or if all corresponding images $I_m(u)$ have $p(I_m(u)) = 0$, we simply assume that $p(\vec{x}) = 0$.
We define
\begin{equation}\label{SI_eq:d_g}
d_k=\int d\vec{x}p(\vec{x}) x_k,\quad g_{k_1k_2}=\int d\vec{x}p(\vec{x}) x_{k_1}x_{k_2}.
\end{equation}
We can rewrite the eigenvalue problem as
\begin{equation}\label{SI_eq:eigentask}
Gr_k=\lambda_kDr_k,
\end{equation}
where $\lambda_k=(\beta_k^2+1)^{-1}$. Notice that $D_{kk}=\int d\vec{x} p(\vec{x}) P_k(\vec{x})$ and $G_{jk}=\int d\vec{x} p(\vec{x})P_j(\vec{x})P_k(\vec{x})$. We can expand $D$ and $G$ as a series of $\alpha$
\begin{equation}
D=\sum_{n=0}^\infty D^{(n)}=\sum_{n=0}^\infty d_n\alpha^n \mathcal{C}_n, \quad\mathcal{C}_n=\left[\begin{matrix}
c_{0n} & 0 & 0 & \cdots\\
0 & c_{1n} & 0 & \cdots\\
0 & 0 & c_{2n} & \cdots\\
\cdots & \cdots & \cdots & \cdots
\end{matrix}\right],
\end{equation}
\begin{equation}\label{SI_eq:D_G_DI_single}
G=\sum_{n=0}^\infty G^{(n)}=\sum_{n=0}^\infty\alpha^n\sum_{i+j=n}g_{ij}  C_i C_j^T, \quad C_i= [c_{0i},c_{1i},c_{2i},\cdots]^T.
\end{equation}

We can reformulate the eigenvector problem as $W y_k = \lambda_k y_k$, where $W = D^{-1/2} G D^{-1/2}$. In the basis $\{z_n\}_{n=0,1,2,\cdots}$, which is the orthonormal basis constructed from the set of vectors $\{D^{-1/2} C_n\}_{n=0,1,2,\cdots}$ using the Gram-Schmidt procedure, the elements of $W$ are given by $z_m^\dagger W z_n = \Theta(\alpha^{m+n})$. We can intuitively predict the scaling of eigenvalues from its character polynomial 
\begin{equation}
|\lambda I-W|=\sum_{k=0}^M(-1)^k e_k\lambda^{M-k},
\end{equation}
where
\begin{equation}\begin{aligned}
&e_0=1,\quad e_1=\sum\lambda_i,\quad e_2=\sum\lambda_i\lambda_j,\quad e_3=\sum\lambda_i\lambda_j\lambda_k,\quad \cdots,\quad e_M=\prod_{i=1}^M\lambda_i.
\end{aligned}\end{equation}
To determine the scaling of $\lambda_i$, we only need to know the leading-order scaling of $e_i$ with respect to $\alpha$. We observe that selecting the largest $k$ diagonal elements $z_n^\dagger W z_n$ provides the leading-order scaling for each $e_k$. This is because the contribution to each $e_k$ from off-diagonal elements $z_m^\dagger W z_n$ can at most share the same order of scaling with respect to $\alpha$. However, there is a possibility of cancellation among the contributions from the diagonal and off-diagonal elements, making this reasoning somewhat heuristic and not entirely rigorous. Essentially, we are trying to determine the scaling of $e_k$ for a polynomial
\begin{equation}
(\lambda-\Theta(\alpha^0))(\lambda-\Theta(\alpha^2))(\lambda-\Theta(\alpha^4))(\lambda-\Theta(\alpha^6))\cdots
\end{equation}
We can read from it 
\begin{equation}
e_1=\Theta(\alpha^0), \quad e_2=\Theta(\alpha^{0+2}),\quad  e_3=\Theta(\alpha^{0+2+4}),\quad\cdots
\end{equation}
So, the scaling of eigenvalue 
\begin{equation}\label{scaling_direct}
\lambda=\Theta(\alpha^0),\quad \Theta(\alpha^{2}),\quad \Theta(\alpha^{4}),\quad \Theta(\alpha^{6}), \quad \Theta(\alpha^{8}),\quad\cdots
\end{equation}
which implies
\begin{equation}
\beta_0^2=\Theta(1),\, \beta_1^2=\Theta(\alpha^{-2}),\, \beta_2^2=\Theta(\alpha^{-4}),\, \beta_3^2 = \Theta(\alpha^{-6}),\cdots
\end{equation}

We now try to find the scaling of eigenvalues and the eigenbasis by  perturbation theory up to the $O(\alpha^{2})$ order for $\lambda$ below,
\begin{equation}
\left(\sum_{n=0}^{\infty} G^{(n)}\right)\left(\sum_{n=0}^{\infty} r_k^{(n)}\right)=\left(\sum_{n=0}^{\infty} \lambda_k^{(n)}\right)\left(\sum_{n=0}^{\infty} D^{(n)}\right)\left(\sum_{n=0}^{\infty} r_k^{(n)}\right).
\end{equation}
To the zeroth order 
\begin{equation}
G^{(0)} r_k^{(0)}=\lambda_k^{(0)} D^{(0)} r_k^{(0)},
\end{equation}
\begin{equation}
\lambda_0^{(0)}=O(1),\quad \lambda_{j\neq0}^{(0)}=0,\quad r_0^{(0)}\propto[1,1,1,\cdots,1]^T.
\end{equation}
We easily prove that the properly normalized eigenvector $r_j^{(0)}$ satisfies
\begin{equation}
r_i^{(0)\dagger }D^{(0)}r_j^{(0)}=\delta_{ij},
\end{equation}
We now want to do degenerate perturbation theory in the subspace $\text{span}\{r_{k\neq 0}^{(0)}\}$.
\begin{equation}
\sum_{i+j=1}G^{(i)}r^{(j)}_k=\sum_{p+q+m=1}\lambda_k^{(p)}D^{(q)}r^{(m)}_k,
\end{equation}
where $k\geq 1$. Notice that $\lambda^{(0)}_{k\geq 1}=0$, $r.h.s=\lambda_k^{(1)}D^{(0)}r^{(0)}_k$. Notice that $G^{(0)}$ is a zero matrix in the subspace $\text{span}\{r_{j\neq 0}^{(0)}\}$, $l.h.s=G^{(1)}r_k^{(0)}$. Multiplying $r^{(0)\dagger}_l$ in both side
\begin{equation}\label{first_order_perturbation}
r^{(0)\dagger}_lG^{(1)}r_k^{(0)}=\lambda_k^{(1)}r^{(0)\dagger}_lD^{(0)}r^{(0)}_k,
\end{equation}
where $k,l\geq 1$. Furthermore, notice that $r^{(0)\dagger}_{l\geq 1}D^{(0)}r^{(0)}_0\propto r^{(0)\dagger}_{l\geq 1}C_0^T=0$ and $G^{(1)}=\alpha g_{01}C_0C_1^T+\alpha g_{10}C_1C_0^T$. It is clear that the $l.h.s$ of Eq.~\ref{first_order_perturbation} vanishes. We thus have
\begin{equation}
\lambda_{k\geq 1}^{(1)}=0.
\end{equation}
Similarly, we can have the second order perturbation theory.
\begin{equation}
\sum_{i+j=2}G^{(i)}r^{(j)}_k=\sum_{p+q+m=2}\lambda_k^{(p)}D^{(q)}r^{(m)}_k.
\end{equation}
Notice that $\lambda^{(0,1)}_{k\geq 1}=0$, $r.h.s=\lambda_k^{(2)}D^{(0)}r^{(0)}_k$. Notice that $G^{(0,1)}$ is a zero matrix in the subspace $\text{span}\{r_{j\neq 0}^{(0)}\}$, $l.h.s=G^{(2)}r_k^{(0)}$. Multiplying $r^{(0)\dagger}_l$ in both side
\begin{equation}
r^{(0)\dagger}_lG^{(2)}r_k^{(0)}=\lambda_k^{(2)}r^{(0)\dagger}_lD^{(0)}r^{(0)}_k=\lambda_k^{(2)}\delta_{kl},
\end{equation}
\begin{equation}
r^{(0)\dagger}_lG^{(2)}r_k^{(0)}=\alpha^2g_{11}r^{(0)\dagger}_lC_1C_1^Tr_k^{(0)}.
\end{equation}
Since $r^{(0)\dagger}_lC_1C_1^Tr_k^{(0)}$ is rank 1, we have proved that $\lambda_1^{(2)}=\Theta(\alpha^2)$ and $\lambda_{j\geq 2}^{(2)}=0$. Eigenvector $r_1^{(0)}$ is given by orthonogalizing $\mathcal{C}_0^{-1}C_1$ with respect to $r_0^{(0)}\propto\mathcal{C}_0^{-1}C_0$. Note this also means that $z_n$ is the zeroth order eigenbasis for $Wy_k=\lambda_k y_k$ for the first two eigenvectors because $\{z_n\}_{n=0,1,2,\cdots}$  is the orthogonalization of $\{D^{-1/2}\mathcal{C}_n\}_{n=0,1,2,\cdots}$.
For higher order perturbation, we need to include the higher order correction of eigenbasis $r_k^{(j\geq 1)}$, which becomes more involved. Therefore, we choose to stop at this point.

\subsection{Image a single compact source using SPADE method}\label{SI:one_source_superresolution}
As reviewed in  Sec.~\ref{SI:preliminary_superresolution}, the probability for SPADE method takes the form of
\begin{equation}\label{eq:P_one_source_superresolution}
P_0=\sum_{m=0}^\infty c_m x_m\alpha^m,\quad P_{n\pm}=\sum_{m\geq 2n}c_{nm}^{\pm}x_m\alpha^m,\quad n=0,1,2,3,\cdots
\end{equation}
where the low order terms of $O(\alpha^{m\leq 2n-1})$ in $P_{n\pm}$ is cancelled out  by construction. 
We can define

\begin{equation}\begin{aligned}\label{SI_eq:eigentask_SPADE}
&\yk{Gr_{k}=\lambda_kDr_{k},\quad \lambda_k=(1+\beta_k^2)^{-1},}\\
&D=\sum_{n=0}^\infty D^{(n)}=\sum_{n=0}^\infty d_n\alpha^n \mathcal{C}_n,\quad \mathcal{C}_n=\Pi_{\lfloor n/2\rfloor}\,\mathcal{C}_n^0\,\Pi_{\lfloor n/2\rfloor},\\
&\mathcal{C}_n^0=\left[\begin{matrix}
c_{n} & 0 & 0 & 0 & 0 &\cdots\\
0 & c_{0n}^+ & 0 & 0 & 0 &\cdots\\
0 & 0 & c_{0n}^- & 0 & 0 &\cdots\\
0 & 0 & 0 & c_{1n}^+ & 0 &\cdots\\
0 & 0 & 0 & 0 & c_{1n}^- & \cdots\\
\cdots & \cdots & \cdots & \cdots & \cdots & \cdots
\end{matrix}\right],
\end{aligned}
\end{equation}
\begin{equation}\begin{aligned}\label{SI_eq:D_G_superresolution_single}
&G=\sum_{n=0}^\infty G^{(n)}=\sum_{n=0}^\infty\alpha^n\sum_{i+j=n}g_{ij}  C_i C_j^T, \quad C_i=\Pi_{\lfloor i/2\rfloor}C^0_i,\\
&C^0_i= [c_i,c_{0i}^+,c_{0i}^-,c_{1i}^+,c_{1i}^-,\cdots]^T, \quad \Pi_m=\left[\begin{matrix}
I_{2m+3} & 0\\
0 & 0
\end{matrix}\right].
\end{aligned}\end{equation}

\yk{
In the following, we analyze the spectrum of the SPADE method using perturbation theory. To begin, we provide some intuition about the structure of the derived eigenbasis. Observing the form of $G$, we see that truncating its expansion at order $O(\alpha^n)$ makes the higher-order terms negligible, resulting in a rank-deficient matrix. For example, retaining terms only up to order $\alpha^2$ gives
\begin{equation}
G = g_{00} C_0 C_0^T + \alpha (g_{01} C_0 C_1^T + g_{10} C_1 C_0^T) + \alpha^2 (g_{20} C_2 C_0^T + g_{11} C_1 C_1^T + g_{02} C_0 C_2^T) + o(\alpha^2).
\end{equation}
Such a truncated $G$ captures the leading eigenvalues, as the neglected higher-order terms contribute only weakly to the spectrum. A key difference from the direct imaging case is that the structure $C_i = \Pi_{\lfloor i/2 \rfloor} C^0_i$ forces certain components to vanish, effectively restricting the support of each $C_i$. When we examine the eigenvectors within the subspace spanned by $C_0, C_1, C_2$, we find that the matrix $r_{nm}$ takes an upper-triangular form. The higher-order eigentasks are dominated by higher-order moments in the case of imaging a single compact source. This intuition is rigorously validated by the perturbative analysis presented below.

}

We again reformulate the eigenvalue problem  as
$Wy_k=\lambda_k y_k$, $W=D^{-1/2}GD^{-1/2}$. Note the lower order $D$ is not full rank. Rewriting the problem as standard eigenvalue problem avoids this difficulty. To further simplify the calculation, we will divide the matrix into blocks, the $(m,n)$th block is given by
\begin{equation}\begin{aligned}
&W_{mn}=[D^{-1/2}GD^{-1/2}]_{mn}\\
=&P_m\left(\frac{\alpha^{-m}}{\sqrt{d_{2m}}}\mathcal{C}_{2m}^{-1/2}\left[I-\frac{\alpha}{2}\frac{d_{2m+1}}{d_{2m}}\mathcal{C}_{2m+1}\mathcal{C}_{2m}^{-1}+\frac{3}{8}\alpha^2\frac{d_{2m+1}^2}{d_{2m}^2}\mathcal{C}_{2m+1}^2\mathcal{C}_{2m}^{-2}-\frac{1}{2}\alpha^2\frac{d_{2m+2}}{d_{2m}}\mathcal{C}_{2m+2}\mathcal{C}_{2m}^{-1}+\cdots\right]\right)\\
&\times\sum_{i\geq 2m,j\geq 2n}\alpha^{i+j}g_{ij}C_iC_j^T\\
&\times  \left(\frac{\alpha^{-n}}{\sqrt{d_{2n}}}\mathcal{C}_{2n}^{-1/2}\left[I-\frac{\alpha}{2}\frac{d_{2n+1}}{d_{2n}}\mathcal{C}_{2n+1}\mathcal{C}_{2n}^{-1}+\frac{3}{8}\alpha^2\frac{d_{2n+1}^2}{d_{2n}^2}\mathcal{C}_{2n+1}^2\mathcal{C}_{2n}^{-2}-\frac{1}{2}\alpha^2\frac{d_{2n+2}}{d_{2n}}\mathcal{C}_{2n+2}\mathcal{C}_{2n}^{-1}+\cdots\right]\right)  P_n,
\end{aligned}
\end{equation}
where $P_n=\Pi_n-\Pi_{n-1}$, $\Pi_{m<0}=0$. Note the leading order of block $W_{mn}$ has lowest order $\Theta(\alpha^{m+n})$.

We will define the following set of orthonormal basis for convenience
\begin{equation}\label{y0}
\begin{aligned}
&z_{2k}\propto P_k\mathcal{C}_{2k}^{1/2}t,\quad z_{2k+1}\bot z_{2k},\quad z_{2k+1}\in P_k,\\
\end{aligned}\end{equation}
where $k=0,1,2,3,\cdots$, and $z_{2k+1}$ is the vector constructed by orthogonalize $P_k\mathcal{C}_{2k}^{-1/2}\mathcal{C}_{2k+1}t$ over $z_{2k}$. And it should be clear that in this basis, for  the ($m$,$n$)th block $W_{mn}=P_mWP_n$, the scaling of  over $\alpha$ is given by
\begin{equation}
\left[
\begin{matrix}
\Theta(\alpha^{m+n}) & \Theta(\alpha^{m+n+1})\\
\Theta(\alpha^{m+n+1}) & \Theta(\alpha^{m+n+2})
\end{matrix}\right],
\end{equation}
To find the scaling of eigenvalues, we again try to observe the structure of the character polynomial $|\lambda I-W|=\sum_{k=0}^M(-1)^k e_k\lambda^{M-k}$, its coefficients now has the form of 
\begin{equation}\begin{aligned}
&e_1=\Theta(\alpha^0), \quad e_2=\Theta(\alpha^{0+2}),\quad  e_3=\Theta(\alpha^{0+2+2}),\\
&e_4=\Theta(\alpha^{0+2+2+4}),\quad  e_5=\Theta(\alpha^{0+2+2+4+4}),\cdots    
\end{aligned}
\end{equation}
So, the scaling of eigenvalue is
\begin{equation}
\beta_0^2=\Theta(1), \beta_1^2=\Theta(\alpha^{-2}), \beta_2^2=\Theta(\alpha^{-2}), \beta_3^2=\Theta(\alpha^{-4}), \beta_4^2=\Theta(\alpha^{-4}),\cdots
\end{equation}

We now calculate the eigenvalues and eigenbasis based on perturbation theory. We can write $W$ as a series of $\alpha$,
\begin{equation}
W=\sum_{n=0}^\infty W^{(n)}\alpha^n.    
\end{equation}
The zeroth order term 
\begin{equation}
W^{(0)}\propto P_0\mathcal{C}_0^{1/2}tt^T\mathcal{C}_0^{1/2}P_0,
\end{equation}
where $t=[1,1,1,\cdots]^T$. Using perturbation theory,
\begin{equation}
W^{(0)}y_k^{(0)}=\lambda_k^{(0)}y_k^{(0)},
\end{equation}
it is clear that 
\begin{equation}
\lambda_0=\Theta(1),\quad y_0^{(0)}\propto P_0\mathcal{C}_0^{1/2}t\propto z_0.
\end{equation}
Since we have $r_i=D^{-1/2}y_i$, we find that 
\begin{equation}
r_0^{(0)}\propto P_0t.
\end{equation}
Obviously, we have
\begin{equation}
\lambda_{i\neq 0}^{(0)}=0.
\end{equation}
Furthermore, since $W^{(0)}$ is rank one, we  will find the proper first order approximation to eigenvector $y_{i\geq 1}^{(0)}$ as a linear combination of $z_{i\geq 1}$ in the higher order perturbation discussion.

With this choice of $z_k$, we can continue the calculation for $\Theta(\alpha)$ terms.
\begin{equation}
W^{(1)}y_k^{(0)}+W^{(0)}y_k^{(1)}=\lambda_k^{(1)}y_k^{(0)},\quad k\geq 1.
\end{equation}
If we apply $(z_{j\geq 1})^T$ from the left and insert $y_k^{(0)}=\sum_{m=1}^\infty a_{km}^{(0)}z_m$, we get
\begin{equation}
\sum_{m=1}^\infty\left((z_j)^TW^{(1)}z_m-\delta_{mj}\lambda_k^{(1)}\right)a_{km}^{(0)}=0,
\end{equation}
where we have used the fact that $W^{(0)}z_{i\geq 1}=0$, which is because of our construction in Eq.~\ref{y0} and $W^{(0)}\sim P_0\mathcal{C}_0^{1/2}tt^T\mathcal{C}_0^{1/2}P_0$. Since $W^{(1)}$ has terms proportional to
\begin{equation}
P_0\mathcal{C}_0^{1/2}tt^T\mathcal{C}_1\mathcal{C}_0^{-1/2}P_0, \quad P_0\mathcal{C}_0^{-1/2}\mathcal{C}_1tt^T\mathcal{C}_0^{1/2}P_0,\quad P_0\mathcal{C}_0^{1/2}tt^T\mathcal{C}_2\mathcal{C}_0^{-1/2}P_1, \quad P_1\mathcal{C}_0^{-1/2}\mathcal{C}_2tt^T\mathcal{C}_0^{1/2}P_0,
\end{equation}
and $z_{k\geq 1}$ to orthogonal to $y_0^{(0)}\propto P_0\mathcal{C}_0^{1/2}t$, we have $(z_j)^TW^{(1)}z_m=0$ for $j,m\geq 1$. 
\begin{equation}
\lambda^{(1)}_k=0\quad \text{for} \quad k\geq 1.
\end{equation}


We then consider the $\Theta(\alpha^2)$ terms
\begin{equation}
W^{(2)}y_k^{(0)}+W^{(1)}y_k^{(1)}+W^{(0)}y_k^{(2)}=\lambda_k^{(2)}y_k^{(0)},\quad k\geq 1.
\end{equation}
Apply $z_{j\geq 1}$ from left,
\begin{equation}
\sum_{m=1}^\infty\left((z_j)^TW^{(2)}z_m-\delta_{mj}\lambda_k^{(2)}\right)a_{km}^{(0)}=0,
\end{equation}
where we have used the fact that $z_{i\geq 1}W^{(0)}z_{j\geq 1}=z_{i\geq 1}W^{(1)}z_{j\geq 1}=0$ as argued above. Since $W^{(2)}$ has terms proportional to
\begin{equation}
P_0\mathcal{C}_1\mathcal{C}_0^{-1/2}tt^T\mathcal{C}_1\mathcal{C}_0^{-1/2}P_0, \quad P_1\mathcal{C}_2^{1/2}tt^T\mathcal{C}_2^{1/2}P_1, \quad P_0\mathcal{C}_1\mathcal{C}_0^{-1/2}tt^T\mathcal{C}_2^{1/2}P_1, \quad P_0\mathcal{C}_1\mathcal{C}_0^{-1/2}tt^T\mathcal{C}_2^{1/2}P_1,
\end{equation}
This shows 
\begin{equation}
\lambda_{1,2}^{(2)}\neq 0,\quad \lambda_{k\geq 3}^{(2)}= 0,
\end{equation}
and also provides the leading order approximation to $y_{1,2}^{(0)}$ as a linear combination of $z_{1,2}$. By the orthogonality condition, we must have $y_{i\geq 3}^{(0)} \perp z_{j\leq 2}$.


We can then consider the $\Theta(\alpha^3)$ terms 
\begin{equation}
W^{(3)}y_k^{(0)}+W^{(2)}y_k^{(1)}+W^{(1)}y_k^{(2)}+W^{(0)}y_k^{(3)}=\lambda_k^{(3)}y_k^{(0)},\quad k\geq 3.
\end{equation}
We apply $(z_{j\geq 3})^T$ from the left
\begin{equation}
\sum_{m=3}^\infty a_{km}^{(0)}(z_j)^TW^{(3)}z_m+\sum_{m=1}^\infty a_{km}^{(1)}(z_j)^TW^{(2)}z_m=\lambda_k^{(3)}a_{kj}^{(0)}.
\end{equation}
Analyzing each terms of $W^{(2)}$ and $W^{(3)}$, we conclude the left hand side vanishes and  this shows
\begin{equation}
\lambda_{k\geq 3}^{(3)}=0.
\end{equation}
We demonstrate that $\beta_k^2$ acts as the threshold for the stepwise increase in the total REC, $C_T$, as illustrated in Fig.~\ref{step_C_T} for different values of $\alpha$ separately. For direct imaging, $C_T$ increases by 1 at each step, while for the SPADE method, $C_T$ increases by 2 per step, as expected. Moreover, as $\alpha$ decreases, the plateau regions become more pronounced.

\begin{figure}[!tb]
\begin{center}
\includegraphics[width=1\columnwidth]{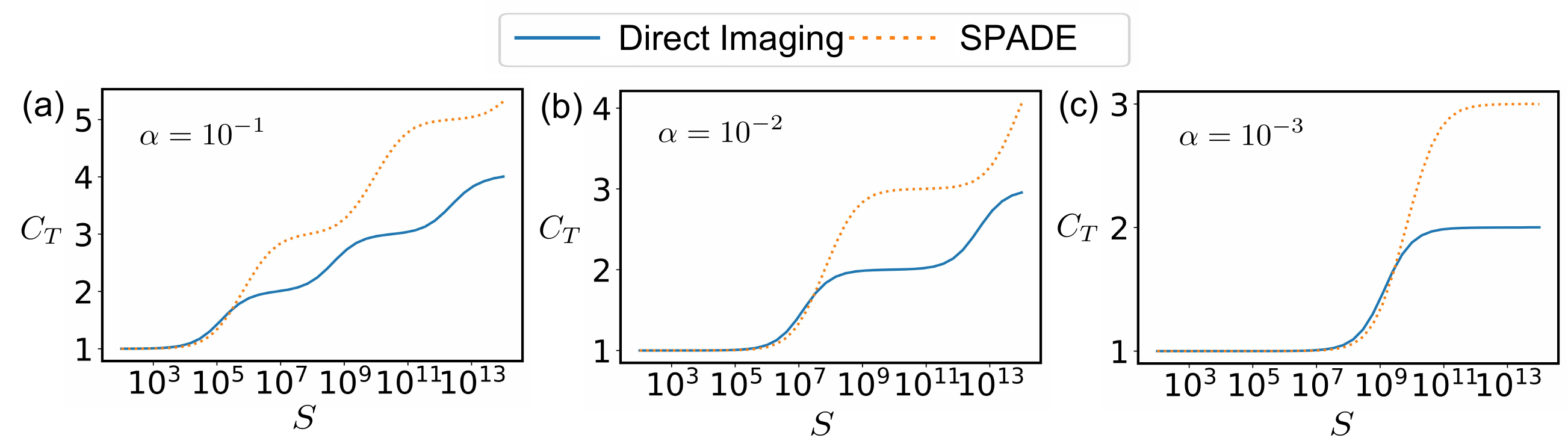}
\caption{Total REC $C_T$ for direct imaging and the SPADE method as a function of $S$, when imaging one generally distributed compact source with different $\alpha$.} 
\label{step_C_T}
\end{center}
\end{figure}

\yk{
\subsection{Numerical results for eigenvalues $\beta_k^2$}\label{SI:Numerical results for eigenvalues}

\begin{figure}[!tb]
\begin{center}
\includegraphics[width=0.4\columnwidth]{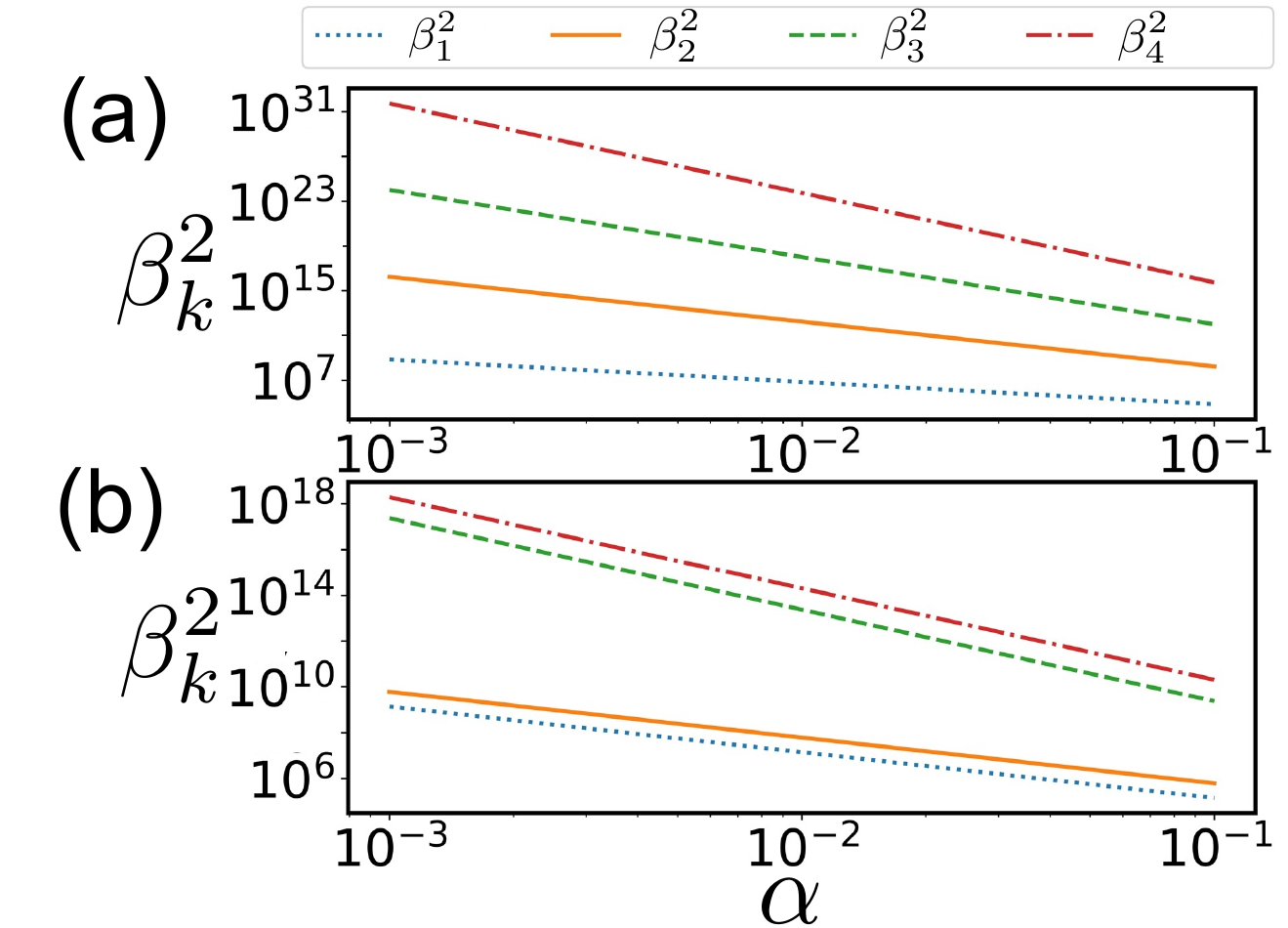}
\caption{ Scaling of the  $\beta_k^2$ as a function of $\alpha$ for (a) direct imaging and (b) SPADE method, when imaging a generally distributed compact source. $\beta_0^2 = 0$ is omitted from the plot. Note that $\beta_{3,4}^2$ in Fig.~\ref{CT_combined}(a) is interpolated using values calculated for cases where $\alpha > 10^{-1}$ that are not explicitly shown in the plot.
}
\label{CT_combined_betak}
\end{center}
\end{figure}

In the previous subsections, we provided a rigorous analytical proof based on perturbation theory that applies to any prior distribution $p(\vec{x})$ and PSF. For direct imaging, we showed that $\beta_0^2 = 0$, $\beta_1^2 = \Theta(\alpha^{-2})$, and $\beta_{k \geq 2}^2 = \omega(\alpha^{-2})$. More generally, we expect the scaling $\beta_k^2 = \Theta(\alpha^{-2k})$ to hold for higher-index eigenvalues under any prior. For the superresolution case, we demonstrated that $\beta_0^2 = 0$, $\beta_1^2 = \Theta(\alpha^{-2})$, $\beta_2^2 = \Theta(\alpha^{-2})$, and $\beta_{k \geq 3}^2 = \omega(\alpha^{-2})$, and we conjecture that the scaling $\beta_k^2 = \Theta(\alpha^{-2\lceil k/2 \rceil})$ persists for larger $k$.

The scalings of the first five eigenvalues for the direct imaging case are confirmed by numerical calculations, as shown in Fig. \ref{CT_combined_betak} (a). In the superresolution case, Fig. \ref{CT_combined_betak} (b) shows that the first five eigenvalues, computed numerically for a random prior, form pairs $\{\beta^2_{2k-1}, \beta^2_{2k}\}_{k \geq 1}$ with identical scaling starting from $k=1$, also consistent with our analytical proof. Values of $\beta_k^2$ exceeding $\sim 10^{15}$ in Fig.~\ref{CT_combined_betak} are interpolated using a straight-line fit at larger $\alpha$ to mitigate numerical precision loss; this interpolation is applied to all $\beta_k^2$--$\alpha$ plots.

All numerical calculations in this work assume a Gaussian PSF, $\psi(u) = \exp(-u^2/4\sigma^2)/(2\pi\sigma^2)^{1/4}$, although our method is applicable to arbitrary PSFs. The prior distribution $p(\vec{x})$ is constructed by randomly generating a set of images and assigning them equal probability, so that $p(\vec{x})$ corresponds to the empirical distribution of the resulting moment vectors. Further details are provided in Sec.~\ref{SI:numerics} of the Supplemental Material. For illustration, we show results for a single instance of the randomly generated prior (as is the case for all figures below that involve a randomly chosen prior).

Note that the above analysis is limited to the first few eigenvalues $\beta_k^2$, although we expect the same scaling behavior to extend to higher indices $k$. To avoid potential caveats and ensure clarity, Table~\ref{Table:beta_k} summarizes the scaling behavior that has been explicitly calculated and confirmed through both analytical and numerical methods.

\begin{table}[h]
\centering
\begin{tabular}{|c|c|c|c|c|}
\hline
\textbf{Imaging Type} & \textbf{Eigenvalue Index $k$} & \textbf{Approach} & \textbf{PSF Used} & \textbf{Prior Used} \\
\hline
\multicolumn{5}{|c|}{\textbf{Analytical Results}} \\
\hline
\multirow{4}{*}{Direct Imaging} 
& 0 & $ 0$ & Arbitrary & Arbitrary \\
& 1 & $\Theta(\alpha^{-2})$ & Arbitrary & Arbitrary \\
& 2 & $\omega(\alpha^{-2})$ & Arbitrary & Arbitrary \\
& 3 & $\omega(\alpha^{-2})$ & Arbitrary & Arbitrary \\
\hline
\multirow{5}{*}{Superresolution} 
& 0 & $ 0$ & Arbitrary & Arbitrary \\
& 1 & $\Theta(\alpha^{-2})$ & Arbitrary & Arbitrary \\
& 2 & $\Theta(\alpha^{-2})$ & Arbitrary & Arbitrary \\
& 3 & $\omega(\alpha^{-2})$ & Arbitrary & Arbitrary \\
& 4 & $\omega(\alpha^{-2})$ & Arbitrary & Arbitrary \\
\hline
\multicolumn{5}{|c|}{\textbf{Numerical Results}} \\
\hline
\multirow{4}{*}{Direct Imaging} 
& 0 & $ 0$ & Gaussian & Random empirical \\
& 1 & $\Theta(\alpha^{-2})$ & Gaussian & Random empirical \\
& 2 & $\Theta(\alpha^{-4})$ & Gaussian & Random empirical \\
& 3 & $\Theta(\alpha^{-6})$ & Gaussian & Random empirical \\
& 4 & $\Theta(\alpha^{-8})$ & Gaussian & Random empirical \\
\hline
\multirow{5}{*}{Superresolution} 
& 0 & $ 0$ & Gaussian & Random empirical \\
& 1 & $\Theta(\alpha^{-2})$ & Gaussian & Random empirical \\
& 2 & $\Theta(\alpha^{-2})$ & Gaussian & Random empirical \\
& 3 & $\Theta(\alpha^{-4})$ & Gaussian & Random empirical \\
& 4 & $\Theta(\alpha^{-4})$ & Gaussian & Random empirical \\
\hline
\end{tabular}
\caption{\yk{Summary of the asymptotic scaling behavior of eigenvalues $\beta_k^2$ from both analytical and numerical approaches. Analytical results apply to arbitrary PSFs and priors, while numerical results are obtained for a Gaussian PSF and a randomly generated empirical prior.}}\label{Table:beta_k}
\end{table}

}

\subsection{Deviation of the approximated eigenbasis from the actual eigenbasis}\label{SI:deviation_eigenbasis}

\begin{figure}[!tb]
\begin{center}
\includegraphics[width=1\columnwidth]{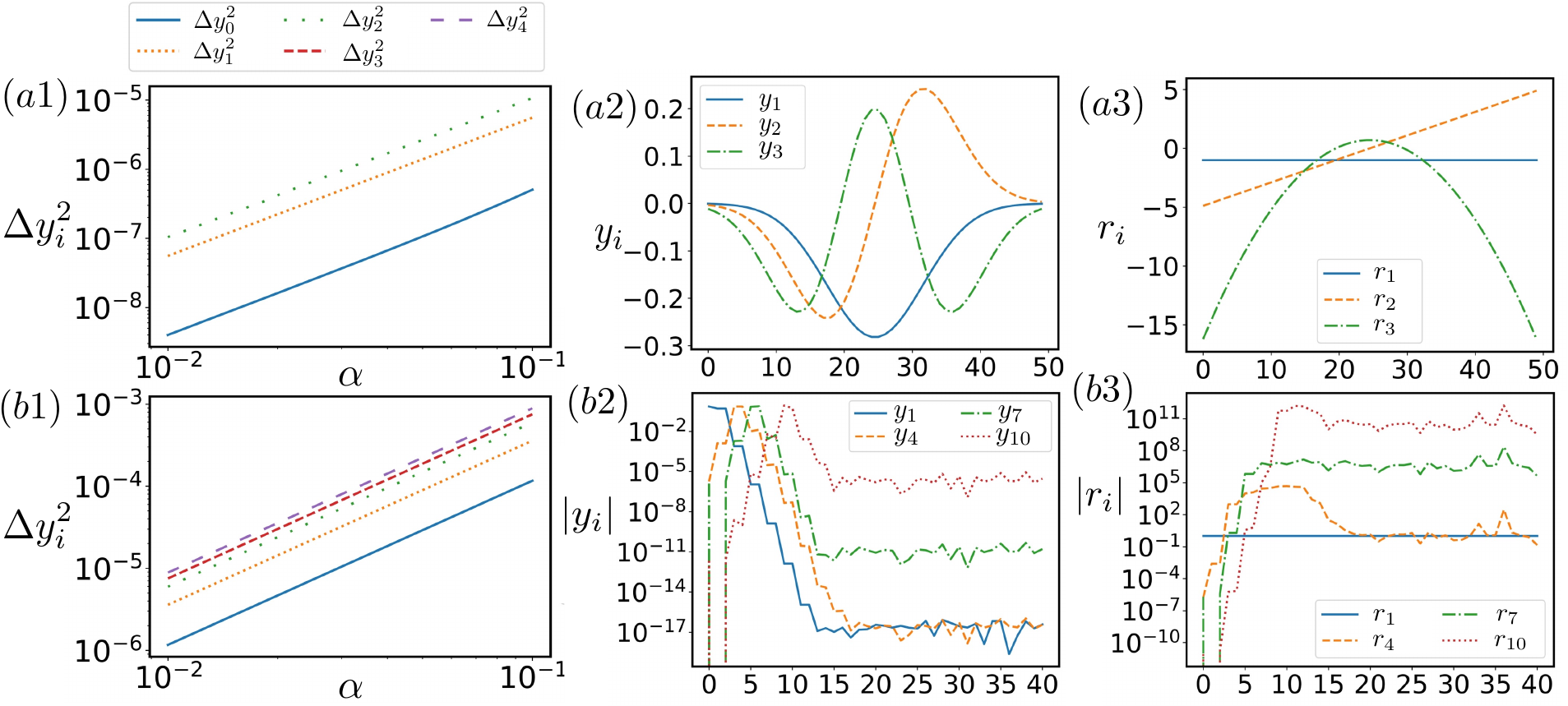}
\caption{\yk{We plot the deviation of the approximated eigenvector $\hat{y}_k$ from the actual eigenbasis $y_k$ for imaging a single compact source using (a1) direct imaging and (b1) the SPADE method. The deviation is defined as the squared norm of the difference between two normalized vectors, $\Delta y_k^2 = |\hat{y}_k - y_k|^2$, for the eigenvalue problem $W y_k = \lambda_k y_k$, where $W = D^{-1/2} G D^{-1/2}$. We also show the actual eigenbasis $y_k$ in (a2) for direct imaging and (b2) for the SPADE method. In (b2), we plot the absolute value to display it on a logarithmic scale. Additionally, we present the vectors $r_k$ in (a3) for direct imaging and (b3) for the SPADE method. The $x$-axis in (a2), (b2), (a3), (b3) represents the different measurement outcomes, and we set $\alpha = 10^{-2}$ for these figures.
 } }
\label{deviation}
\end{center}
\end{figure}


In this section, we want to verify the deviation of the approximated  eigenbasis $\hat{y}_k$ from the actual eigenbasis $y_k$ vanishes as $\alpha\rightarrow0$. In the direct imaging case, we predicted above that $\{z_n\}_{n=0,1,2,\cdots}$, which is the orthonormal basis constructed from the set of vectors $\{D^{-1/2} C_n\}_{n=0,1,2,\cdots}$ using the Gram-Schmidt procedure, are the approximated eigenvector $\hat{y}_k=z_k$ for actual eigenvectors $y_k$, which is confirmed as shown in Fig.~\ref{deviation}(a1). For the SPADE method, we predicted above that the approximated eigenvector $\hat{y}_0=z_0$, each $\hat{y}_{k \geq 1}$ is the linear combination of $z_{2\lceil k/2\rceil-1}$ and $z_{2\lceil k/2\rceil}$  in Eq.~\ref{y0}.
The coefficients in the linear combination of the two bases depend on the specific values of $g_{ij}$ and $d_i$. We choose $\hat{y}_0=z_0$ and employ a least squares approach to fit each $\hat{y}_{k \geq 1}$ with the two bases $z_{2\lceil k/2\rceil-1}$ and $z_{2\lceil k/2\rceil}$  in Eq.~\ref{y0}.  The deviation is illustrated in Fig.~\ref{deviation}(b1). In both cases, as $\alpha \rightarrow 0$, the actual eigenbasis $y_k$ increasingly aligns with the eigenvectors $\hat{y}_k$ described above. 
For the numerical calculations, we select the PSF as $\psi(u) = {\exp(-u^2/4\sigma^2)}/{(2\pi\sigma^2)^{1/4}}$. The coefficients $d$ and $g$ are chosen by randomly generating a set of images and assuming that these images appear with equal probability, which then can be used to $d$ and $g$ with the corresponding moments vector of each image as detailed in Sec.~\ref{SI:numerics}.

\yk{We also plot the actual eigenbasis $y_k$ and $r_k$ in Fig.~\ref{deviation}(a2)(a3) for the direct imaging method and in Fig.~\ref{deviation}(b2)(b3) for the SPADE method. The $x$-axis corresponds to different measurement outcomes. For direct imaging [Fig.~\ref{deviation}(a2)(a3)], the measurement outcome corresponds to detecting a single photon at each pixel on the image plane. For the SPADE method [Fig.~\ref{deviation}(b2)(b3)], we plot the absolute value $|y_k|,|r_k|$, with the outcomes labeled as $P_0, P_{1+}, P_{1-}, P_{2+}, P_{2-}, \dots$. It is evident that $|y_k|$ is concentrated on just a few outcomes and $|r_k|$ is almost vanishing for the first few components. Since for the SPADE method,
$P_{n\pm} = c^{\pm}_{n,2n} x_{2n} \alpha^{2n} + c^{\pm}_{n,2n+1} x_{2n+1} \alpha^{2n+1} + o(\alpha^{2n+1})$,
it follows that the eigentasks are dominated by only a few moments. Note that in the figure we plot $y_k$, the eigenvector of $D^{-1/2} G D^{-1/2}$ defined by $D^{-1/2} G D^{-1/2} y_k = \lambda_k y_k$, which is related to $r_k$ by $y_k = D^{1/2} r_k$. We use $y_k$ because it is normalized without the weight matrix, i.e., $y_k^\dagger y_k = 1$, whereas $r_k^\dagger D r_k = 1$. Using $y_k$ makes it easier to see which components become dominant. }

\section{Imaging general source extended outside the Rayleigh limit}\label{SI:general_source}

In the above discussion, we have consistently assumed the size of one compact source is much smaller than the Rayleigh limit of our lens. However, in practice, the source size can exceed the Rayleigh limit, and we still need to quantify imaging performance in such cases. In this section, we focus on quantifying imaging performance using the REC under these conditions.

\yk{

A natural question is what the total REC would be for a general imaging task using direct imaging and how to interpret its meaning. Intuitively, when imaging a source of size $L$ with a lens where the PSF has a width $\sigma$, the resulting image is a blurred version of the original, with the blur size being $\sigma$. The total REC represents the number of degrees of freedom we can extract from the measurement, and since the image is blurred, this is roughly of order $L/\sigma$; however, this relation is not exact, as it arises from heuristic reasoning rather than formal analysis. To confirm this, we perform a numerical calculation   as follows:
we randomly generate a general source of size $L$ consisting of $N_{\text{max}}$ discrete pixels,
with each pixel assigned a random value between 0 and 1. We then normalize the intensity distribution $I(u)$ such that $\sum_u I(u) = 1$. By repeatedly generating the intensity distribution $I(u)$ in this manner for $W$ times, we simulate the case where we do not have any prior knowledge of the source. For all the numerical calculation for imaging general source outside the Rayleigh limit, we use $N_{\text{max}}=200$, $W=200$. Assuming the PSF is given by $\psi(u) = (2\pi\sigma^2)^{-1/4} \exp(-u^2/(4\sigma^2))$.

We plot the total REC as a function of sample number $S$ for different values of $L$, as shown in Fig.~\ref{general_source_S_log}(a). As expected, the total $C_T$ soon reaches a (slowly increasing) plateau as $S \rightarrow \infty$. 
We also present in Fig.~\ref{general_source_S_log}(b) a plot of $C_T$ as a function of $S$ on a logarithmic scale. As discussed in the main text, when moving beyond the regime of resolving multiple compact sources—where the sources are no longer confined to a few localized regions—the increase of $C_T$ no longer exhibits stepwise behavior.
Note that as $S \rightarrow \infty$, the total REC $C_T$ continues to increase slowly and eventually diverges. This implies that, with infinitely many photons, even direct imaging can resolve the image to an arbitrarily fine degree as a number of investigations over the last fifty years have showed \cite{goodman2005introduction,lucy1992statistical,wang2023fundamental}. However, to reach this infinite $C_T$, the required number of samples also diverges. In particular, when $S$ becomes sufficiently large, the increase of $C_T$ becomes very slow. This behavior is  illustrated in Fig.~\ref{general_source_S_log}(c), which shows that the slope $(\partial C_T) / (\partial \log S)$ reaches a peak at some sample size and then gradually decreases.
In this sense, $C_T$ exhibits a logarithmic-type growth with respect to $S$, characterized by a relatively small coefficient when $S$ becomes sufficiently large. Since $\log S$ is used as the $x$-axis in the slope plot, advancing by a fixed interval along the $\log S$ axis requires an exponentially larger increase in the number of samples. Meanwhile, $C_T$ increases only modestly over the same interval of $\log S$. Therefore, the curve approaches a slowly rising plateau.

}

\yk{We also demonstrate that the total REC $C_T$ increases approximately linearly with the source size $L$, as shown in Fig.~\ref{general_source_CT_L}(a) for different choices of $\sigma$. The ratio $C_T/(L/\sigma)$ is plotted in Fig.~\ref{general_source_CT_L}(b), which approaches 1 but shows a slight decreasing trend as $L$ increases in this model across various values of $\sigma$.
In Fig.~\ref{general_source_sigma}, we plot the total REC $C_T$ as a function of the PSF width $\sigma$ for different source sizes $L$. Our numerical results indicate that, for the model considered here, $C_T$ approximately follows a power-law scaling: $C_T \propto \sigma^{-0.9}$. We have thus observed that the total REC $C_T$ is approximately determined by the ratio $L/\sigma$, but this relation is not exact—rather, it provides a rough estimate that serves only as an intuitive argument rather than a rigorous treatment.}
All the above calculations are performed for a single instance of randomly generated prior information. However, in Fig.~\ref{general_source_robust}, we present results from three consecutive executions of the calculation with $N_{\text{max}} = 200$ and $W = 200$, where each execution randomly generates a different instance of prior information. Despite the variation in prior information, we observe negligible differences in the calculation results.

\begin{figure}[!tb]
\begin{center}
\includegraphics[width=1\columnwidth]{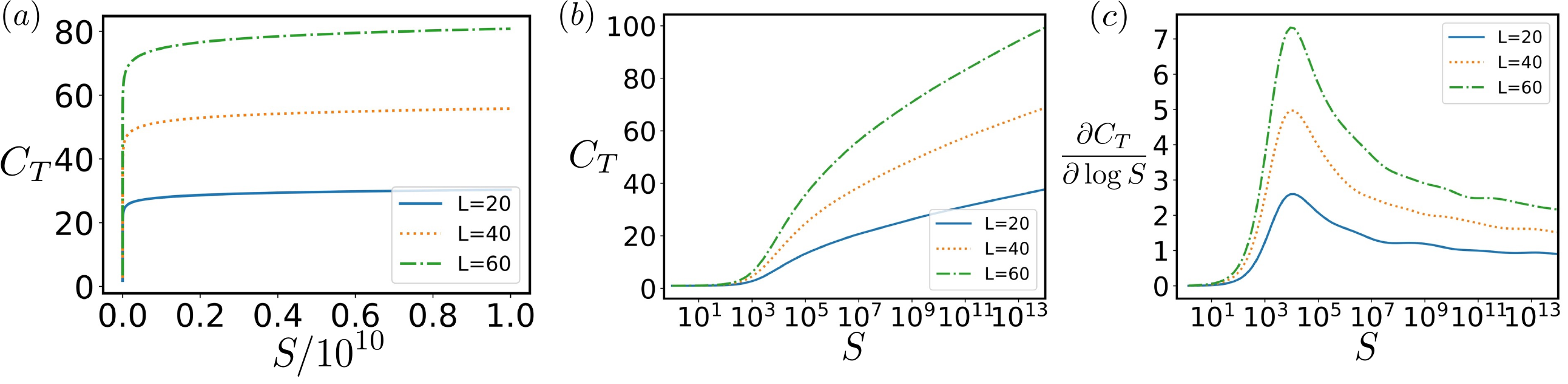}
\caption{\yk{(a) Total REC for direct imaging  as a function of $S$ for a general source with size $L=20,40,60$ respectively and width of PSF $\sigma=1$. (b) Total REC $C_T$ for direct imaging of general source extended outside the Rayleigh limit as a function of $S$ in log scale for source with size $L=20,40,60$. (c) Slope $(\partial C_T)/(\partial \log  S)$ as a function of $S$,  for source with size $L=20,40,60$. $\sigma=1$.}} 
\label{general_source_S_log}
\end{center}
\end{figure}

\begin{figure}[!tb]
\begin{center}
\includegraphics[width=0.8\columnwidth]{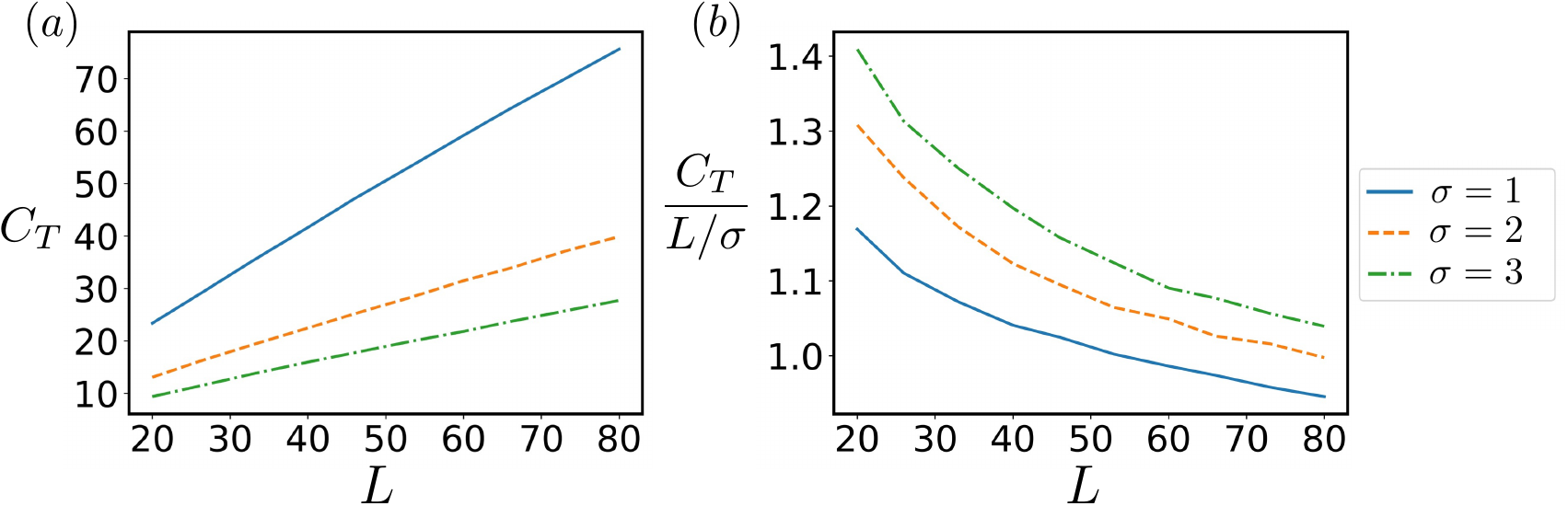}
\caption{\yk{(a) Total REC $C_T$ for direct imaging of a general source that extends outside the Rayleigh limit, plotted as a function of the source size $L$; (b) the ratio $C_T / (L/\sigma)$ as a function of $L$, for different values of the PSF width $\sigma =  1, 2,3$. In both panels, the total sample number is fixed at $S = 10^7$.} }
\label{general_source_CT_L}
\end{center}
\end{figure}

\begin{figure}[!tb]
\begin{center}
\includegraphics[width=0.4\columnwidth]{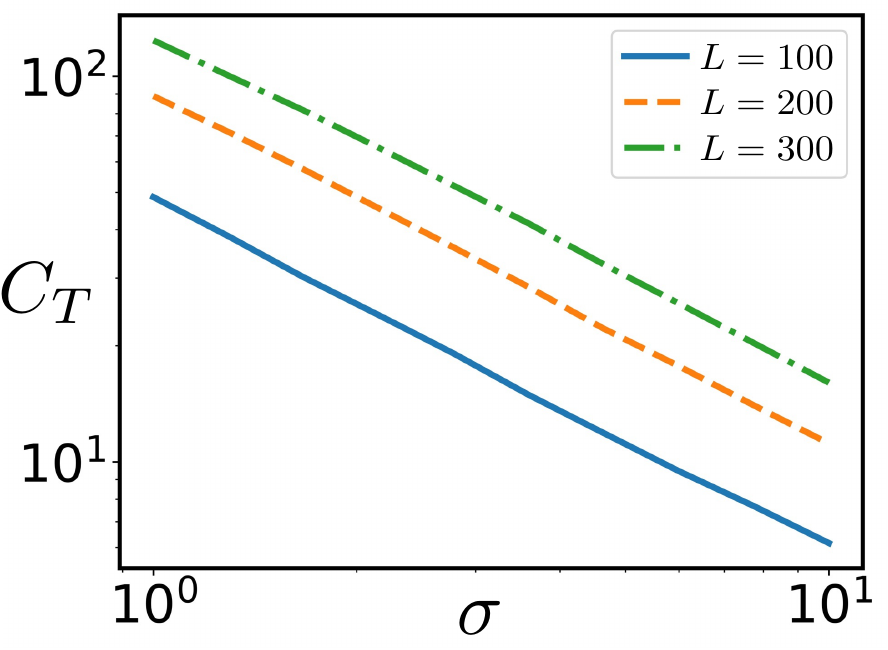}
\caption{\yk{Total REC $C_T$ for direct imaging of a general source extending beyond the Rayleigh limit, plotted as a function of the PSF width $\sigma$ for different source sizes $L=100, 200, 300$. Both axes use logarithmic scales. The sample number is $S = 10^4$.}} 
\label{general_source_sigma}
\end{center}
\end{figure}

\begin{figure}[!tb]
\begin{center}
\includegraphics[width=1\columnwidth]{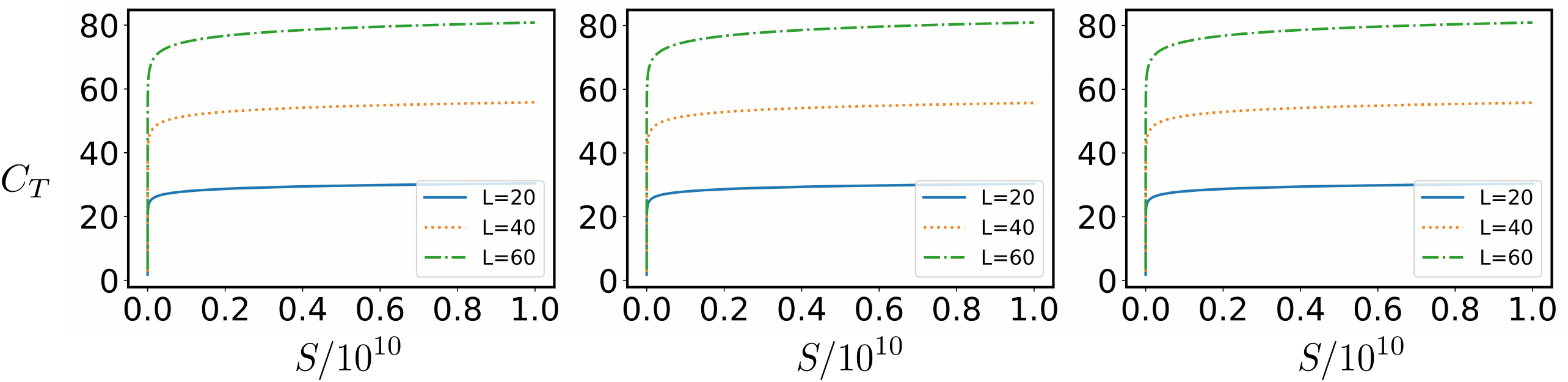}
\caption{Three consecutive executions of the calculation of total REC $C_T$ for direct imaging of general source extended outside the Rayleigh limit as a function of $S$ for source with size $L=20,40,60$. $\sigma=1$.} 
\label{general_source_robust}
\end{center}
\end{figure}

\section{Imaging multiple compact sources}\label{SI:multiple_compact_source}

The state from multiple compact sources is given by
\begin{equation}
\rho=\sum_{q=1}^Q\int du du_1du_2I_q(u)\psi(u-u_1)\ket{u_1}\bra{u_2}\psi^*(u-u_2),
\end{equation}
where $Q$ is the number of compact sources, $I_q(u)$ is the intensity distribution for $q$th compact source. We can expand the PSF near the centroid $u_q$ of $q$th source
\begin{equation}\begin{aligned}
\psi(u_1-u)&=\sum_{n=0}^\infty \left(\frac{\partial^n \psi(u_1-u)}{\partial u^n}\bigg|_{u=u_q}\frac{L_q^n}{n!}\right)\left(\frac{u-u_q}{L_q}\right)^n\\
&=\sum_{n=0}^\infty \psi^{(n)}_q(u_1)\left(\frac{u-u_q}{L_q}\right)^n,
\end{aligned}\end{equation}
where $L_q$ is the size of $q$th source.

\begin{equation}\begin{aligned}
&\rho=\sum_{q=1}^Q \sum_{m,n=0}^\infty x_{m+n,q}\ket{\psi_q^{(m)}}\bra{\psi_q^{(n)}},\\
&\ket{\psi_q^{(m)}}=\int du \psi_q^{(m)}(u)\ket{u}, 
\end{aligned}\end{equation}
where $x_{n,q}=\int du I_q(u)\left(\frac{u-u_q}{L_q}\right)^n $ is the $n$th moment for the $q$th source.

\subsection{Direct imaging}\label{SI:multiple_compact_direct}
If we have $Q$ compact sources, direct imaging will have the probability distribution in the following form
\begin{equation}\label{multiple_direct_P}
P_n=\sum_{q=1}^Q\sum_{m=0}^\infty c_{nmq}x_{mq}\alpha_q^m.
\end{equation}
Again, we measure a discretized version of direct imaging for both analytical and numerical discussions $M_n = \int_{X_n - l/2}^{X_n + l/2} du \ket{u}\bra{u}$. Notice that $\alpha_q= L_q/\sigma$ can be different. But in the following, we will redefine the coefficients $c_{nmq}$ involves the difference between $L_q$ for different sources and have $\alpha_q=\alpha$. We will numerically explore the case where $\alpha_q$ differs significantly and cannot be incorporated into the coefficients $c_{nmq}$ below.

\begin{equation}\begin{aligned}
&D=\sum_{n=0}^\infty D^{(n)}=\sum_{n=0}^\infty \alpha^n\sum_{q=1}^Q d_{nq} \mathcal{C}_{nq},\quad \mathcal{C}_{nq}=\text{diag}(C_{nq}),\quad C_{nq}= [c_{0nq},c_{1nq},c_{2nq},\cdots\cdots]^T, \\
&G=\sum_{n=0}^\infty G^{(n)}=\sum_{n=0}^\infty\alpha^n\sum_{n_1+n_2=n}\sum_{q_1,q_2=1}^Qg_{n_1q_1,n_2q_2}  C_{n_1q_1} C_{n_2q_2}^T,\\
&g_{n_1q_1,n_2q_2}=\int d\vec{x}p(\vec{x})x_{n_1q_1}x_{n_2q_2}, \quad d_{nq}=\int d\vec{x}p(\vec{x})x_{nq}.\\
\end{aligned}\end{equation}
where $\text{diag}(C_{nq})$ denotes a diagonal matrix with the diagonal elements being the elements of $C_{nq}$. If we choose a set of basis $z_n$ as the Gram-Schmidt orthogonalization of ${D_0^{-1/2}C_{mq}}$, we can then express the matrix $W = D^{-1/2} G D^{-1/2}$ in the basis of $z_n$ and divide it into $Q \times Q$ blocks. In this block structure, the $(m,n)$th block scales as $\Theta(\alpha^{m+n})$. We can then examine the structure of the characteristic polynomial $|\lambda I - W| = \sum_{k=0}^M (-1)^k e_k \lambda^{M-k}$, where the coefficients $e_k$ now take the form of
\begin{equation}\begin{aligned}
&e_1=\Theta(\alpha^0),\cdots, e_Q=\Theta(\alpha^0), e_{Q+1}=\Theta(\alpha^{0+2}),\cdots, e_{2Q}=\Theta(\alpha^{0+2}), e_{2Q+1}=\Theta(\alpha^{0+2+4}),\cdots, e_{3Q}=\Theta(\alpha^{0+2+4}),\cdots\cdots 
\end{aligned}
\end{equation}
So, the scaling of eigenvalue is
\begin{equation}\begin{aligned}
&\beta_{0\leq i\leq Q-1}^2=\Theta(1),\quad\beta_{Q\leq i\leq 2Q-1}^2=\Theta(\alpha^{-2}),\quad\beta_{2Q\leq i\leq 3Q-1}^2=\Theta(\alpha^{-4}),\cdots
\end{aligned}\end{equation}

\subsection{Separate SPADE method }\label{SI:multiple_compact_superresolution}

If we naively apply the SPADE method \cite{zhou2019modern} to each compact source, which will be referred as the separate SPADE method. In other word, we use the POVM
\begin{equation}\label{eq:POVM_separate}
\left\{\frac{1}{2Q}\ket{\phi_{q,m,\pm}}\bra{\phi_{q,m,\pm}},\frac{1}{2Q}\ket{b_{q,0}}\bra{b_{q,0}}\right\}_{q=1,2,\cdots,Q,\, m=0,1,2,\cdots,\infty},     
\end{equation}
where $\{\ket{b_{q,m}}\}_{m=0,1,2,\cdots,\infty}$ is given by the Gram-Schmidt orthogonalization of set $\{\ket{\psi_q^{(m)}}\}_{m=0,1,2,\cdots,\infty}$, and $\ket{\phi_{q,m,\pm}}=\frac{1}{\sqrt{2}}(\ket{b_{q,m}}\pm\ket{b_{q,m+1}})$. We add the normalization factor $1/Q$ to our POVM.

An immediate issue arises: nearby compact sources can disrupt the construction of separate SPADE method. This occurs because the designed measurement relies on canceling out lower-order terms of $O(\alpha^n)$ through careful design. However, if other compact sources are too close, they may contribute terms that are much larger than $O(\alpha^n)$, undermining the effectiveness of the superresolution technique.

In general, when all compact sources are sufficiently far apart, superresolution can still be achieved. However, when the sources are too close together, the probability distribution of separate SPADE method will resemble that of direct imaging, as described in Eq.~\ref{multiple_direct_P}.
To be more specific, if we choose the PSF as $\psi(u) = {\exp(-u^2/4\sigma^2)}/{(2\pi\sigma^2)^{1/4}}$, the threshold for achieving superresolution is determined by whether $\alpha^n \gg \exp(-L^2/\sigma^2)$, where $n$ is determined by the order of the measured moments, and $\exp(-L^2/\sigma^2)$ represents the contribution from nearby compact sources.


\begin{figure}[!tb]
\begin{center}
\includegraphics[width=1\columnwidth]{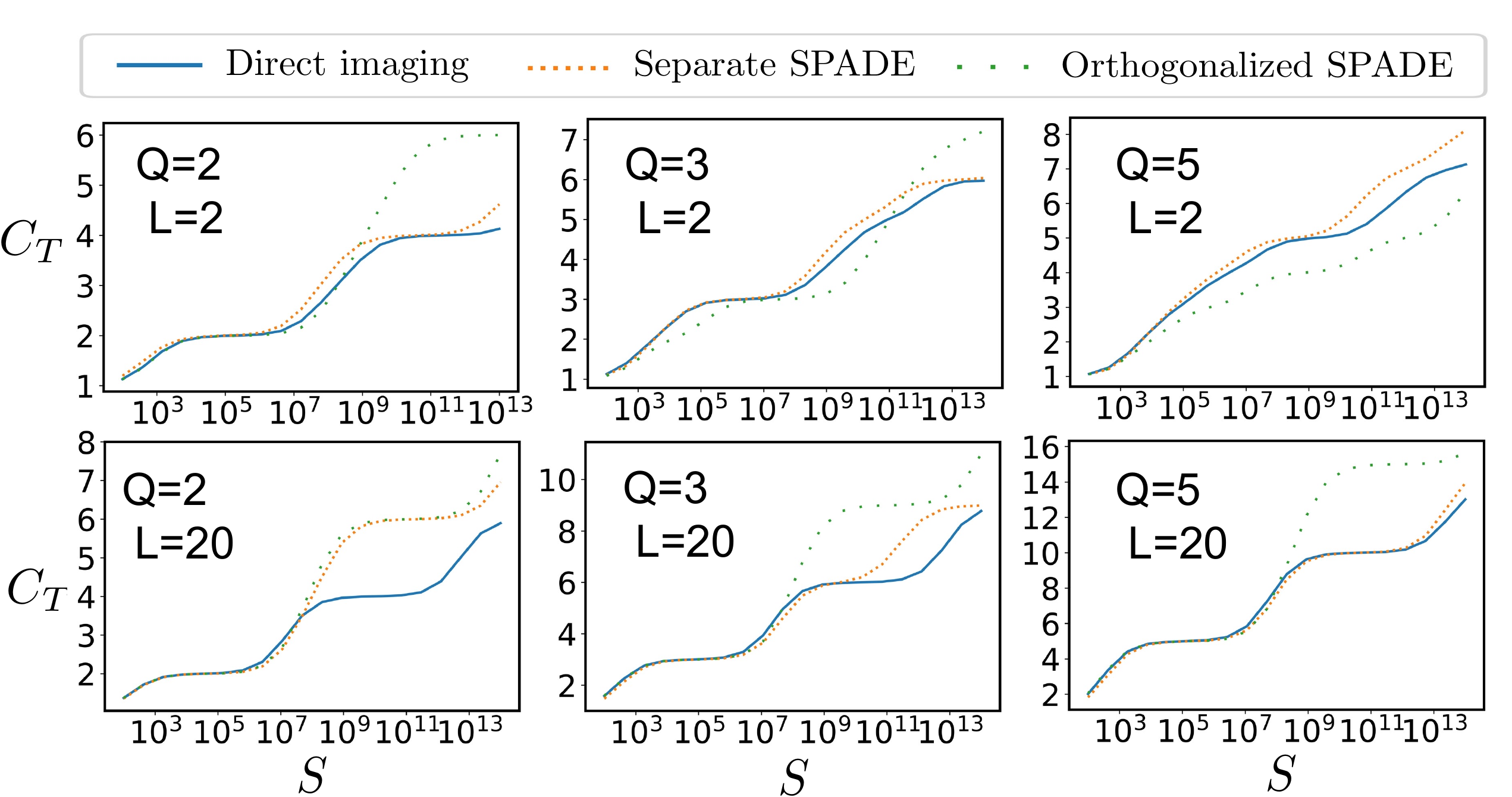}
\caption{Total REC $C_T$ of imaging $Q$ compact sources for direct imaging, separate SPADE method and orthogonalized SPADE method as a function of $S$. $\alpha=10^{-2}$, $\sigma=1$. We consider two different  $L$ and different number of compact source $Q$. 
} 
\label{new_method_Q}
\end{center}
\end{figure}

\begin{figure}[!tb]
\begin{center}
\includegraphics[width=1\columnwidth]{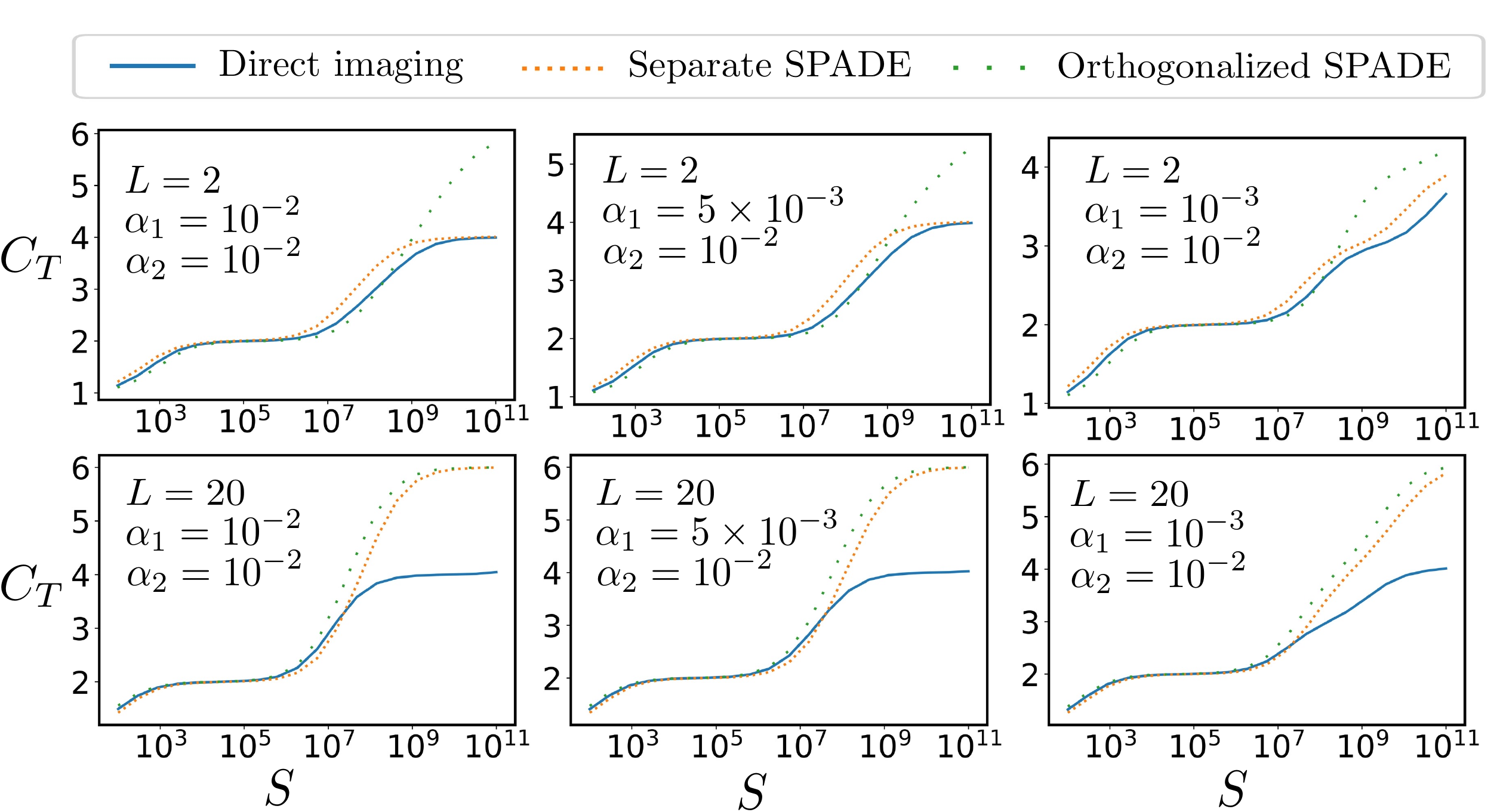}
\caption{Total REC $C_T$ for direct imaging, separate SPADE method and orthogonalized SPADE method as a function of $S$. Assume we are imaging two compact sources $Q=2$, $\sigma=1$. We consider two different  $L$ and assume the size of the two compact source is $\alpha_1,\alpha_2$, which can be different. } 
\label{new_method_different_alpha}
\end{center}
\end{figure}

\subsection{Orthogonalized SPADE method}\label{SI:new_method_superresolution}

With the construction of orthonormal basis $\ket{b_j^{(l)}}$ from the Gram-Schmidt procedure such that 
\begin{equation}\label{eq:bjl_SI}
a_{mk,lj}=\bra{\psi^{(m)}_k}\ket{b^{(l)}_j}\left\{
\begin{array}{ccc}
=0 & \quad   &  m\leq l-1\\
=0 &\quad   & m=l \,\,\& \,\, k\leq j-1\\
\neq 0  & \quad   &  \text{otherwise}
\end{array}\right.
\end{equation}
Choose POVM as 
\begin{equation}\label{eq:POVM_orthogonalized}
\left\{\frac{1}{2}\ket{\phi_{j,\pm}^{(l)}}\bra{\phi_{j,\pm}^{(l)}},\frac{1}{2}\ket{b_{j}^{(0)}}\bra{b_{j}^{(0)}}\right\}_{j=1,2,\cdots,Q,\, l=0,1,2,\cdots,\infty},   
\end{equation}
where
\begin{equation}
\ket{\phi_{j\pm}^{(l)}}=\frac{1}{\sqrt{2}}\left(\ket{b_j^{(l)}}\pm\ket{b_j^{(l+1)}}\right),\quad j=1,2,3,\cdots,Q,\quad l=0,1,2,\cdots,\infty
\end{equation}
Then, the probability distribution
\begin{align}\label{eq:oSPADE_P}
&P_{0,j}=\frac{1}{2}\bra{b_{j}^{(0)}}\rho\ket{b_{j}^{(0)}}=\sum_{q=1}^Q\sum_{n=0}^\infty c_{jnq} x_{nq}\alpha^n,\\
&P_{l,j,\pm}=\frac{1}{2}\bra{\phi_{j\pm}^{(l)}}\rho\ket{\phi_{j\pm}^{(l)}}=\sum_{q=1}^Q\sum_{n=2l}^\infty c_{ljnq}^\pm x_{nq}\alpha^n,\\
&c_{ljnq}^\pm=\frac{1}{4}\sum_{m=0}^n 
\left(a_{pq,lj}a^*_{mq,lj}+a_{pq,(l+1)j}a^*_{mq,(l+1)j}\pm a_{pq,(l+1)j}a^*_{mq,lj}\pm a_{pq, lj}a^*_{mq,(l+1)j}\right)/\alpha^n,\quad p=n-m.
\end{align}

We set up the REC calculation as 
\begin{align}
&D=\sum_{n=0}^\infty D^{(n)}=\sum_{n=0}^\infty \alpha^n\sum_{q=1}^Q d_{nq} \mathcal{C}_{nq},\quad \mathcal{C}_{nq}=\Pi_{nq}\,\mathcal{C}_{nq}^0\,\Pi_{nq},\\
&\mathcal{C}_{nq}^0=\text{diag}(C_{nq}^0),\quad d_{nq}=\int d\vec{x}p(\vec{x})x_{nq},\\
&G=\sum_{n=0}^\infty G^{(n)}=\sum_{n=0}^\infty\alpha^n\sum_{n_1+n_2=n}\sum_{q_1,q_2=1}^Qg_{n_1q_1,n_2q_2}  C_{n_1q_1} C_{n_2q_2}^T, \\
&C_{nq}=\Pi_{nq}C^0_{nq},\quad g_{n_1q_1,n_2q_2}=\int d\vec{x}p(\vec{x})x_{n_1q_1}x_{n_2q_2},\\
&C^0_{nq}= [c_{1nq},c_{2nq},\cdots,c_{Qnq},c_{01nq}^+,c_{01nq}^-,c_{02nq}^+,c_{02nq}^-,\cdots,c_{0Qnq}^+,c_{0Qnq}^-, c_{11nq}^+,c_{11nq}^-,c_{12nq}^+,c_{12nq}^-,\cdots,c_{1Qnq}^+,c_{1Qnq}^-, \cdots\cdots]^T, \\
&\Pi_{nq}=\left[\begin{matrix}
I_{2\lfloor n/2\rfloor+1}\otimes I_Q & 0 & 0\\
0 & I_2\otimes Q_q & 0\\
0 & 0 & 0
\end{matrix}\right], \quad 
Q_q=\left[\begin{matrix}
I_q & 0\\
0 &0\\
\end{matrix}\right].
\end{align}
We can again give an intuitive argument as in Sec.~\ref{SI:one_source_direct} and Sec.~\ref{SI:one_source_superresolution} about the scaling of eigenvalues $\beta_k^2$ based on characteristic polynomial. 
Define
$P_{m}=\Pi_{2m,Q}-\Pi_{2(m-1),Q}$, $\Pi_{m<0,k}=0$. In the $2Q$ dimensional subspace in the support of $P_m$, we have vectors $P_mC_{2m,q}, P_mC_{2m+1,q}$, $q=1,2,3,\cdots,Q$, $m=0,1,2,\cdots,\infty$. We will define the  set of orthonormal basis $z_{2m+1,q},z_{2m+2,q}$ for convenience  such that $z_{2m+1,q}$ is the Gram-Schmidt orthorgonalization of all of the $\{P_mC_{2m,q}\}_{q=1,2,3,\cdots,Q}$ and $z_{2m+2,q}$ is the Gram-Schmidt orthorgonalization of all of the $\{P_mC_{2m+1,q}\}_{q=1,2,3,\cdots,Q}$ while keeping orthogonal to $\{P_mC_{2m,q}\}_{q=1,2,3,\cdots,Q}$. Then, it should be clear that in this basis, for  the ($m_1$,$m_2$)th block $W_{m_1,m_2}=P_{m_1}WP_{m_2}$, the scaling of  over $\alpha$ is given by
\begin{equation}
\left[
\begin{matrix}
W_{m_1+m_2} & W_{m_1+m_2+1}\\
W_{m_1+m_2+1} & W_{m_1+m_2+2}
\end{matrix}\right],
\end{equation}
where each block $W_{n}$ is a $Q\times Q$ matrix scaling as $\Theta(\alpha^{n})$.
To find the scaling of eigenvalues , we again try to observe the structure of the character polynomial $|\lambda I-W|=\sum_{k=0}^M(-1)^k e_k\lambda^{M-k}$, its coefficients now has the form of 
\begin{equation}\begin{aligned}
&e_1=\Theta(\alpha^0),\cdots, e_Q=\Theta(\alpha^0), \\ &e_{Q+1}=\Theta(\alpha^{0+2}),\cdots, e_{3Q}=\Theta(\alpha^{0+4Q}), \\ &e_{3Q+1}=\Theta(\alpha^{0+4Q+4}),\cdots, e_{5Q}=\Theta(\alpha^{0+4Q+8Q}),\cdots\cdots 
\end{aligned}
\end{equation}
So, the scaling of eigenvalue is
\begin{equation}\begin{aligned}
&\beta_{0\leq i\leq Q-1}^2=\Theta(1),\quad\beta_{Q\leq i\leq 3Q-1}^2=\Theta(\alpha^{-2}),\quad\beta_{3Q\leq i\leq 5Q-1}^2=\Theta(\alpha^{-4}),\cdots
\end{aligned}\end{equation}

In Fig.~\ref{new_method_Q}, we present the case where $\alpha = 10^{-2}$ is fixed, and different numbers of compact sources $Q$ are considered, with the centroids of these $Q$ sources evenly distributed over $[-L/2, L/2]$. The figure shows that when the number of sources $Q$ exceeds roughly $L/Q$, the stepwise increase in $C_T$ becomes smoothed out. Note that when we fix $L=20$, the two sources in the case of $Q=2$ are sufficiently far apart. However, as $Q$ increases to 3 or 5, the distance between the sources decreases, making the orthogonalization method more advantageous in these cases. In Fig.~\ref{new_method_different_alpha}, we consider the case where $Q=2$, but $\alpha_1$ and $\alpha_2$ take different values for the two compact sources. It is evident that the difference in $\alpha_i$ between the compact sources also smooths out the stepwise increase in $C_T$ by introducing smaller steps with varying thresholds. The prior knowledge about the moments are again chosen by randomly generating a set of images as in detailed in Sec.~\ref{SI:numerics}.

\subsection{Implementation of the orthogonalized SPADE method}\label{appendix:povm}

\yk{In this section, we discuss how to implement the orthogonalized SPADE method using a Hermite-Gaussian mode sorter,  beam splitters, and postselection. The central idea is that the Hermite-Gaussian mode sorter enables a transformation from Hermite-Gaussian modes into transverse momentum states. In our implementation of the orthogonalized SPADE method,   the basis functions are linear combinations of the Hermite-Gaussian modes. Therefore, the Hermite-Gaussian mode sorter enables the transformation of these orthogonalized basis functions into transverse momentum eigenstates (up to additional processing), thereby allowing full implementation of the desired measurement using only the mode sorter, beam splitter, and postselection. We illustrate this implementation with a representative example: imaging two compact sources.}

When there are two compact sources located in $u_1$ and $u_2$ respectively and  the PSF is given by $\psi(u)=\exp(-u^2/4\sigma^2)/(2\pi\sigma^2)^{1/4}$, for the separate SPADE measurement, the POVM as in Eq.~\ref{eq:POVM_separate} is $\left\{\frac{1}{2Q}\ket{\phi_{q,m,\pm}}\bra{\phi_{q,m,\pm}},\frac{1}{2Q}\ket{b_{q,0}}\bra{b_{q,0}}\right\}_{q=1,2,\cdots,Q,\, m=0,1,2,\cdots,\infty}$, where $\ket{\phi_{q,m,\pm}}=\frac{1}{\sqrt{2}}(\ket{b_{q,m}}\pm\ket{b_{q,m+1}})$. We can calculate the wavefunction $\ket{b_{q,m}}=\int du b_{q,m}(u)\ket{u}$ as 
\begin{equation}
\begin{aligned}\label{eq:b_q01}
b_{q,0}(u)=\frac{\exp(-(u-u_q)^2/4\sigma^2)}{(2\pi\sigma^2)^{1/4}},\quad b_{q,1}(u)=\frac{(u-u_q)\exp(-(u-u_q)^2/4\sigma^2)}{(2\pi)^{1/4}\sigma^{3/2}},  \quad q=1,2\,,
\end{aligned}
\end{equation}
\yk{
These basis functions, constructed under the assumption of a Gaussian point spread function, are the Hermite-Gaussian modes. In this subsection, we denote $b_{q,m}(u)$ as the $m$th Hermite-Gaussian mode centered at the centroid of the $q$th compact source.
For POVM of the orthogonalized SPADE method given in Eq.~\ref{eq:POVM_orthogonalized},
$\left\{\frac{1}{2}\ket{\phi_{j,\pm}^{(l)}}\bra{\phi_{j,\pm}^{(l)}},\frac{1}{2}\ket{b_{j}^{(0)}}\bra{b_{j}^{(0)}}\right\}_{j=1,2,\cdots,Q,\, l=0,1,2,\cdots,\infty},$ 
where
$\ket{\phi_{j\pm}^{(l)}}=\frac{1}{\sqrt{2}}\left(\ket{b_j^{(l)}}\pm\ket{b_j^{(l+1)}}\right)$, the basis functions $b_j^{(l)}(u)$ can be computed as linear combinations of the Hermite-Gaussian modes}
\begin{equation}\label{eq:bjl_combination}
\begin{aligned}
&b_{1}^{(0)}(u)=b_{1,0}(u),\\
&b_{2}^{(0)}(u)=p_1b_{1,0}(u)+p_2b_{2,0}(u), \quad p_1=-\frac{e^{-(u_1-u_2)^2/8\sigma^2}}{\sqrt{1-e^{-(u_1-u_2)^2/4\sigma^2}}}, \quad p_2=\frac{1}{\sqrt{1-e^{-(u_1-u_2)^2/4\sigma^2}}},\\
&b_{1}^{(1)}(u)=p_3b_{1,0}(u)+p_4b_{2,0}(u)+p_5b_{1,1}(u),\quad p_3=\frac{
    \sqrt{4 \sigma^2 + \frac{e^{\frac{u_1 u_2}{2 \sigma^2}} (u_1 - u_2)^2}{e^{\frac{u_1 u_2}{2 \sigma^2}} - e^{\frac{u_1^2 + u_2^2}{4 \sigma^2}}}} (u_1 - u_2)
}{
    4 \left(-1 + e^{\frac{(u_1 - u_2)^2}{4 \sigma^2}}\right) \sigma^2 - (u_1 - u_2)^2
},\\
&p_4=\frac{
    e^{\frac{(u_1 - u_2)^2}{8 \sigma^2}} 
    \sqrt{4 \sigma^2 + \frac{e^{\frac{u_1 u_2}{2 \sigma^2}} (u_1 - u_2)^2}{e^{\frac{u_1 u_2}{2 \sigma^2}} - e^{\frac{u_1^2 + u_2^2}{4 \sigma^2}}}} 
    (-u_1 + u_2)
}{
    4 \left(-1 + e^{\frac{(u_1 - u_2)^2}{4 \sigma^2}}\right) \sigma^2 - (u_1 - u_2)^2
}
,\quad p_5=-\frac{
    2 \left(-1 + e^{\frac{(u_1 - u_2)^2}{4 \sigma^2}}\right) \sigma 
    \sqrt{4 \sigma^2 + \frac{e^{\frac{u_1 u_2}{2 \sigma^2}} (u_1 - u_2)^2}{e^{\frac{u_1 u_2}{2 \sigma^2}} - e^{\frac{u_1^2 + u_2^2}{4 \sigma^2}}}}
}{
    4 \left(-1 + e^{\frac{(u_1 - u_2)^2}{4 \sigma^2}}\right) \sigma^2 - (u_1 - u_2)^2
}
.
\end{aligned}
\end{equation}
Higher-order basis states $\ket{b_q^{(l)}}$ can be constructed similarly, though their explicit forms become too complex to present here.

There has been significant discussion regarding Hermite-Gaussian mode sorting \cite{lavery2012refractive,beijersbergen1993astigmatic,ionicioiu2016sorting,zhou2017sorting,zhou2018hermite}, which enables the conversion of Hermite-Gaussian modes, with $m = 0, 1, 2, \ldots, \infty$ for a fixed $q$, into transverse momentum states. Note that when the PSF is Gaussian, $b_{q,m}(u)$ in Eq.~\ref{eq:b_q01} corresponds to the Hermite-Gaussian mode. The transformation from Hermite-Gaussian modes to momentum states can be expressed as $\ket{m} = F_q\ket{b_{q,m}}$. Since the orthogonalized SPADE basis $b_j^{(l)}(u)$ can be written as a linear combination of these Hermite-Gaussian modes $b_{q,m}(u)$, the mode converter $F_q$, allows for transforming the basis $b_j^{(l)}(u)$ into simple transverse momentum states with some addition steps, thereby enabling complete measurement.

Fig.~\ref{implementation} illustrates the process of generating superpositions of nonorthogonal spatial modes $\ket{\psi_{a,b,c}}$, where each spatial mode $\ket{\psi_{a,b,c}}$ can be created using the mode converter $F_{a,b,c}^\dagger$ from a single photon in momentum states. Examples are shown for the cases of two and three distinct spatial modes, which can be easily extended to include additional modes. For the implementation of $\ket{\phi_{j\pm}^{(l)}}$, we will need the spatial modes $b_{q,m}(u)$ for $q=1,2,\cdots,Q$ when $m\leq l$, and $q\leq j$ when $m=l+1$. 
This procedure can be reversed to detect spatial modes as linear combinations of $\ket{b_{q,m}}$. Since both $\ket{\phi_{j,\pm}^{(l)}}$ and $\ket{b_j^{(l)}}$ can be expressed as linear combinations of $\ket{b_{q,m}}$, the orthogonalized SPADE method can be implemented using the Hermite-Gaussian mode converter with some additional steps.

The basis $\ket{b_j^{(l)}}$ can be constructed differently from Eq.~\eqref{eq:bjl_SI} while preserving the key feature of the orthogonalized SPADE method---namely, ensuring that lower-order terms vanish in the probability distribution when estimating higher-order moments. This design enhances the performance of moment estimation. For instance, one could redefine the basis $\ket{\psi'^{(m)}_k}$ as a linear combination of $\ket{\psi^{(m)}_{j=1,2,\dots,Q}}$ in the Eq.~\eqref{eq:bjl_SI} and then construct $\ket{b_j^{(l)}}$ using the Gram-Schmidt procedure applied to the set of basis states ${\ket{\psi'^{(m)}_k}}_{k=1,2,\dots,Q;, m=0,1,2,\dots}$. 
This approach allows for the exploration of alternative constructions of $\ket{b_j^{(l)}}$. As the current construction is not necessarily the optimal choice for experimental implementation, the flexibility in constructing the basis $\ket{b_j^{(l)}}$ could potentially lead to simpler or more practical implementations of the orthogonalized SPADE method.

The implementation of the measurement must be sufficiently accurate to ensure that any deviations in the probability distribution due to imprecision remain smaller than the lowest-order terms of $\alpha$ for each outcome. Consequently, the precision requirements for the measurement become increasingly stringent for higher-order moments.

\begin{figure}[!tb]
\begin{center}
\includegraphics[width=0.9\columnwidth]{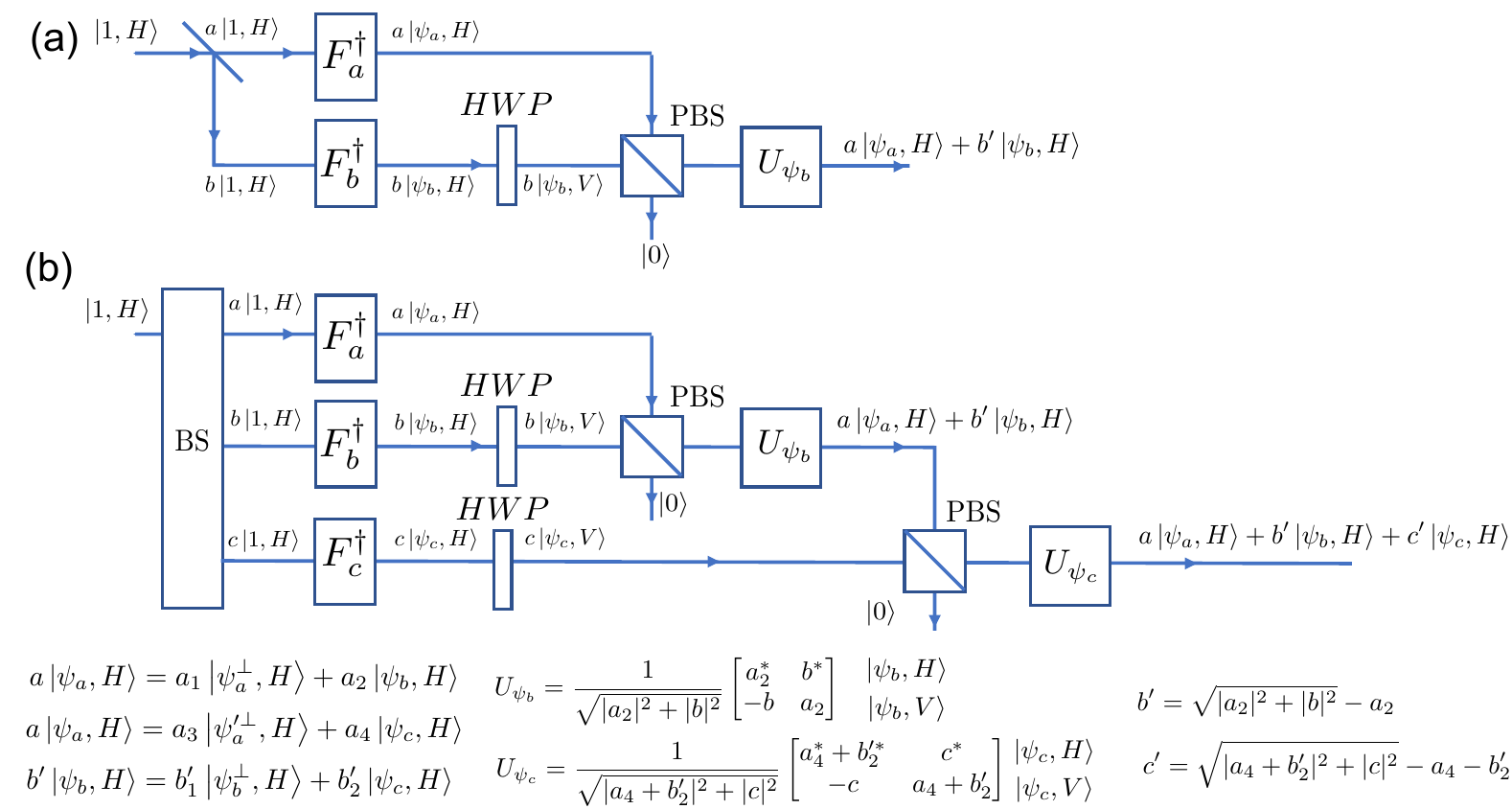}
\caption{Generation of a superposition of (a) two and (b) three nonorthogonal spatial modes using the mode converter $F_{a,b,c}$, which transforms the nonorthogonal spatial modes $\ket{\psi_{a,b,c}}$ into transverse momentum states. The labels $H$ and $V$ denote the two polarizations. The setup includes  polarizing beam splitter (PBS),  beam splitter (BS), and  half-wave plate (HWP). The operation $U_{\psi_{b,c}}$ is applied to the two polarizations of the spatial mode $\ket{\psi_{b,c}}$ to achieve the desired transformation. }
\label{implementation}
\end{center}
\end{figure}

\subsection{Eigenvectors and eigenvalues for multiple compact sources with varying position distributions}\label{SI:eigentask_multiple}

\begin{figure}[!tb]
\begin{center}
\includegraphics[width=1\columnwidth]{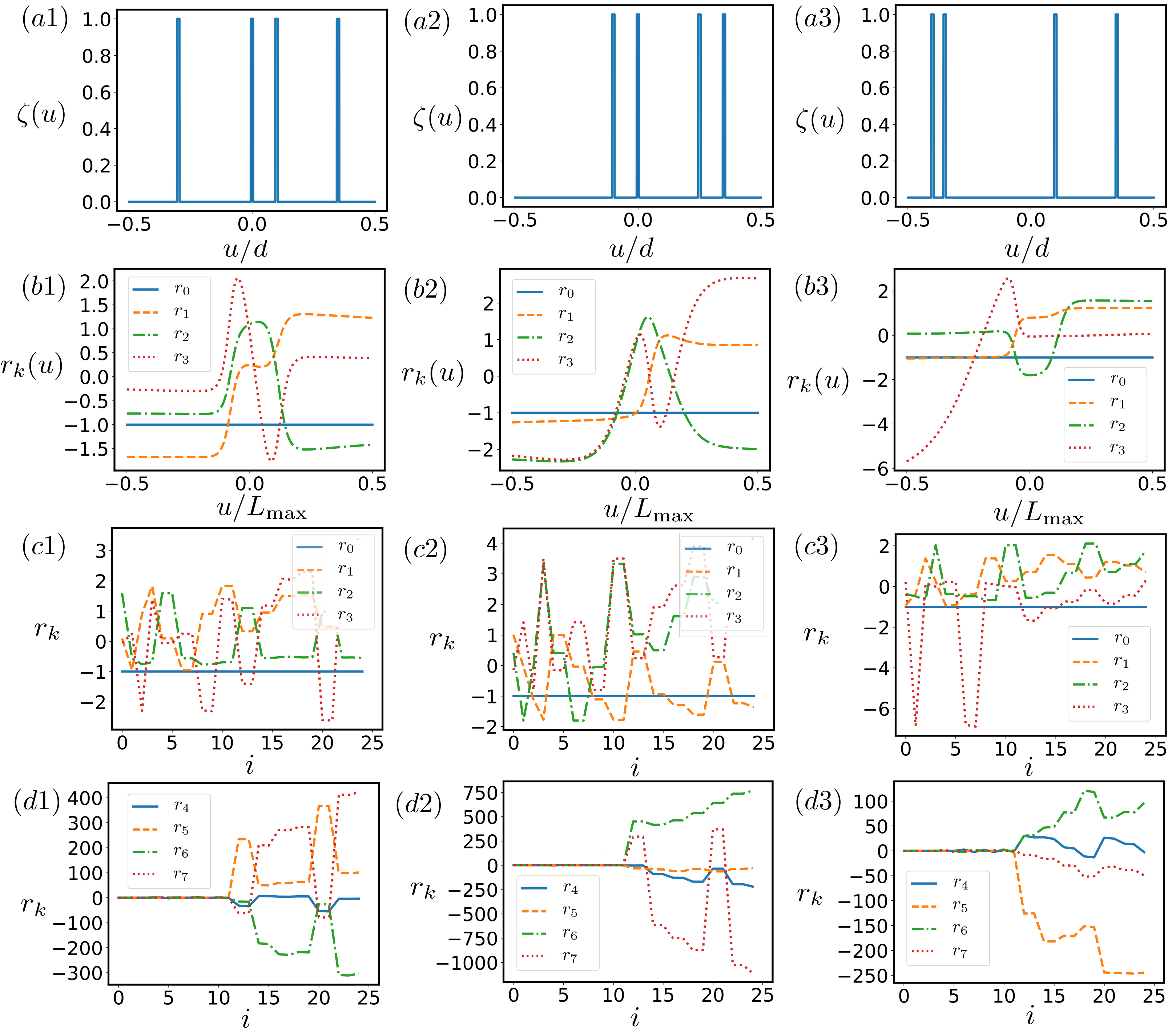}
\caption{\yk{(a1)–(a3) Different positions of the four compact sources, indicated by $\zeta(u)$. The $x$-axis indicates the position $u$ on the image plane.
(b1)–(b3) The first four eigenvectors $r_k$ for direct imaging. The $x$-axis indicates the position $u$ on the detection plane. 
(c1)–(c3), (d1)–(d3) The first eight eigenvectors $r_k$ for the orthogonalized SPADE method. The outcomes for the orthogonalized SPADE method are labeled in the following order: $P_{01}, \cdots, P_{04}, P_{0,1,+}, P_{0,1,-}, \cdots, P_{0,4,+}, P_{0,4,-}, P_{1,1,+}, P_{1,1,-}, \cdots, P_{1,4,+}, P_{1,4,-}, \dots$, where the meaning of each subscript in $P_{l,j,\pm}$ is defined in Eq.~\ref{eq:oSPADE_P}.
We set $\alpha = 10^{-1}$ for these figures. The centroids of all compact sources are confined within $[-d/2, d/2]$ with $d = 10$, the width of the PSF is $\sigma = 1$, and the size of each source is $\alpha\sigma = 0.1$. The region $[-L_{\text{max}}/2, L_{\text{max}}/2]$ on the image plane is detected with $L_{\text{max}} = d + 10\sigma$.
 } }
\label{fig:mutiple_eigenvector}
\end{center}
\end{figure}

\begin{figure}[!tb]
\begin{center}
\includegraphics[width=0.4\columnwidth]{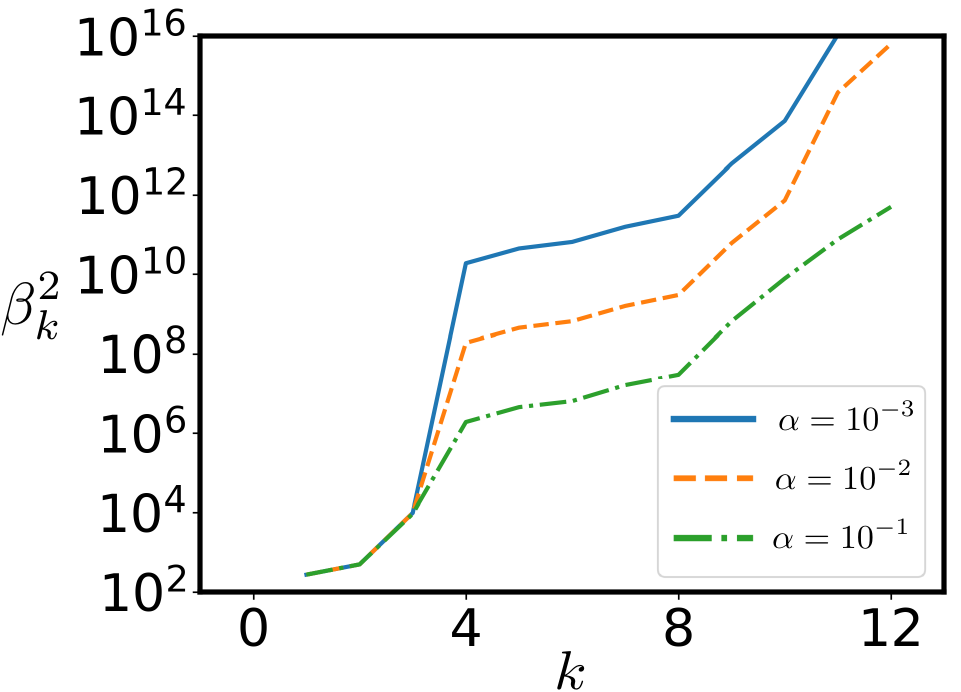}
\caption{\yk{Eigenvalues $\beta_k^2$ for various $k$ and different $\alpha$ using the orthogonalized SPADE method to image four compact sources. The simulation setup is identical to that in Fig.~\ref{fig:mutiple_eigenvector}(a3). The value $\beta_0^2 = 0$ is omitted from the plot.}
 } 
\label{fig:mutiple_eigenvalue}
\end{center}
\end{figure}

\yk{

In the case of imaging multiple compact sources, the eigentasks generally take on very diverse forms in practical scenarios. In fact, the form of the eigentasks is strongly influenced by the practical imaging model and the prior knowledge, such as the positions of the individual sources. As previously noted, this is an important aspect of the problem: when the number of samples is finite, it is unclear which features can be accurately estimated, and identifying these features is nontrivial. This is crucial because, in downstream analysis, it is essential to know which measured features can be reliably used.

We present the eigentasks corresponding to three different configurations of four compact sources in Fig.  \ref{fig:mutiple_eigenvector}(a1)-(a3). The numerical simulation procedure follows that described in Sec. \ref{SI:numerics} of the Supplemental Material: prior information is obtained by randomly generating a set of source distributions, which are then used to compute the sampled values of $d_n$ and $g_{mn}$. We assume the presence of four compact sources, each confined to a region indicated by the function $\zeta(u)$ in the figure. Note that the intensity distributions within each $\zeta(u) = 1$ region are randomly generated; we are not imaging four simple sources with flat intensity profiles. Instead, each region where $\zeta(u) = 1$ may exhibit more complex internal structure, while any region with $\zeta(u) = 0$ has zero intensity.

Let us first consider the direct imaging method, as shown in Fig.~\ref{fig:mutiple_eigenvector}(b1)–(b3). The eigenvectors $r_k$, obtained by solving the generalized eigenvalue equation $V r_k = \beta_k^2 G r_k$ in Eq. \ref{SI_eq:eigenproblem}, are displayed in the figure. In the direct imaging method, each measurement outcome corresponds to the probability of detecting a photon at a position $u$ on the image plane.  The $k$-th eigentask is given by $\sum_i r_k(u_i) P(u_i)$, where $P(u_i)$ is the probability of detecting a photon at position $u_i$. It is evident that the eigentasks are strongly influenced by the positions of the compact sources.

We next examine the eigentasks of the orthogonalized SPADE method. From Fig.~\ref{fig:mutiple_eigenvector}(c1)–(c3), we observe the first four eigenvectors of the orthogonalized SPADE method. As discussed in the main text, for four compact sources, we expect four eigentasks corresponding to the Rayleigh-resolvable features. In general, these eigentasks incorporate all measurement outcomes of the constructed orthogonalized SPADE method, particularly including the first few outcomes that scale as $\Theta(1)$.
From Fig.~\ref{fig:mutiple_eigenvector}(d1)–(d3), we observe the next four eigenvectors of the orthogonalized SPADE method, which correspond to the sub-Rayleigh features. Note that the eigenvectors $r_k$ are normalized such that $r_{k_1}^T D r_{k_2} = \delta_{k_1k_2}$. Consequently, some components of $r_k$ can take very large values because certain elements of $D$ in the orthogonalized SPADE method can be quite small, owing to the $\alpha^{2n}$ scaling in $D$. A clear characteristic of these eigentasks is that they have vanishing components in the first few outcomes. This is expected because the construction of these eigentasks explicitly cancels the lower-order terms in $\alpha$, as shown in Eq.~\ref{eq:oSPADE_P}, and we therefore anticipate corresponding structures in the eigenvectors $r_k$. Direct computation of the eigenvalue problem confirms this expectation.
Clearly, the eigentasks exhibit complex structures that are strongly influenced by the positions of the compact sources. This complexity becomes increasingly important as the number of compact sources grows, since the prefactors accompanying the $\alpha$-dependent terms also affect the thresholds, especially when $\alpha$ is not very small. Selecting eigentasks along with their corresponding thresholds is therefore crucial for the downstream analysis of features extracted from the images.

As shown in Fig.~\ref{fig:mutiple_eigenvalue}, we plot the eigenvalues $\beta_k^2$ obtained using the orthogonalized SPADE method to image four compact sources. These results reveal that although, when $\alpha$ is sufficiently small, the $\alpha$-scaling dominates the behavior of $\beta_k^2$, for the multiple compact source problem the prefactors can also lead to an increase in $\beta_k^2$ at higher-order eigenvalues. See Sec.\ref{SI:prefactor} of the Supplemental Material for a more systematic study of these prefactors.
In practice, we view the multiple compact source model as an approximation of more general sources exhibiting clustered structures, or as a potential approach to tackle general imaging problems after further development of the orthogonalized SPADE method, which aims to resolve fine details of the source even when its overall extent exceeds the Rayleigh limit. This suggests that the number of compact sources can be large in practical tasks.
It is therefore essential to identify the eigentasks that can be reliably estimated given a finite sample size in such scenarios involving multiple compact sources. In general, these eigentasks incorporate contributions from all the compact sources, determined by the prior information.

}

\section{Details of numerical calculations}\label{SI:numerics}

\begin{figure}[!b]
\begin{center}
\includegraphics[width=1\columnwidth]{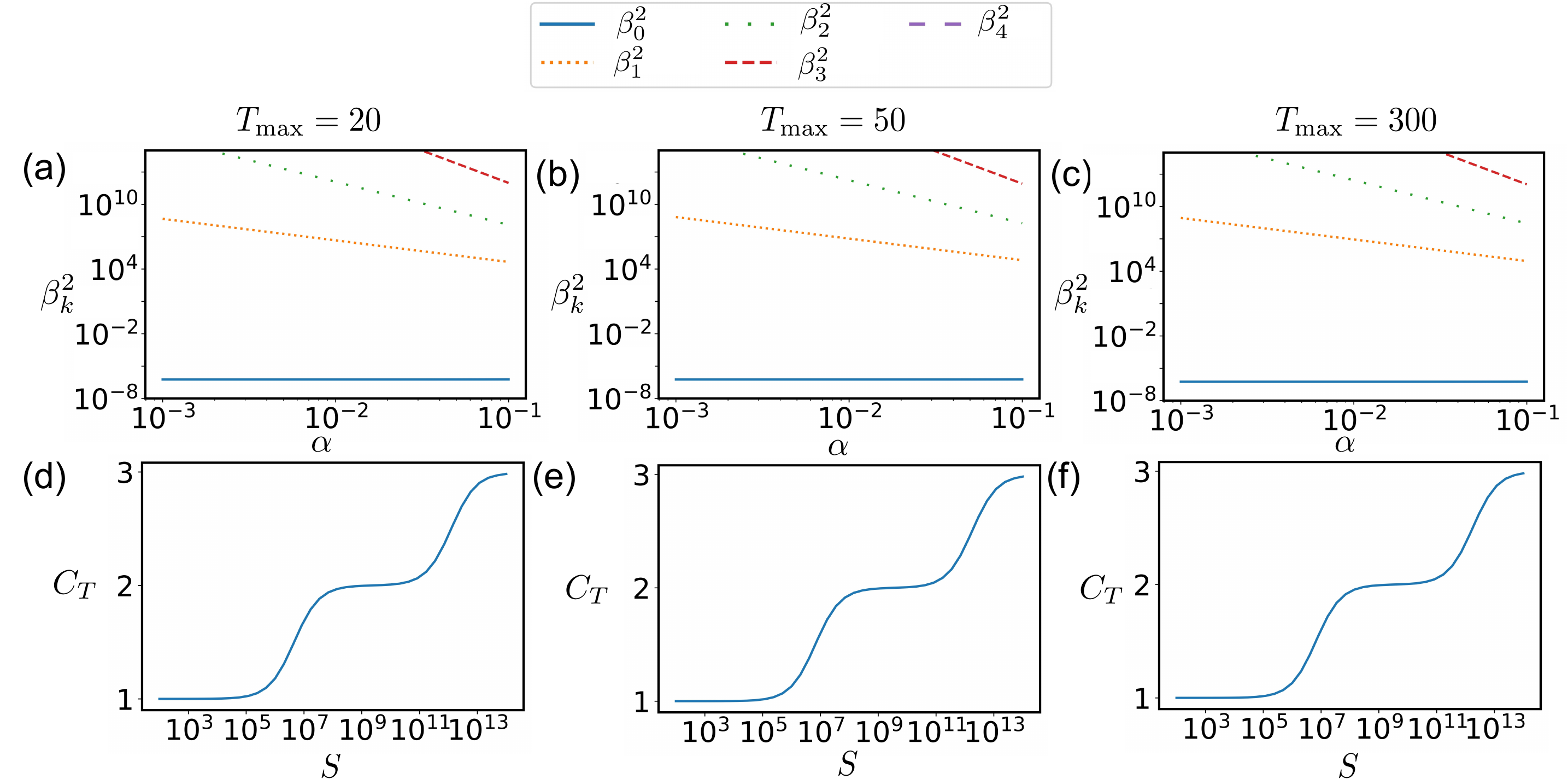}
\caption{Robustness of the numerical calculation for direct imaging with different discretization $T_{\text{max}}$. The calculation is done for one compact source, and for (d)-(e) we choose $\alpha=10^{-2}$. (Note in this subsection, we include $\beta_0^2$ in the  plot for $\beta_k^2$ as a function of $\alpha$, which numerically appears slightly above zero due to finite numerical precision. This discrepancy could be minimized by increasing the calculation's precision. Additionally, we do not interpolate values of $\beta_k^2$ above $10^{15}$ with a straight line as in the main text; instead, we introduce a direct cutoff at $10^{15}$, beyond which numerical calculations lack sufficient precision.) } 
\label{robustness_direct_imaging}
\end{center}
\end{figure}

\begin{figure}[!tb]
\begin{center}
\includegraphics[width=1\columnwidth]{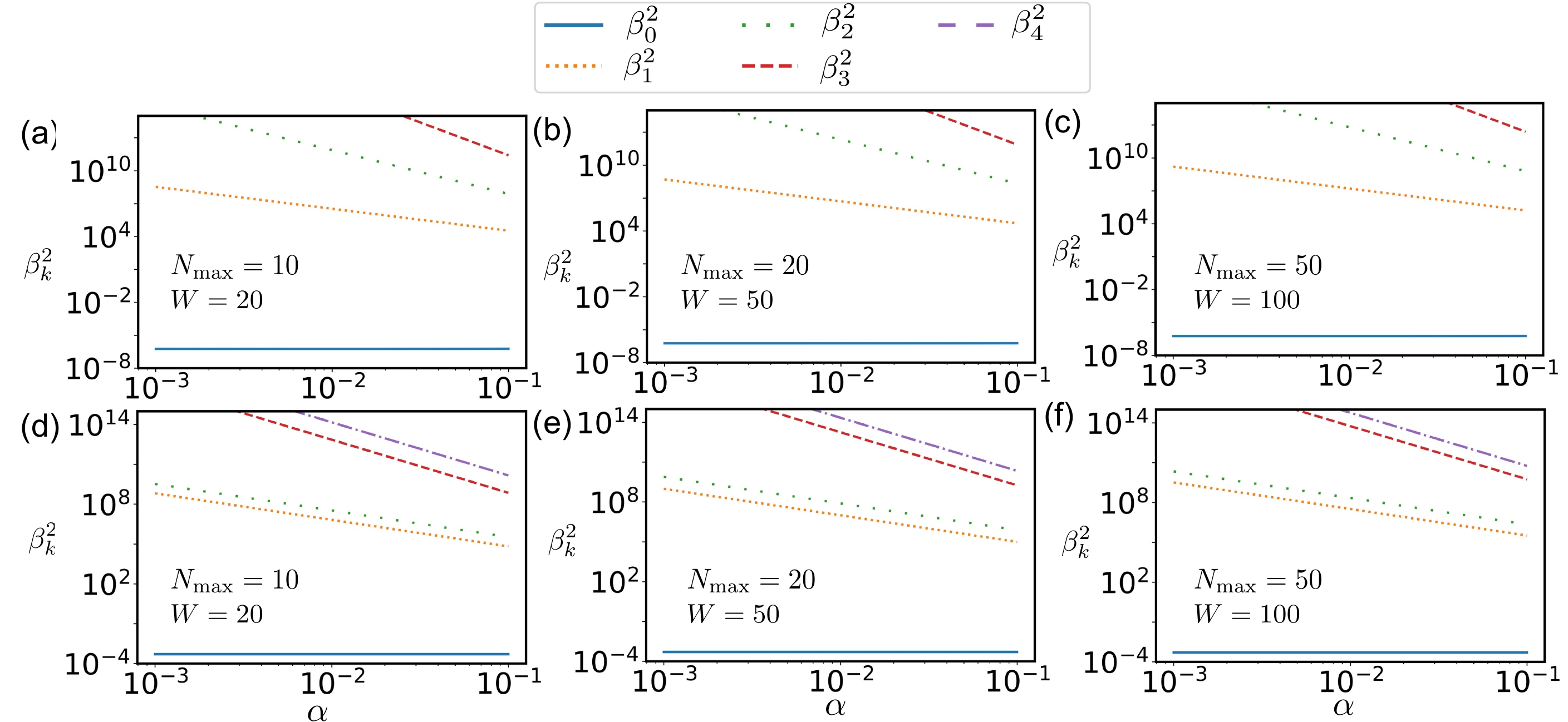}
\caption{
The numerical calculation of $\beta_k^2$  performed using various randomly generated sets of prior information, with different values for $N_{\text{max}}$ and $W$. We analyze the imaging of a single compact source using direct imaging and the SPADE method. 
$\beta_4^2$ exceeds the cutoff at $10^{15}$ for (a)-(c). In the main text, this line is interpolated using calculations for $\alpha > 0.1$.}
\label{gd_generation_beta_k}
\end{center}
\end{figure}

\begin{figure}[!tb]
\begin{center}
\includegraphics[width=1\columnwidth]{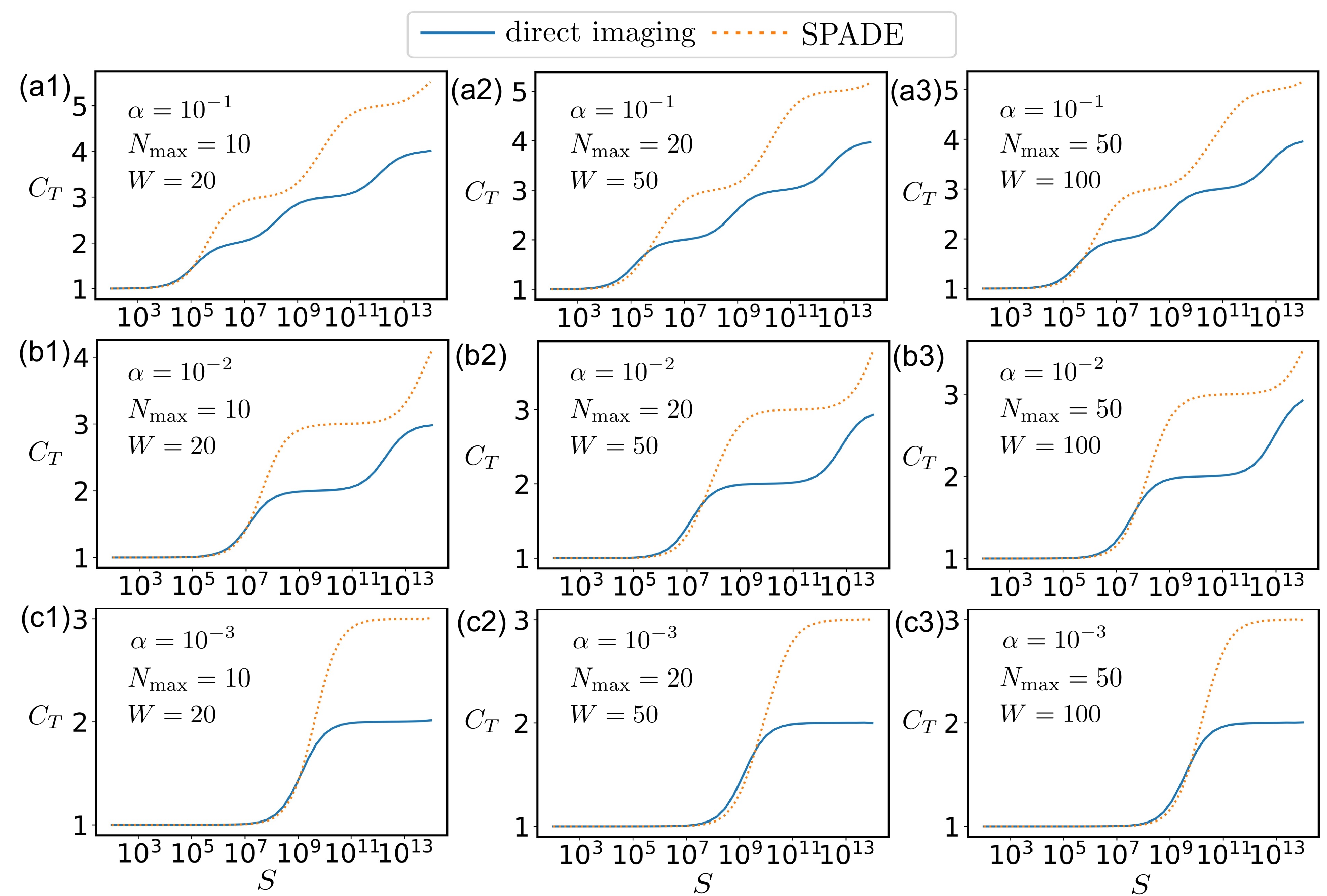}
\caption{The numerical calculation of REC $C_T$ as a function of $S$  performed using various randomly generated sets of prior information, with different values for $N_{\text{max}}$ and $W$. We analyze the imaging of a single compact source using direct imaging and the SPADE method.  } 
\label{gd_generation_CT}
\end{center}
\end{figure}

\begin{figure}[!htb]
\begin{center}
\includegraphics[width=1\columnwidth]{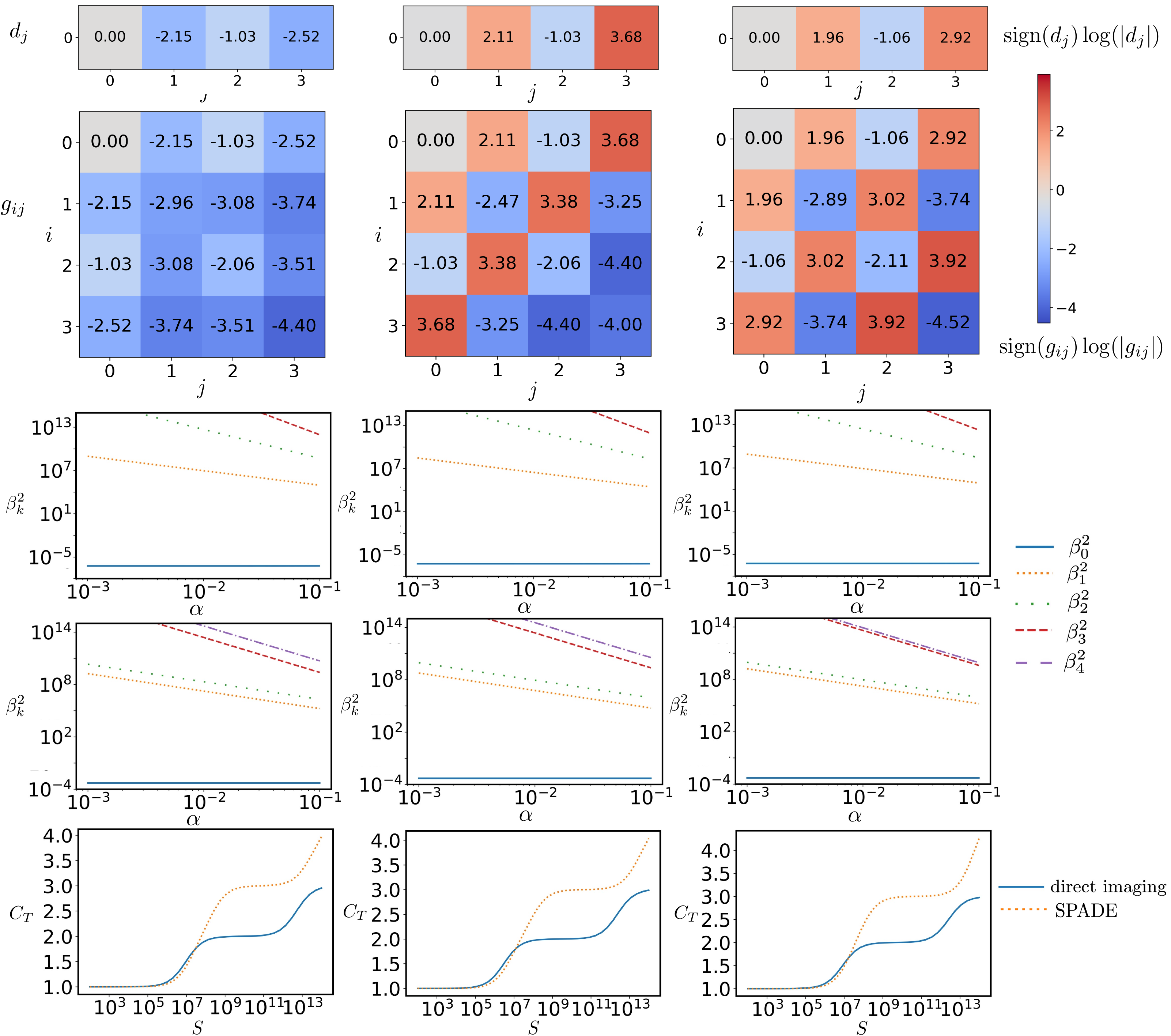}
\caption{The numerical calculation of REC $C_T$ as a function of $S$ when $\alpha=10^{-2}$ and $\beta_k^2$ as a function of $\alpha$. The calculation is performed using three different prior information randomly generated when $N_{\text{max}}=10$, $W=20$. The first few elements of $d,g$ for these three different prior information are given. We analyze the imaging of a single compact source using direct imaging and the SPADE method.  } 
\label{different_gd}
\end{center}
\end{figure}

In our numerical calculations for the direct imaging case, we discretize the spatial coordinate for analysis, which is reasonable since, in practice, direct imaging involves pixels of finite size. Assuming the centroid of all compact sources or point sources is within the interval $[-d/2, d/2]$, we extend this range to $[-d/2 - 5\sigma, d/2 + 5\sigma]$ for the PSF given by $\psi(u) = \frac{\exp(-u^2/4\sigma^2)}{(2\pi\sigma^2)^{1/4}}$. Because of the exponential decay of the PSF, this cutoff is expected to have a negligible impact on the results. We then select $T_{\text{max}}$ points at $X_1,X_2,\cdots,X_{T_{\text{max}}}$ uniformly distributed across this interval, forming the set of POVMs $\{M_i\}_{i=1,2,\dots,T{\text{max}}}$, where each $M_i = \int_{X_i - l/2}^{X_i + l/2} dx \ket{x}\bra{x}$ and $l = (d + 10\sigma)/T_{\text{max}}$. In the calculations presented in the main text, we use $T_{\text{max}}=50$, as further increasing $T_{\text{max}}$ does not show obvious difference. A plot illustrating the convergence of our calculations for different $T_{\text{max}}$ values confirms the stability of the results. Using the imaging of a single compact source as an example (Fig.~\ref{robustness_direct_imaging}), we show that varying $T_{\text{max}}$ does not affect the scaling behavior of $\beta_k^2$ with $\alpha$ or the stepwise increase of $C_T$ with $S$. In theory, $C_T$ should increase slightly with larger $T_{\text{max}}$ values due to the greater number of independent pixels; however, this effect is minimal in the numerical calculation. This is likely because, for compact source imaging, the information contained in the nearby pixel is largely similar.


\begin{table}[bt]
\centering
\begin{tabular}{c|ccccc}
  \hline
  $m\backslash l$ & 0 & 1 & 2 & 3 & 4 \\
  \hline
  0 & 1 & 0 & 0 & 0 & 0 \\
  1 & 0 & $\frac{1}{2}\alpha$ & 0 & 0 & 0  \\
  2 & $-\frac{1}{8}\alpha^2$ & 0 & $\frac{1}{4\sqrt{2}}\alpha^2$ & 0 & 0  \\
  3 & 0 & $-\frac{1}{16}\alpha^3$ & 0 & $\frac{1}{8\sqrt{6}}\alpha^3$ & 0 \\
  4 & $\frac{1}{128}\alpha^4$ & 0 & $-\frac{1}{32\sqrt{2}}\alpha^4$ & 0 & $\frac{1}{32\sqrt{6}}\alpha^4$ \\
  \hline
\end{tabular}
\caption{$a_{ml}=\bra{\psi^{(m)}}\ket{b_l}$ for the coefficient Eq.~\ref{eq:aml}.} 
\label{aml} 
\end{table}

For imaging of one compact source using the SPADE method \cite{zhou2019modern}, assume the PSF is again given by $\psi(u)={\exp(-u^2/4\sigma^2)}/{(2\pi\sigma^2)^{1/4}}$. 
Using $\rho=\sum_{m,n=0}^\infty x_{m+n}\ket{\psi^{(m)}}\bra{\psi^{(n)}}$,
we will construct the orthonormal measurement basis $\{\ket{b_l}\}_l$ such that
\begin{equation}\label{eq:aml}
a_{ml}=\bra{\psi^{(m)}}\ket{b_l}=\left\{
\begin{array}{cc}
=0    &  m\leq l-1\\
\neq 0     &  m\geq l
\end{array}\right.
\end{equation}
We can explicitly calculate the coefficients $a_{ml}$, the first few coefficients is shown in Table~\ref{aml} as examples.
If we define the measurement as  in Eq.~\ref{ref5_POVM}, the probability distribution of getting $\ket{b_0}$, $\ket{\phi_{k\pm}}$, $k=0,1,2,\cdots$ is given by
\begin{equation}
\begin{aligned}
&P_0=\frac{1}{2}\sum_{m,n=0}^\infty x_{m+n} a_{m,0}^*a_{n,0},\\
&P_{k,+}=\frac{1}{4}\sum_{m,n=0}^\infty x_{m+n}(a_{m,k}^*a_{n,k}+a_{m,k+1}^*a_{n,k+1}+a_{m,k+1}^*a_{n,k}+a_{m,k}^*a_{n,k+1}),\\
&P_{k,-}=\frac{1}{4}\sum_{m,n=0}^\infty x_{m+n}(a_{m,k}^*a_{n,k}+a_{m,k+1}^*a_{n,k+1}-a_{m,k+1}^*a_{n,k}-a_{m,k}^*a_{n,k+1}).
\end{aligned}\end{equation}
By comparing this concrete form of probability with the abstract form used earlier, $P_0=\sum_{m=0}^\infty c_m x_m\alpha^m$, $P_{k,\pm} = \sum_{m \geq 2k} c_{km}^{\pm} x_m \alpha^m$ as in Eq.~\ref{eq:P_superresolution2}, we can determine the individual coefficients $c_{nm}^\pm$.

\yk{
To perform the numerical calculations for both single and multiple compact sources, we randomly generate prior knowledge $p(\vec{x})$ over the input moment vectors $\boldsymbol{\theta} = \vec{x} = [x_0, x_1, x_2, \dots]$. This is done by first generating a set of $W$ one-dimensional images, each of which yields a corresponding moment vector $\vec{x}(\omega)$ that serves as a  sample of input parameters, where $\omega$ indexes each generated vector. The prior distribution $p(\vec{x})$ is then defined by the empirical statistics of this ensemble.
This prior is used to compute the quantities $d_n = \int d\vec{x} p(\vec{x}) x_n$ and $g_{n_1n_2} = \int d\vec{x} p(\vec{x}) x_{n_1} x_{n_2}$, which enter subsequent stages of the calculation. Numerically, these matrix elements are estimated by averaging over the sampled moment vectors:
$d_n = \frac{1}{W} \sum_w x_n(w)$ and $g_{n_1n_2} = \frac{1}{W} \sum_w x_{n_1}(w) x_{n_2}(w)$.
This sampling-based approach enables efficient incorporation of general prior knowledge into the numerical evaluation of the quantities required for our analysis.

To generate these images, we assume that the $q$th compact source is confined to a region of size $L_q$, which is divided into $N_{\text{max}}$ equal segments. A point source is placed at the center of each segment, with an assigned intensity $I_m$ drawn uniformly from the interval $[0, 1]$. As $N_{\text{max}}$ becomes sufficiently large, this setup provides high degrees of freedom to model generally distributed compact sources. After assigning random intensities $I_m$, we normalize them by replacing each $I_m$ with $I_m / (\sum_m I_m)$ so that the total intensity sums to one. Because each point source has a known coordinate, we can compute the corresponding moment vector $\vec{x}(w)$ for each generation.
By repeating this procedure, we obtain a set of $W$ moment vectors $\vec{x}(w)$ for use in statistical averaging. In the calculations presented in the main text, we set $N_{\text{max}} = 20$ and $W = 50$, meaning each compact source is modeled by 20 evenly spaced pixels, and 50 different moment vectors are generated to compute $d_n$ and $g_{n_1n_2}$.
For scenarios involving multiple compact sources whose centroids are separated by a distance $L$ and individual sizes $L_q \ll L$, we generate nonzero intensity only within each compact region of size $L_q$. The intensities are normalized such that the total intensity across all compact sources sums to one.}

We emphasize that these randomly generated sources serve as a simple example to simulate prior knowledge and to demonstrate our approach. However, the discussion in the main text applies to any general prior information about the moments $\vec{x}$. It is important to note that a set of images is needed to generate the matrices $g_{n_1n_2}$ and $d_n$; using only one image would make $g_{n_1n_2}$ a rank-one matrix, implying that the moment of the image is already known, hence no additional information can be obtained through measurement.

We examine the impact of varying $M$ and $W$ on both the scaling of $\beta_k^2$, shown in Fig.~\ref{gd_generation_beta_k}, and the $C_T$ vs. $S$ plot, shown in Fig.~\ref{gd_generation_CT}. As we increase $N_{\text{max}}$ and $W$, the scaling of $\beta_k^2$ with $\alpha$  and the stepwise increase of $C_T$ as a function of $S$ remains the same. However, a careful reader may observe that as $N_{\text{max}}$ and $W$ grow,  $\beta_k^2$ exhibit a slight increase, as can be seen by comparing Fig.~\ref{gd_generation_beta_k}(a) with Fig.~\ref{gd_generation_beta_k}(c). This trend is also evident in Fig.~\ref{gd_generation_CT}(c1) and Fig.~\ref{gd_generation_CT}(c3), where the threshold for each stepwise increment rises.
This effect occurs because the prior knowledge influences the prefactor of the eigenvalues $\beta_k^2$. Numerically, we observe the prefactors of $\beta_k^2$ tend to increase as the eigenvalues of $g_{n_1n_2}$ decrease. If $g_{n_1n_2}$ is full rank, the prior information is not expected to influence the scaling of $\beta_k^2$.

The method described above generates a unique random prior knowledge for each execution, resulting in a different set of $g$ and $d$ for every run. 
In all the discussions in this work, the plots are generated using a single instance of randomly generated prior knowledge.
To verify that our method works universally for any prior information, we run the code multiple times and observe that each instance of prior knowledge produces a similar scaling of $\beta_k^2$ with $\alpha$, as expected. Consequently, $C_T$ consistently displays a stepwise increase as a function of $S$. In Fig.~\ref{different_gd}, we present results for three randomly generated instances of prior knowledge using this method. In these three instances, $g$ and $d$ differ due to the difference of prior information.
While the eigenvalues and $C_T$ vary with the prior knowledge, the scaling behavior of $\beta_k^2$ remains consistent, ensuring the stepwise increase of $C_T$ is always observed.

\section{Discussion about the prefactors}\label{SI:prefactor}
\yk{
In this section, we discuss the prefactors of $\beta_k^2$, complementing the main text, which primarily focuses on the scaling of $\beta_k^2$ with respect to $\alpha$ in different scenarios. For convenience, we summarize the eigenvalue problems from Eq.~\ref{SI_eq:eigenproblem}, Eq.~\ref{SI_eq:eigentask}, and Eq.~\ref{SI_eq:eigentask_SPADE} below
\begin{equation}\begin{aligned}
&V=D-G,\quad Vr_{k}=\beta_k^2 G r_{k},\quad Gr_{k}=\lambda_kDr_{k},\quad \lambda_k=(1+\beta_k^2)^{-1},\\
&D=\sum_{n=0}^\infty d_n\alpha^n \mathcal{C}_n,\quad G=\sum_{n=0}^\infty\alpha^n\sum_{i+j=n}g_{ij}  C_i C_j^T.
\end{aligned}\end{equation}
In general, it is difficult to extract the exact values of the prefactors analytically. However, we can analyze a special case that offers clear intuition and, in some situations, serves as a good approximation to the actual prefactor values in $\beta_k^2$. We redefine
\begin{equation}\label{SI_eq:mathcal_C}
\begin{aligned}
&W=D^{-1/2}GD^{-1/2},\quad Wy_{k}=\lambda_ky_{k},\quad y_k=D^{1/2}r_k,\\
&W=\sum_{n=0}^\infty\alpha^n\sum_{i+j=n}g_{ij}  (D^{-1/2}C_i)(D^{-1/2}C_j)^T=\mathbb{C}^Tg\mathbb{C},\\
&\mathbb{C}=[D^{-1/2}C_0,\alpha D^{-1/2}C_1,\alpha^2 D^{-1/2}C_2,\cdots]\\
&\mathbf{C}gv_k=\lambda_kv_k,\quad \mathbf{C}=\mathbb{C}\mathbb{C}^T
\end{aligned}
\end{equation}
where we use the fact that the eigenvalues of $AB$ and $BA$ are equal \cite{horn2012matrix}. Therefore, the analysis of the eigenvalues $\beta_k^2$ reduces to examining the spectrum of the product of the two matrices $\mathbf{C}g$. In general, this is still difficult to compute. However, in the special case where $\mathbf{C}$ and $g$ commute, we can choose $v_k$ to be the common eigenvectors of both $\mathbf{C}$ and $g$. In this case, the eigenvalues satisfy $\lambda_k = \lambda_k(\mathbf{C}) \lambda_k(g)$, so the spectrum is directly determined by the eigenvalues of $\mathbf{C}$ and $g$.

Of course, in general, $\mathbf{C}$ and $g$ do not commute with each other. However, in the following, we numerically show that the product of the eigenvalues of $\mathbf{C}$ and $g$ provides a good approximation to the actual values of $\lambda_k$ in some examples.
As shown in Fig.~\ref{fig:prefactor_sample}, we plot the eigenvalues of $g$ and $\mathbf{C}$ for both the direct imaging and SPADE methods. The simulation consists of $10$ successive runs of such simulations, where each run uses a randomly generated prior, as explained in Sec.~\ref{SI:numerics} of the Supplemental Material. For each run, we randomly generate a set of $W=5$ one-dimensional images and use the corresponding moment vectors to calculate $d$ and $g$ by averaging, as described in Sec.~\ref{SI:numerics} of the Supplemental Material.
It is clear that the eigenvalues $\lambda_k(g)$ of the matrix $g$ fluctuate randomly. The eigenvalues $\lambda_k(\mathbf{C})$ of the matrix $\mathbf{C}$ for both the direct imaging and SPADE methods used for imaging a single compact source are shown for each run. We observe that the eigenvalues $\lambda_k(\mathbf{C})$ do not vary much under the randomly generated priors considered here.
We also plot the prefactor $\tau_k = \lambda_k / \alpha^{n_k}$, where $n_k = [0, -2, -4, -6, \cdots]$ for direct imaging and $n_k = [0, -2, -2, -4, -4, \cdots]$ for SPADE method. We find that the variation of $\tau_k$ is mainly determined by $\lambda_k(g)$ for a given approach, since $\lambda_k(\mathbf{C})$ does not vary much with the prior considered here; however, $\tau_k$ can differ significantly between approaches due to differences in $\lambda_k(\mathbf{C})$.
Clearly, the variation of $\tau_k$ computed using the approximate $\hat{\lambda}_k = \lambda_k(\mathbf{C}) \lambda_k(g)$ for both direct imaging and SPADE closely follows that obtained using the actual eigenvalues $\lambda_k$. We find that this approximation is reasonably good and can serve as an intuitive guide for understanding the spectrum of $\beta_k^2$ in the examples considered here.
Note that $\lambda_k = (1 + \beta_k^2)^{-1}$, so when $\alpha \ll 1$, $\tau_{k \geq 1}^{-1}$ is roughly the prefactor of $\beta_k^2$. We choose to plot the prefactors of $\lambda_k$ because $\tau_k$ exhibits the same trend as the eigenvalues $\lambda_k(g)$, while the prefactors of $\beta_k^2$ show the inverse trend. Thus, it is easier to observe the behavior of the prefactors using $\lambda_k$.

Note $g_{ij} = \int d\vec{x}\, p(\vec{x}) x_i x_j$ is the matrix that directly quantifies our prior knowledge of the moments of the images. The matrix $\mathbf{C}$ also incorporates some prior information, but further includes $\alpha$ and the vectors $C_i$, which are directly related to the measurement strategy. In other words, different approaches, such as direct imaging and SPADE methods, have the same $g$ but different $\mathbf{C}$. In this sense, we can intuitively think of $g$ as quantifying the prior information, while $\mathbf{C}$ quantifies the measurement strategy.
Note that $\mathbf{C}$ also depends on the prior information through its dependence on $D$, but numerical analysis above suggests that this dependence may not be very strong in some cases. Thus, this analysis offers an intuitive and heuristic way to understanding the behavior of the prefactors, highlighting the two main factors that influence $\beta_k^2$: the measurement strategy, which primarily determines $\mathbf{C}$, and the prior information, which determines $g$.
Note that the vectors $C_i$, which constitute $\mathbf{C}$, are the coefficients of each order term in the series expansion of the probability distribution in powers of $\alpha$. Intuitively, larger eigenvalues of $\mathbf{C}$ indicate that the $C_i$ are more independent, meaning that the probability distributions exhibit more distinct behavior in each coefficient. Consequently, the measurement outcomes may contain more information, as reflected by the larger eigenvalues of $\mathbf{C}$.
Lastly, we emphasize that the approximate value $\hat{\lambda}_k = \lambda_k(g) \lambda_k(\mathbf{C})$ is meant solely as an intuitive tool to understand the behavior of the prefactors of $\beta_k^2$, inspired by a special case that $g$ and $\mathbf{C}$ commute with each other. We do not claim that this approximation is universally valid for all prior information and measurement strategies.

}

\begin{figure}[!tb]
\begin{center}
\includegraphics[width=1\columnwidth]{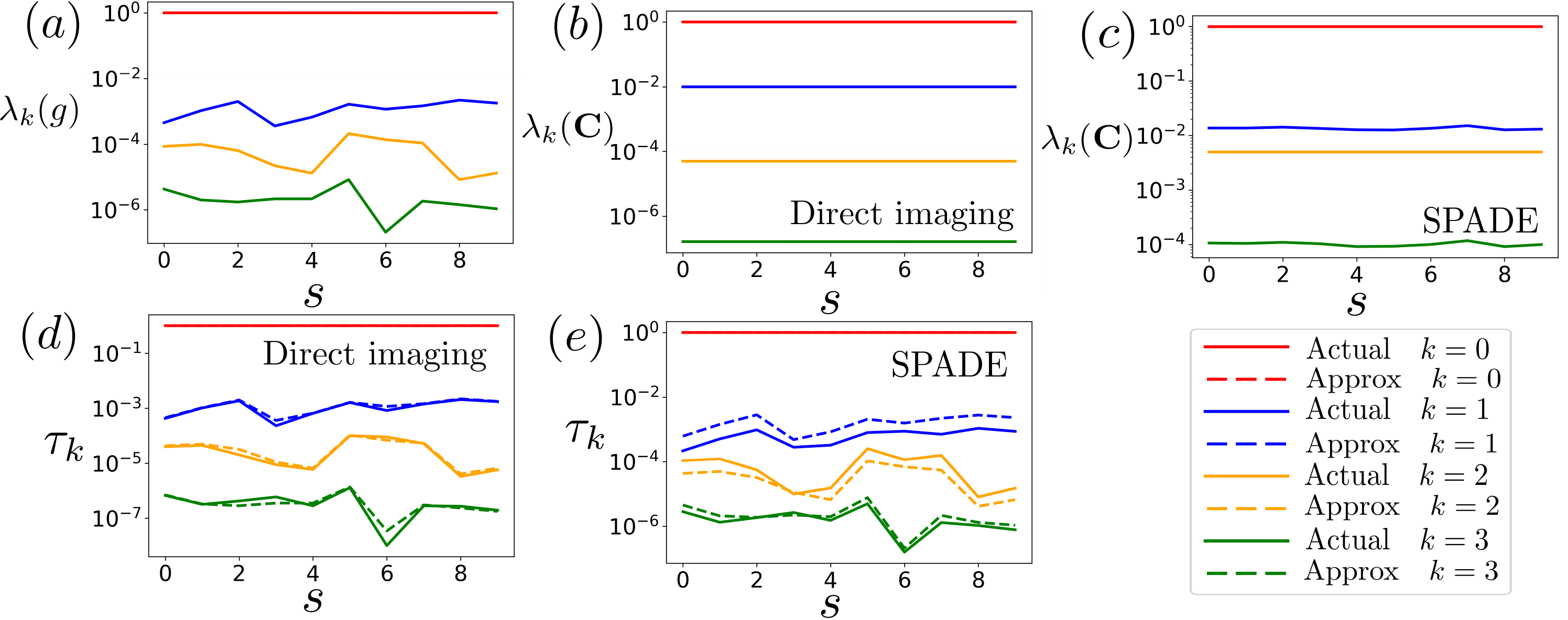}
\caption{\yk{(a) Eigenvalues of $g$. (b)–(c) Eigenvalues of $\mathbf{C}$ for direct (b) and SPADE (c) imaging of a compact source. (d) Prefactors of $\lambda_k$ for direct imaging, defined as $\tau_k = \lambda_k / \alpha^{n_k}$, where $n_k = [0, -2, -4, -6, \cdots]$. (e) Prefactors of $\lambda_k$ for the SPADE method, defined as $\tau_k = \lambda_k / \alpha^{n_k}$ with $n_k = [0, -2, -2, -4, -4, \cdots]$.
We fix $\alpha = 0.1$. All plots are computed over 10 successive runs of the simulation with randomly generated prior distributions, labeled by $s$ (the $s$-th run). Here, $k$ labels the $k$-th eigenvector. The dashed lines represent the prefactors computed using the approximate value $\hat{\lambda}_k = \lambda_k(g) \lambda_k(\mathbf{C})$, while the solid lines represent the prefactors computed using the actual eigenvalues $\lambda_k$.}
} 
\label{fig:prefactor_sample}
\end{center}
\end{figure}

\section{Demonstrative example}\label{SI:demonstrative_example}

In this section, we provide the simulation details of the demonstrative examples that illustrate the use of the quantum learning approach and our new orthogonalized SPADE method to perform a distinguishing task with a highly complex structure, inspired by the pattern of a QR code and representative of challenges encountered in practical applications. We begin with the imaging of a general source beyond the Rayleigh limit, then discuss the imaging of multiple compact sources, and finally highlight the advantages of the quantum learning approach in addressing such complex imaging problems.


\begin{algorithm}[H]
\caption{Demonstrative Example Procedure}
\begin{algorithmic}\label{Algorithm:general}
\STATE \textbf{Input:} Number of pixels $N$, interval size $L$, activation probability $p_0$, number of training samples per class $W_\text{train}$, number of test samples per class $W_\text{test}$, total detected photons per test image $S$, truncation order $\mathcal{K}$
\STATE Define indicator function $\eta(u) \in \{0,1\}$ over $[-L/2, L/2]$ to specify active regions.

\STATE \textbf{Training:}
\FOR{each class (Case 1: active; Case 2: inactive)}
    \FOR{$i = 1$ to $W_\text{train}$}
        \FOR{each pixel $j=1$ to $N$}
            \IF{Case 1 and $\eta(u_j) = 1$}
                \STATE Set $I_i(u_j) = 1$ with probability $p_0$, else $0$
            \ELSIF{Case 2 and $\eta(u_j) = 0$}
                \STATE Set $I_i(u_j) = 1$ with probability $p_0$, else $0$
            \ELSE
                \STATE Set $I_i(u_j) = 0$
            \ENDIF
        \ENDFOR
        \STATE Normalize $I_i(u)$ such that $\sum_{j=1}^N\, I_i(u_j) = 1$, which determines the  nonzero value of $I_i(u_j)$ as $i_0$.
    \ENDFOR
\ENDFOR
\STATE Compute prior matrices $d_k$ and $g_{ij}$ from all training samples, taking the prior to be the uniform distribution over the training set.
\STATE Solve eigenvalue problem (Eq.~\ref{SI_eq:eigenproblem}) to obtain eigenvectors $r_k$ and eigenvalues $\beta_k^2$.
\FOR{each training sample $i$ in both classes}
    \STATE Compute theoretical detection probabilities $P_i(u_j)$.
    \STATE Compute eigentask components $\xi_{ki} = \sum_j r_{kj} P_i(u_j)$.
    \STATE Form truncated eigentask vector $\vec{\xi}_i = [\xi_{0i}, \dots, \xi_{\mathcal{K}i}]$.
\ENDFOR
\STATE Train a logistic regression classifier using $\vec{\xi}_i$ as inputs and class labels as targets.

\STATE \textbf{Testing:}
\FOR{each test sample $i=1$ to $W_{\text{test}}$ in both classes}
    \STATE Generate intensity distribution as in the training stage.
    \STATE Simulate $S$ detection events to obtain empirical probabilities $\hat{P}_i(u_j)$.
    \STATE Compute empirical eigentask vector $\hat{\vec{\xi}}_i$.
    \STATE Classify $\hat{\vec{\xi}}_i$ using the trained classifier.
\ENDFOR
\STATE Repeat testing for multiple independent sets of test samples.
\STATE Compute mean and range of success probability $P_\text{succ}$ as a function of $\mathcal{K}$ and $S$.
\end{algorithmic}
\end{algorithm}

\begin{figure}[!tb]
\begin{center}
\includegraphics[width=0.8\columnwidth]{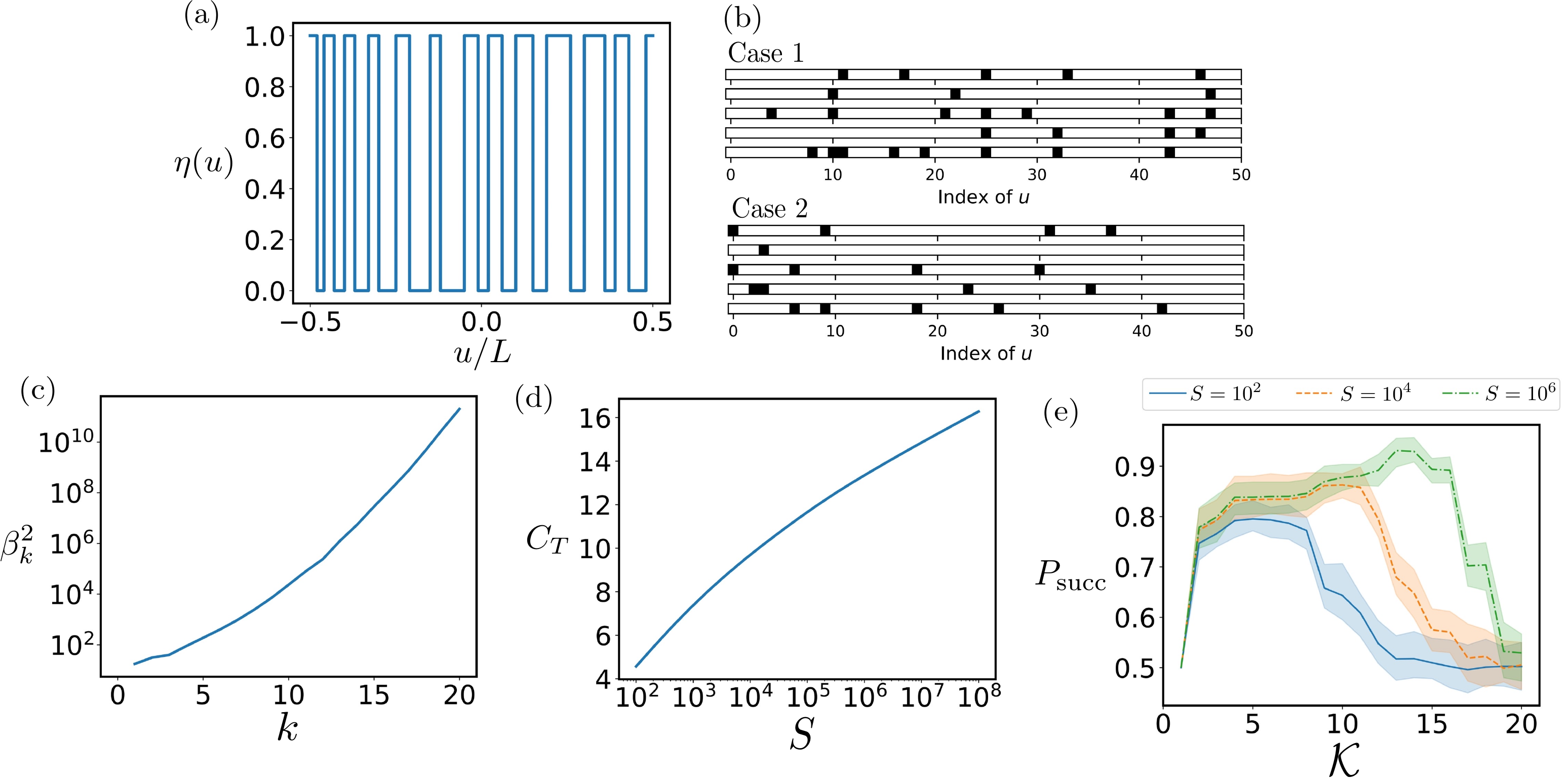}
\caption{\yk{(a) Indicator function $\eta(u)$ used for the simulation here. (b) Examples of intensity distributions randomly generated in the two cases; for demonstration purposes, we choose the number of pixels $N=50$. We set the probability of $I(u) = 1$ to be $p_0 = 0.2$ for each pixel where $\eta(u) = 1$. (c) The eigenvalues $\beta_k^2$ for different orders $k$. In the calculation, we use $N = 200$ pixels and set $L = 10$. A total of $2W_{\text{train}} = 400$ different intensity distributions are generated to compute the prior distribution and the corresponding $g$ and $d$. (d) The total REC $C_T$ as a function of the sample number $S$. 
(e) The success probability $P_{\text{succ}}$ for different sample sizes $S$ as a function of the highest order $\mathcal{K}$ of the eigentask included in the training and testing procedure. We use a test set of $2W_{\text{test}} = 600$ different intensity distributions, with $W_{\text{test}} = 300$ distributions for each case, to calculate the success probability of distinguishing between the two cases. This calculation is repeated 50 times using  $2W_{\text{test}} = 600$ independently generated intensity distributions each time. The line in the figure represents the mean success probability, and the shaded region indicates the maximum and minimum values across the 50 repetitions.}
} 
\label{simulation_example}
\end{center}
\end{figure}

\subsection{Simulation for imaging a source outside the Rayleigh limit}

We consider a class of distinguishing tasks using direct imaging. This serves as an additional simulation example not included in the main text; the simulation presented in the main text is detailed in the next subsection. The pseudocode for this simulation is provided in Algorithm \ref{Algorithm:general} and is further explained below.
The spatial intensity distribution $I(u)$ is defined over the interval $[-L/2, L/2]$ and takes binary values, $I(u) \in \{0,i_0\}$, where $i_0$ is fixed by the normalization condition $\int du\, I(u) = 1$, resembling the structure of a QR code. The density matrix of the state on the imaging plane is given in Eq.~\ref{eq:rho_basic}, which is reproduced below for convenience
\begin{equation}
\rho=\int du du_1 du_2 I(u)\psi(u-u_1)\ket{u_1}\bra{u_2}\psi(u-u_2),
\end{equation}
where the PSF is chosen as $\psi(u) = \exp(-u^2/4\sigma^2)/(2\pi\sigma^2)^{1/4}$ for all simulations in this section, with $\sigma = 1$. We consider the imaging of a source beyond the Rayleigh limit, i.e., $L $ can be greater than $ \sigma$.  For simplicity, the interval is discretized into $N$ equally spaced pixels, with the intensity assumed to be constant within each pixel. 


To generate random instances of $I(u)$, we introduce an indicator function $\eta(u) \in \{0,1\}$ defined over $[-L/2, L/2]$, one example is shown in the Fig.~\ref{simulation_example}(a). The function $\eta(u)$ specifies which spatial regions are ``active" (where $\eta(u) = 1$) and which are ``inactive" (where $\eta(u) = 0$). The generation of $I(u)$ differs depending on the following two cases:

Case 1 (active case): Intensity is allowed to appear only in the regions where $\eta(u) = 1$. That is, for each pixel whose centroid lies in a region where $\eta(u) = 1$, the corresponding value of $I(u)$ is independently set to 1 with activation probability $p_0$, and 0 otherwise. For pixels where $\eta(u) = 0$, we set $I(u) = 0$ deterministically. This simulates a sparse intensity pattern supported only within the $\eta(u)=1$ region.

Case 2 (inactive case): Intensity is allowed only in the complementary region where $\eta(u) = 0$. For each pixel with $\eta(u) = 0$ at its centroid, $I(u)$ is randomly set to 1 with probability $p_0$, and 0 otherwise. For pixels where $\eta(u) = 1$, we set $I(u) = 0$. This case represents an alternative intensity pattern that avoids the $\eta(u) = 1$ regions.

We provide example intensity distributions generated in the two cases in Fig.~\ref{simulation_example}(b).
The goal of the distinguishing task is to determine, based on the observed data on the image plane, whether a given realization of $I(u)$ was generated according to Case 1 or Case 2.


We first generate a training set of $2W_{\text{train}}$ intensity distributions according to the formalism described above, with $W_{\text{train}}$ samples for each case. The corresponding theoretical measurement outcomes are computed without sample noise. To model the task as a learning problem \cite{hu2023tackling}, we incorporate prior knowledge by treating the intensity values at each pixel as the input parameters for each distribution. Using these inputs, we construct the matrices $d_k$ and $g_{ij}$ by averaging over the all the $2W_{\text{train}}$ intensity distributions in the training set. This enables us to compute the eigentasks and their associated eigenvalues $\beta_k^2$.


Once the eigenvectors $r_k$, as defined in Eq.~\ref{SI_eq:eigentask}, are obtained with the prior quantified by $d_i$ and $g_{ij}$, and the pixel-wise detection probability distribution $P_m(u_j)$ for the $m$th intensity distribution is known theoretically, we construct the $k$th eigentask as $\xi_{km} = \sum_j r_{kj} P_m(u_j)$. This yields, for each intensity distribution indexed by $m$, an eigentask vector $\vec{\xi}_m = [\xi_{0m}, \xi_{1m}, \xi_{2m}, \dots, \xi_{\mathcal{K}m}]$. We truncate the eigentasks at order $\mathcal{K}$, which is determined based on the number of detection events $S$—a choice we will justify later. For each of the two classes, we generate $W_{\text{train}}$ such eigentask vectors, forming two datasets that will be used as training inputs for the logistic regression classifier.

We use logistic regression, implemented via the standard \texttt{LogisticRegression} class from the widely used Python package \texttt{scikit-learn}, to perform binary classification. Logistic regression models the probability that a sample belongs to class 1 as $P(y = 1 \mid \vec{x}) = {1}/({1 + \exp(-\vec{w}^T \vec{x})})$, where $\vec{x}$ is the input feature vector and $\vec{w}$ is the weight vector learned during training. A classification decision is made by thresholding this probability at 0.5. In our case, the input features $\vec{x}$ correspond to the eigentask vector $\vec{\xi}_m$, with each component normalized by dividing by its mean absolute value across the training set before being passed to the classifier. Since we know which class each eigentask vector $\vec{\xi}_m$ corresponds to during training, we use this standard python package to fit the weight vector $\vec{w}$ accordingly.

During testing, we again generate intensity distributions from the two classes. For each intensity distribution, we fix the total number of detected photons (sample number) to be $S$. By counting the number of photons detected in each pixel on the image plane, we obtain an empirical probability distribution $\hat{P}_m(u_j)$ for the $m$th intensity distribution. Using this distribution, we compute the corresponding empirical eigentask vector $\hat{\vec{\xi}}_m = [\hat{\xi}_{0m}, \hat{\xi}_{1m}, \hat{\xi}_{2m}, \dots, \hat{\xi}_{\mathcal{K}m}]$.
The choice of truncation order $\mathcal{K}$ depends on the precision with which each eigentask can be estimated. The eigenvalues $\beta_k^2$ characterize the sample size threshold required for accurately estimating each eigentask component; accurate estimation becomes possible once $S \gg \beta_k^2$. In Fig.~\ref{simulation_example}(c), we show the eigenvalues $\beta_k^2$ for the model considered here. A suitable choice of $\mathcal{K}$ satisfies $\beta_\mathcal{K}^2 \sim S$, ensuring that all retained eigentask components can be estimated reliably. The resulting eigentask vectors $\hat{\vec{\xi}}_m$ are then used for classification using the trained logistic regression model.

As in Fig.~\ref{simulation_example}(d), the success probability of distinguishing the two cases is plotted as a function of the number of included eigentasks $\mathcal{K}$ for various sample sizes $S$. As observed, the success probability initially increases with $\mathcal{K}$ and then decreases. This behavior is intuitive: increasing $\mathcal{K}$ allows more eigentasks to be included, capturing more information about the intensity distribution and thus improving classification performance. However, beyond a certain point, the higher-order eigentasks are poorly estimated due to limited sample size $S$. Naively including these noisy components degrades the classifier's performance and ultimately reduces the success probability. As shown in Fig.~\ref{simulation_example}(c), when $S = 10^2$, $10^4$, and $10^6$, we can roughly reliably estimate approximately the first 5, 10, and 14 eigentasks, respectively. As expected, Fig.~\ref{simulation_example}(d) shows that when we naively include the first $\mathcal{K}$ eigentasks, the success probability $P_{\text{succ}}$ peaks at around $\mathcal{K} = 5$, $10$, and $14$, respectively.

Fig.~\ref{different_prior_eigentask} shows that the prior distribution can significantly influence the form of the eigentasks, which in general do not have a simple form. 

\begin{figure}[!tb]
\begin{center}
\includegraphics[width=1\columnwidth]{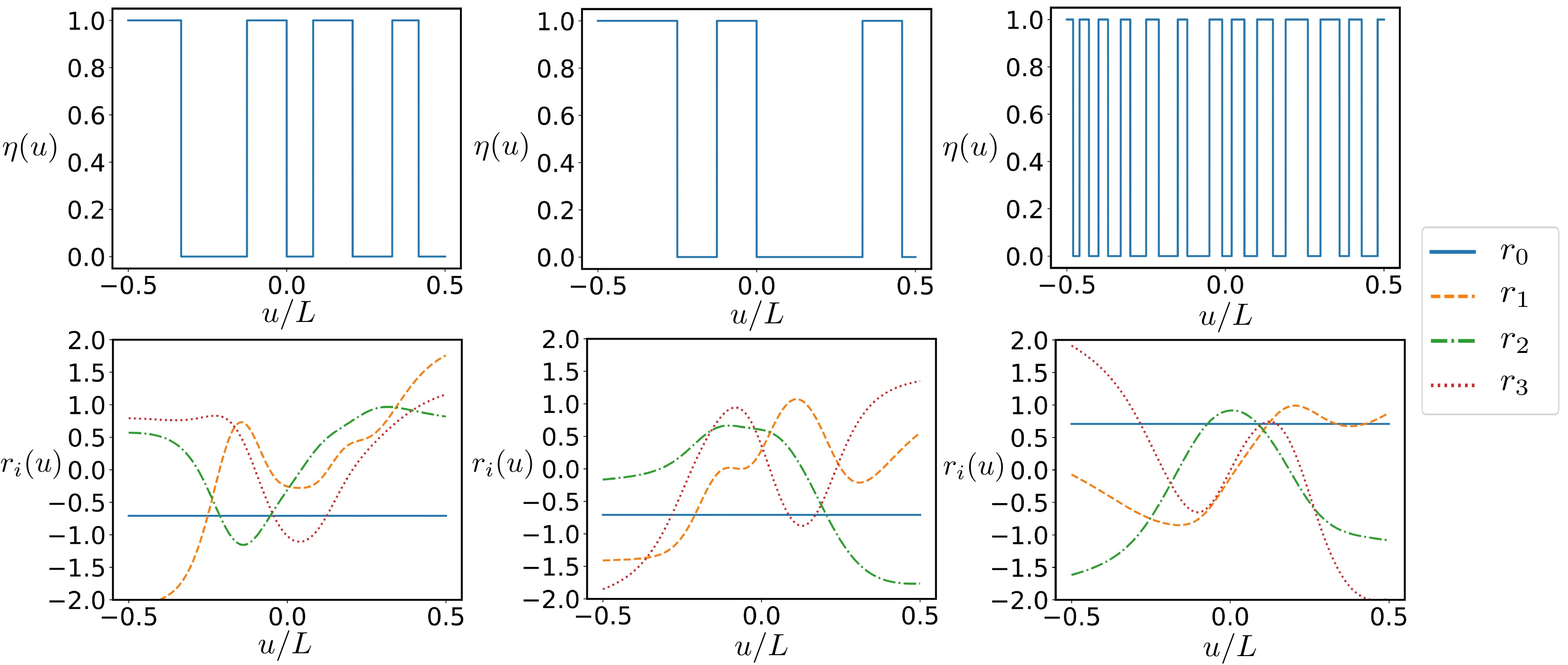}
\caption{\yk{Different indicator functions $\eta(u)$ lead to different eigentasks $r_k$, which are obtained by solving the generalized eigenvalue equation $G r_k = \lambda_k D r_k$ in Eq.~\ref{SI_eq:eigentask}. In this example, the direct imaging measurement yields the probability distribution of detecting a photon at each position $u$. The $k$th eigentask is given by $\sum_i r_k(u_i) P(u_i)$, where $P(u_i)$ is the probability of detecting a photon at position $u_i$. $L=10$, $\sigma=1$.}
} 
\label{different_prior_eigentask}
\end{center}
\end{figure}

\subsection{Imaging multiple compact sources}

\begin{algorithm}[H]
\caption{Distinguishing multiple compact sources using eigentasks}
\begin{algorithmic} \label{Algorithm:compact}
\STATE \textbf{Input:} Number of pixels $N$, total region size $\mathbb{L}$, activation probability $p_0$, number of compact sources $Q$, number of training samples per class $W_\text{train}$, number of test samples per class $W_\text{test}$, total detected photons per test image $S$, truncation order $\mathcal{K}$
\STATE Define the position function $\zeta(u)\in\{0,1\}$ to specify the spatial locations of the $Q$ compact sources.
\STATE For each compact source $S_\ell$ ($\ell=1,\dots,Q$), define an indicator function $\eta_\ell(u)\in\{0,1\}$ specifying the active region within that source.

\STATE \textbf{Training:}
\FOR{each class (Case 1: active; Case 2: inactive)}
    \FOR{$i=1$ to $W_\text{train}$}
        \STATE Initialize $I_i(u_j)=0$ for all pixels $j$.
        \FOR{each compact source $\ell=1,\dots,Q$}
            \FOR{each pixel $j$ where $\zeta(u_j)=1$ (i.e.\ within a compact source)}
                \IF{Case 1 and $\eta_\ell(u_j)=1$}
                    \STATE Set $I_i(u_j)=1$ with probability $p_0$, else $0$.
                \ELSIF{Case 2 and $\eta_\ell(u_j)=0$}
                    \STATE Set $I_i(u_j)=1$ with probability $p_0$, else $0$.
                \ENDIF
            \ENDFOR
        \ENDFOR
        \STATE Normalize $I_i(u)$ so that $\sum_{j=1}^N I_i(u_j)=1$, where each nonzero $I_i(u_j)$ takes value $i_0$.
    \ENDFOR
\ENDFOR

\STATE Compute prior matrices $d_k$ and $g_{ij}$ from all training samples, taking the prior to be the uniform distribution over the training set.

\STATE Solve the eigenvalue problem (Eq.~\ref{SI_eq:eigenproblem}) to obtain eigenvectors $r_k$ and eigenvalues $\beta_k^2$.

\FOR{each training sample $i$ in both classes}
    \STATE Compute theoretical detection probabilities $P_i(j)$ for the chosen measurement method (direct imaging, separate SPADE, orthogonalized SPADE).
    \STATE Compute eigentask components $\xi_{k,i} = \sum_j r_{kj} P_i(j)$.
    \STATE Form the truncated eigentask vector $\vec{\xi}_i = [\xi_{0i},\dots,\xi_{\mathcal{K}i}]$.
\ENDFOR

\STATE Train a logistic regression classifier using $\vec{\xi}_i$ as inputs and the class labels as targets.

\STATE \textbf{Testing:}
\FOR{each test sample $i=1$ to $W_{\text{test}}$ in both classes}
    \STATE Generate a new intensity distribution $I_i(u)$ as in the training phase.
    \STATE Simulate $S$ detection events to obtain empirical probabilities $\hat{P}_i(j)$.
    \STATE Compute empirical eigentask vector $\hat{\vec{\xi}}_i = [\hat{\xi}_{0i},\dots,\hat{\xi}_{\mathcal{K}i}]$.
    \STATE Classify $\hat{\vec{\xi}}_i$ using the trained logistic regression model.
\ENDFOR

\STATE Repeat the testing process for multiple independent sets of test samples.
\STATE Compute the mean and range of the success probability $P_\text{succ}$ as a function of $\mathcal{K}$ and $S$.
\end{algorithmic}
\end{algorithm}

We now consider a similar task: distinguishing multiple compact sources using the direct imaging method, the separate SPADE method, and the orthogonalized SPADE method. 
This corresponds to the simulation example presented in the main text.
The pseudocode for this simulation is provided in Algorithm \ref{Algorithm:compact} and is further explained below.

As before, we randomly generate images according to two different cases specified by the indicator function $\eta(u)$. In this setting of imaging multiple compact sources, we assume that the intensity is nonzero only at the locations of a few compact sources where the position function $\zeta(u)=1$, with sizes within the Rayleigh limit, i.e., $\alpha \ll 1$. As illustrated in Fig.~\ref{fig:intensity_examples}, we present an example with four compact sources defined by $\zeta(u)$. For each compact source, the corresponding indicator function $\eta(u)$ within its region is shown, along with several examples of the randomly generated intensity distributions for both cases and for each source.

As in the previous simulation of imaging a general source, we use $2W_{\text{train}}$ different intensity distributions to generate the corresponding $g_{ij}$ and $d_i$. This allows us to compute the eigentasks and the associated eigenvalues $\beta_k^2$. Once the eigenvectors $r_k$ are obtained, and the theoretical probability distribution $P_m(j)$ — corresponding to the $m$th intensity distribution and the $j$th outcome — is known, we construct the $k$th eigentask as
$\xi_{km} = \sum_j r_{kj} P_m(j)$.
Note that, for the direct imaging method, $P_m(j)$ represents the probability of detecting photons at position $u_j$. For the separate SPADE method, the outcomes are 
\begin{equation}\begin{aligned}\label{SI_eq:simulation_sSPADE}
&\frac{1}{2Q}\ket{b_{1,0}}\bra{b_{1,0}},\cdots, \frac{1}{2Q}\ket{b_{4,0}}\bra{b_{4,0}},\frac{1}{2Q}\ket{\phi_{1,0,\pm}}\bra{\phi_{1,0,\pm}},\cdots,\frac{1}{2Q}\ket{\phi_{4,0,\pm}}\bra{\phi_{4,0,\pm}},\\
&\frac{1}{2Q}\ket{\phi_{1,1,\pm}}\bra{\phi_{1,1,\pm}},\cdots,\frac{1}{2Q}\ket{\phi_{4,1,\pm}}\bra{\phi_{4,1,\pm}},\frac{1}{2Q}\ket{\phi_{1,2,\pm}}\bra{\phi_{1,2,\pm}},\cdots,\frac{1}{2Q}\ket{\phi_{4,2,\pm}}\bra{\phi_{4,2,\pm}},\cdots  
\end{aligned}
\end{equation}
which are defined in Eq. \ref{eq:POVM_separate}. And the orthogonalized SPADE method has outcomes in the following order
\begin{equation}\begin{aligned}\label{SI_eq:simulation_oSPADE}
&\frac{1}{2}\ket{b_{1}^{(0)}}\bra{b_{1}^{(0)}},\cdots,\frac{1}{2}\ket{b_{4}^{(0)}}\bra{b_{4}^{(0)}},\frac{1}{2}\ket{\phi_{1,\pm}^{(0)}}\bra{\phi_{1,\pm}^{(0)}},\cdots,\frac{1}{2}\ket{\phi_{4,\pm}^{(0)}}\bra{\phi_{4,\pm}^{(0)}},\\
&\frac{1}{2}\ket{\phi_{1,\pm}^{(1)}}\bra{\phi_{1,\pm}^{(1)}},\cdots,\frac{1}{2}\ket{\phi_{4,\pm}^{(1)}}\bra{\phi_{4,\pm}^{(1)}},\frac{1}{2}\ket{\phi_{1,\pm}^{(2)}}\bra{\phi_{1,\pm}^{(2)}},\cdots,\frac{1}{2}\ket{\phi_{4,\pm}^{(2)}}\bra{\phi_{4,\pm}^{(2)}},\cdots
\end{aligned}\end{equation} 
which are defined in Eq. \ref{eq:POVM_orthogonalized}.
This yields, for each intensity distribution indexed by $m$, an eigentask vector $\vec{\xi}_m = [\xi_{0m}, \xi_{1m}, \xi_{2m}, \dots, \xi_{\mathcal{K}m}]$. We truncate the eigentasks at order $\mathcal{K}$, which is determined by the number of detection events $S$—a choice we will justify later. For each of the two classes, we get $W_{\text{train}}$ such eigentask vectors, forming two datasets that serve as training inputs for the logistic regression classifier.

During testing, we generate $2W_{\text{test}}$ intensity distributions from the two cases. For each intensity distribution, we fix the total number of detected photons (sample number) to be $S$. By counting the number of occurrences of each outcome, we obtain an empirical probability distribution $\hat{P}_m(j)$ for the $m$th intensity distribution. Using this distribution, we compute the corresponding empirical eigentask vector $\hat{\vec{\xi}}_m = [\hat{\xi}_{0m}, \hat{\xi}_{1m}, \hat{\xi}_{2m}, \dots, \hat{\xi}_{\mathcal{K}m}]$. The resulting eigentask vectors $\hat{\vec{\xi}}_m$ are then used as input for classification with the trained logistic regression model.

We then show the performance of distinguishing the two cases using the three approaches in Fig.~\ref{fig:simulation_example_multiple}. In this calculation, we use the position function $\zeta(u)$ and the indicator function $\eta(u)$, as illustrated in Fig.~\ref{fig:intensity_examples}, to generate the intensity distributions for the two cases. We observe that the success probability $P_{\text{succ}}$ first increases with $\mathcal{K}$ and then decreases to around $50\%$ in Fig.~\ref{fig:simulation_example_multiple}(d1)-(d3). This behavior is expected because we should only include the eigentasks that can be reliably estimated given the sample number $S$.

Comparing the success probability $P_{\text{succ}}$ across the three approaches, since the distance between the compact sources is not much larger than the width of the PSF $\sigma$, we expect the separate SPADE method to perform comparably to the direct imaging case, while our orthogonalized SPADE method should achieve better performance. This is explicitly demonstrated in Fig.~\ref{fig:simulation_example_multiple}(d1)-(d3), where the peak success probability $P_{\text{succ}}$ for the orthogonalized SPADE method is clearly higher than that of the other two approaches when $S=10^{10}$. 

This simulation also illustrates the operational meaning of the total REC $C_T$: it roughly indicates the number of eigentasks that should be included in the fitting. This is because $\beta_k^2$ serves as an approximate threshold on the sample number required to reliably estimate the $k$th eigentask, and $C_T = \sum_k 1/(1+\beta_k^2/S)$, where each term $1/(1+\beta_k^2/S)$ approaches 1 when $S \gg \beta_k^2$. Hence, $C_T$ roughly counts how many eigentasks can be reliably estimated.
When the number of samples is $S=10^4$ or $S=10^7$, the total REC $C_T$ is roughly the same across the three approaches, and we observe similar behavior of $P_{\text{succ}}$ as a function of $\mathcal{K}$. However, at $S=10^{10}$, the total REC $C_T$ for the orthogonalized SPADE method becomes noticeably larger than that of the other two approaches, reaching roughly 13. This aligns precisely with the peak position in the $P_{\text{succ}}$ plot for the orthogonalized SPADE method. In comparison, for direct imaging and the separate SPADE method, $C_T$ is about 10, which is also consistent with the peak positions in their respective $P_{\text{succ}}$ plots. These observations clearly demonstrate the operational meaning of $C_T$, which reflects the number of eigentasks that should be included in the downstream analysis. Being able to include more reliably estimated features can improve the performance of the subsequent distinguishing task, whereas including eigentasks that are not well estimated can degrade the performance. A larger $C_T$ indicates that more eigentasks can be included in the downstream analysis, reflecting a better-performing imaging method.

The eigentasks here exhibit complex structures that depend on the positions of the compact sources $\zeta(u)$ and the indicator function $\eta(u)$. The first four eigenvectors for the orthogonalized SPADE method are shown in Fig.~\ref{fig:simulation_example_multiple}(c). This demonstrates the strength of the quantum learning approach in identifying which eigentasks should be included and determining up to which order they should be incorporated in the logistic regression for downstream analysis. From Fig.~\ref{fig:simulation_example_multiple}(a), we observe that the eigenvalues increase almost smoothly across all approaches considered here, rather than showing the stepwise increase we previously described. This is because the value of $\alpha=0.1$ in Fig.~\ref{fig:simulation_example_multiple}; if $\alpha$ were even smaller, the $\beta_k^2$ would be dominated by the scaling over $\alpha$. These aspects — the eigentask structure and the eigenvalue behavior — are discussed in detail in Sec.~\ref{SI:eigentask_multiple} of the Supplemental Material.

In the calculation shown in Fig.~\ref{fig:simulation_example_multiple}, we use $N_{\text{max}} = 20$ pixels for each compact source. A total of $2W_{\text{train}} = 400$ intensity distributions — with $W_{\text{train}} = 200$ per case — are generated to compute the prior distribution and the corresponding $g$ and $d$ for the eigenvalues $\beta_k^2$ and the total REC $C_T$. These $2W_{\text{train}} = 400$ intensity distributions, together with the corresponding theoretical eigentasks, are also used as the training data for logistic regression. A test set of $2W_{\text{test}} = 2000$ intensity distributions — with $W_{\text{test}} = 1000$ per case — is used to evaluate the success probability of distinguishing the two cases. This calculation is repeated 20 times, each with an independently generated test set of $2W_{\text{test}} = 2000$ intensity distributions.

\begin{figure}[!tb]
\begin{center}
\includegraphics[width=1\columnwidth]{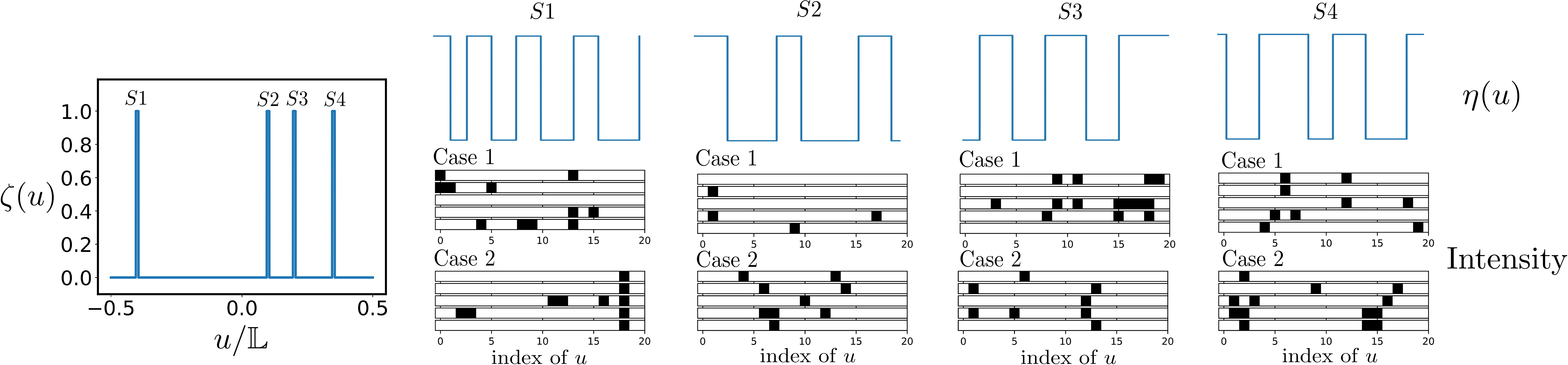}
\caption{\yk{The position function $\zeta(u)$ specifies the positions of four compact sources, labeled $S1$, $S2$, $S3$, and $S4$. For each compact source, the intensity distribution is randomly generated based on another indicator function $\eta(u)$. In case 1, the intensity is nonzero only where $\eta(u) = 1$, whereas in case 2, it is nonzero only where $\eta(u) = 0$.  All four compact sources are confined within the interval $[-\mathbb{L}/2, \mathbb{L}/2]$ with $\mathbb{L} = 10$, and each source has a size of $0.1$, which is smaller than the PSF width $\sigma = 1$. The intensity distributions are randomly generated, and we present five examples of both cases for each compact source above. Note that the intensity of each pixel is assigned based on the position of its centroid, so edge pixels may appear slightly misaligned with the indicator function $\eta(u)$; however, this effect becomes negligible as the number of pixels increases.}
} 
\label{fig:intensity_examples}
\end{center}
\end{figure}

\begin{figure}[!tb]
\begin{center}
\includegraphics[width=1\columnwidth]{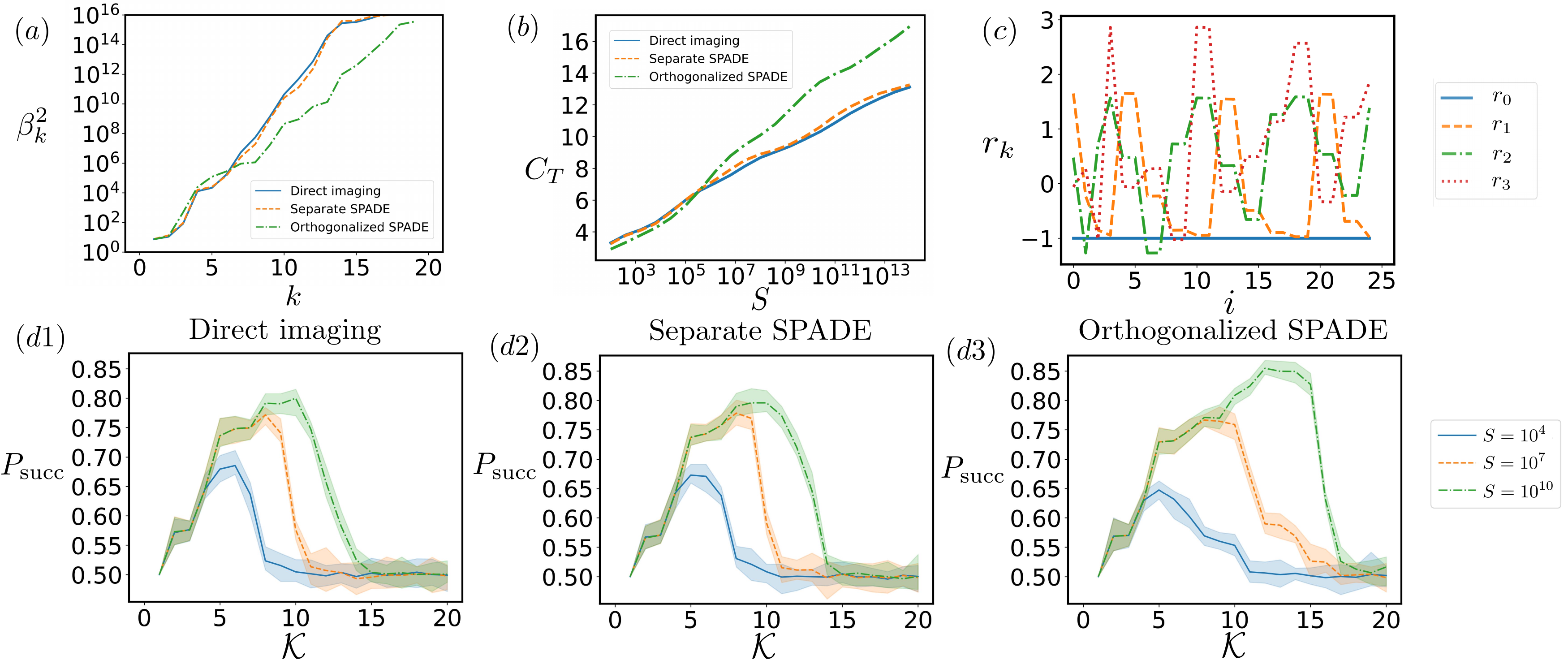}
\caption{\yk{(a) The eigenvalues $\beta_k^2$ for different orders $k$ obtained using the three approaches to image four compact sources.
(b) The total REC $C_T$ as a function of the sample number $S$ for the three approaches.
(c) The first four eigenvectors $r_k$ for the orthogonalized SPADE method, where $i$ labels the $i$th outcome. 
(d1)–(d3) The success probability $P_{\text{succ}}$ as a function of the highest order $\mathcal{K}$ of the eigentasks included in the training and testing procedures for classifying the images in the two cases, evaluated at different sample sizes $S$. The solid line represents the mean success probability, and the shaded region shows the maximum and minimum values across the 20 repetitions. We set $\alpha = 0.1$ in this figure.}
} 
\label{fig:simulation_example_multiple}
\end{center}
\end{figure}

\subsection{Face recognition}\label{SI:face}

\begin{figure}[!tb]
\begin{center}
\includegraphics[width=0.7\columnwidth]{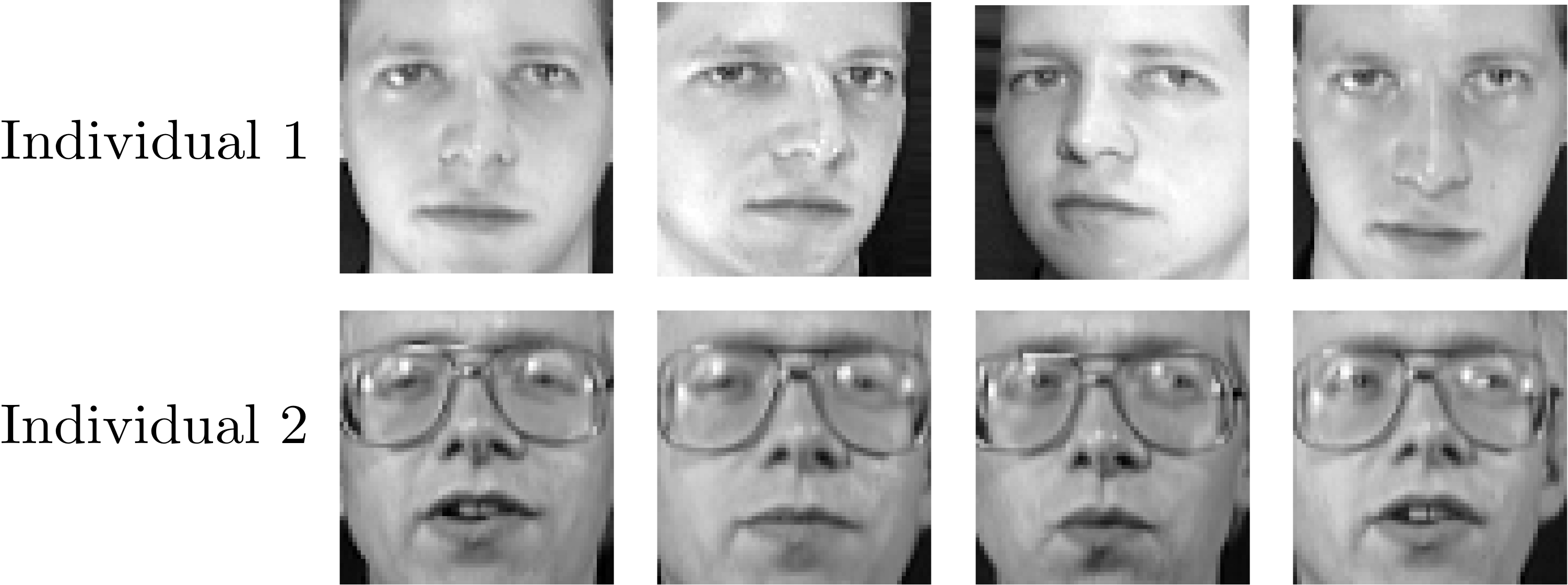}
\caption{\ykwang{Example images of two individuals from the Olivetti Faces dataset. Each row shows several images belonging to the same individual, illustrating natural variations in expression, pose, and illumination. These samples form the input set for the face-recognition task considered in this work.
} }
\label{fig:face_example}
\end{center}
\end{figure}

\begin{figure}[!tb]
\begin{center}
\includegraphics[width=1\columnwidth]{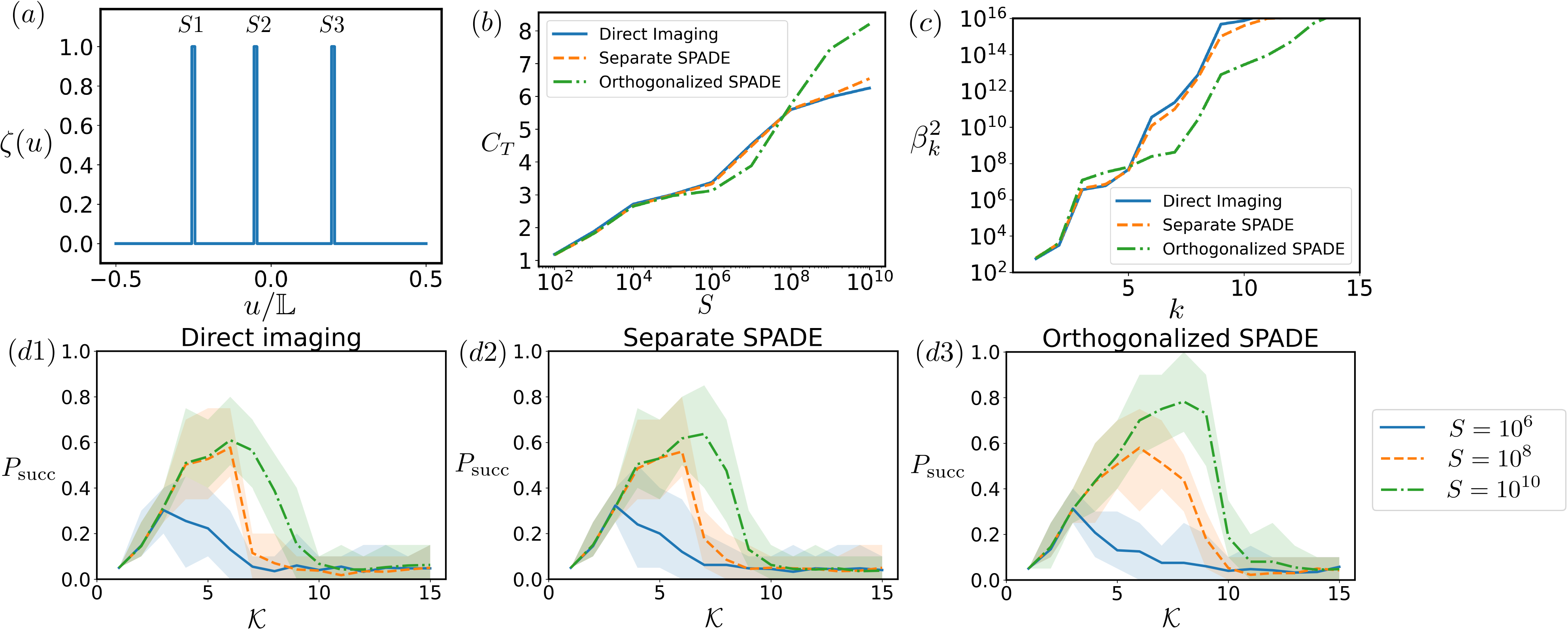}
\caption{\ykwang{Simulation for face recognition. (a) The position function $\zeta(u)$ specifies the positions of three compact sources, labeled $S1$, $S2$, and $S3$. All three compact sources are confined within the interval $[-\mathbb{L}/2, \mathbb{L}/2]$ with $\mathbb{L} = 10$, and each source has a size of $0.1$, which is smaller than the PSF width $\sigma = 1$. (b) The eigenvalues $\beta_k^2$ for different orders $k$ obtained using the three approaches.
(c) The total REC $C_T$ as a function of the sample number $S$ for the three approaches.
(d1)–(d3) The success probability $P_{\text{succ}}$ as a function of the highest order $\mathcal{K}$ of the eigentasks included in the training and testing procedures for face recognition, evaluated at different sample sizes $S$. The lines represent the mean success probability, and the shaded region shows the maximum and minimum values across the 20 repetitions. We set $\alpha = 0.1$ in this figure.
} }
\label{fig:face_recognition}
\end{center}
\end{figure}

\begin{algorithm}[H]
\caption{Face-recognition simulation using eigentasks}
\begin{algorithmic} \label{Algorithm:face_recognition}

\STATE \textbf{Input:} Number of individuals $N_{\mathrm{person}}$, training images per person $w_{\mathrm{train}}$, test images per person $w_{\mathrm{test}}$, number of compact sources $M$, total region size $\mathbb{L}$, PSF width $\sigma$, total detected photons per test image $S$, truncation order $\mathcal{K}$, measurement type (direct imaging, separate SPADE, orthogonalized SPADE)

\STATE \textbf{Preprocessing:}
\FOR{each face image}
    \STATE Load the $64\times 64$ grayscale face image.
    \STATE Vectorize the image into a 1D array of $N_{\mathrm{pix}}=4096$ pixels.
    \STATE Split the array into $M$ consecutive segments of equal length.
    \STATE Assign segment $m$ to compact source $S_m$ using the position function $\zeta(u)$.
    \STATE Construct the 1D intensity distribution $I(u)$ and normalize it so $\sum_j I(u_j)=1$.
\ENDFOR

\STATE \textbf{Training:}
\STATE Randomly select $w_{\mathrm{train}}$ training images for each of the $N_{\mathrm{person}}$ individuals.
\FOR{each training image $i$}
    \STATE Construct $I_i(u)$ following the preprocessing steps.
\ENDFOR

\STATE Compute prior matrices $d_k$ and $g_{ij}$ from all training intensity distributions, taking the prior to be uniform.

\STATE Solve the eigenvalue problem (Eq.~\ref{SI_eq:eigenproblem}) to obtain eigenvectors $r_k$ and eigenvalues $\beta_k^2$.

\FOR{each training image $i$}
    \STATE Compute theoretical detection probabilities $P_i(j)$ for the chosen measurement method.
    \STATE Compute eigentask components $\xi_{k,i} = \sum_j r_{kj} P_i(j)$ for $k=0,\dots,\mathcal{K}$.
    \STATE Form the truncated eigentask vector $\vec{\xi}_i = [\xi_{0i},\dots,\xi_{\mathcal{K}i}]$.
    \STATE Assign the class label corresponding to the identity of the person.
\ENDFOR

\STATE Train a logistic regression classifier using the eigentask vectors $\vec{\xi}_i$ as inputs and person identities as labels.

\vspace{0.5em}
\STATE \textbf{Testing:}
\STATE Select $w_{\mathrm{test}}$ test images per person.

\FOR{each test image $i$}
    \STATE Construct a new 1D intensity distribution $I_i(u)$ as in the training phase.
    \STATE Simulate $S$ detection events based on $I_i(u)$ and the chosen measurement to obtain empirical probabilities $\hat{P}_i(j)$.
    \STATE Compute empirical eigentask components $\hat{\xi}_{k,i} = \sum_j r_{kj} \hat{P}_i(j)$.
    \STATE Form the empirical eigentask vector $\hat{\vec{\xi}}_i = [\hat{\xi}_{0i},\dots,\hat{\xi}_{\mathcal{K}i}]$.
    \STATE Classify $\hat{\vec{\xi}}_i$ using the trained logistic regression model.
\ENDFOR

\STATE Repeat the testing process using randomly selected training and test sets.
\STATE Compute the mean and range of the success probability $P_{\mathrm{succ}}$ as functions of $\mathcal{K}$ and $S$.

\end{algorithmic}
\end{algorithm}

\ykwang{
In this subsection, we present a simulation example of face recognition in the context of imaging below the Rayleigh limit, using all three measurement strategies: the direct imaging method, the separate SPADE method, and the orthogonalized SPADE method. The pseudocode for this simulation is provided in Algorithm \ref{Algorithm:face_recognition} and is further explained below.

As shown in Fig.~\ref{fig:face_example}, the face images used in our simulation are taken from the Olivetti Faces dataset provided by AT\&T Laboratories Cambridge and distributed through \texttt{scikit-learn}. Each individual has ten $64\times 64$ grayscale images with different facial expressions and small variations in pose. We use these images as the training and testing data for the face-recognition task considered in this simulation.
Our theoretical analysis is developed for one-dimensional intensity distribution, whereas the face images are inherently two-dimensional. While it should be possible to extend our framework to the fully two-dimensional setting, we leave this to future work, as most superresolution techniques generalize naturally to higher dimensions without introducing additional qualitative differences \cite{zhou2019modern,ang2017quantum,lupo2020quantum,yu2018quantum}. Reducing a two-dimensional image to a one-dimensional signal through rasterization is also a standard dimensionality-reduction technique in machine learning and image processing.
To embed the data into our one-dimensional model, each $64\times 64$ face image is converted into a one-dimensional signal by rasterizing the image: all pixel rows are concatenated sequentially to form a vector of length $4096$. We partition this vector into $M=3$ segments. These segments play the role of compact sources and are placed at the positions specified by the position function $\zeta(u)$, as illustrated in Fig.~\ref{fig:face_example}(a). In this way, each two-dimensional face image is embedded into the one-dimensional framework required for our superresolution analysis.

We use $w_{\text{train}}=9$ training images per person and consider a total of $N_{\text{person}}=20$ individuals. As in the simulation of the previous subsections, this gives $W_{\text{train}}=w_{\text{train}}N_{\text{person}}=180$ training face images corresponding to different people. These training data are used to generate the quantities $g_{ij}$ and $d_i$, which in turn allow us to compute the eigentasks and the associated eigenvalues $\beta_k^2$. Once the eigenvectors $r_k$ are obtained, and the theoretical probability distribution $P_m(j)$—corresponding to the $m$th intensity distribution and the $j$th outcome—is known, we construct the $k$th eigentask as
$\xi_{km} = \sum_j r_{kj} P_m(j)$.
For the direct imaging method, $P_m(j)$ represents the probability of detecting photons at position $u_j$. For the separate SPADE method, the outcomes are given in Eq.~\ref{SI_eq:simulation_sSPADE}. The orthogonalized SPADE method uses the outcomes in the order specified in Eq.~\ref{SI_eq:simulation_oSPADE}.
This procedure yields, for each intensity distribution indexed by $m$, an eigentask vector $\vec{\xi}_m = [\xi_{0m}, \xi_{1m}, \xi_{2m}, \dots, \xi_{\mathcal{K}m}]$. We truncate the eigentasks at order $\mathcal{K}$, which is set by the number of detection events $S$—a choice we will justify later. For each individual, we obtain $w_{\text{train}}$ such eigentask vectors, forming  datasets that serve as the training inputs for the logistic regression classifier.


During testing, we use $w_{\text{test}}=1$ image per person and consider the same $N_{\text{person}}=20$ individuals, giving a total of $W_{\text{test}}=w_{\text{test}}N_{\text{person}}=20$ test face images corresponding to different people. For each intensity distribution, we fix the total number of detected photons (sample number) to be $S$. By counting the occurrences of each outcome, we obtain an empirical probability distribution $\hat{P}_m(j)$ for the $m$th intensity distribution. Using this distribution, we then compute the corresponding empirical eigentask vector $\hat{\vec{\xi}}_m = [\hat{\xi}_{0m}, \hat{\xi}_{1m}, \hat{\xi}_{2m}, \dots, \hat{\xi}_{\mathcal{K}m}]$, which is also truncated at order $\mathcal{K}$. These empirical eigentask vectors $\hat{\vec{\xi}}_m$ are used as inputs to the trained logistic regression classifier.


We then evaluate the performance of identifying these $N_{\text{person}}=20$ individuals in the context of imaging below the Rayleigh limit using the three approaches, as shown in Fig.~\ref{fig:face_recognition}. We observe that the success probability $P_{\text{succ}}$ first increases with $\mathcal{K}$ and then decreases, as illustrated in Fig.~\ref{fig:face_recognition}(d1)–(d3). This trend is expected: only eigentasks that can be reliably estimated given the sample number $S$ should be included.
Comparing the success probability $P_{\text{succ}}$ across the three approaches, since the distance between the compact sources is not much larger than the width of the PSF $\sigma$, we expect the separate SPADE method to perform comparably to the direct imaging case, while our orthogonalized SPADE method should achieve better performance. This is explicitly demonstrated in Fig.~\ref{fig:face_recognition}(d1)-(d3), where the peak success probability $P_{\text{succ}}$ for the orthogonalized SPADE method is clearly higher than that of the other two approaches when $S=10^{10}$.

This simulation also illustrates the operational meaning of the total REC $C_T$: it approximately indicates how many eigentasks should be included in the fitting. The reason is that $\beta_k^2$ serves as a threshold on the sample number required to reliably estimate the $k$th eigentask, and $C_T = \sum_k 1/(1+\beta_k^2/S)$, where each term $1/(1+\beta_k^2/S)$ approaches 1 when $S \gg \beta_k^2$. Thus, $C_T$ effectively counts the number of eigentasks that can be reliably estimated. From Fig.~\ref{fig:face_recognition}(c), we observe that the eigenvalues $\beta_k^2$ for the orthogonalized SPADE method are significantly smaller than those of the other two methods for approximately $ k\geq 6$.
When the number of samples is $S=10^6$ or $S=10^8$, the total REC $C_T$ is roughly the same across all three approaches, and the resulting dependence of $P_{\text{succ}}$ on $\mathcal{K}$ is correspondingly similar. However, at $S=10^{10}$, the total REC $C_T$ for the orthogonalized SPADE method increases noticeably, reaching a value of roughly 8, while the corresponding values for direct imaging and separate SPADE remain around 6. This behavior matches the peak positions of $P_{\text{succ}}$ in Fig.~\ref{fig:face_recognition}(d1)–(d3): the orthogonalized SPADE method peaks near $\mathcal{K}\approx 8$, while the other two methods peak near $\mathcal{K}\approx 6$.
These observations clearly demonstrate the operational meaning of $C_T$: it reflects the number of eigentasks that should be included in the downstream analysis. Being able to incorporate more reliably estimated features improves the performance of the face recognition, whereas including poorly estimated eigentasks degrades it. A larger $C_T$ therefore indicates that more eigentasks can be used effectively, corresponding to a better-performing imaging method.

In the calculation shown in Fig.~\ref{fig:face_recognition}(d1)–(d3), a total of $W_{\text{train}} = 180$ intensity distributions, with $w_{\text{train}} = 9$ per person, are generated to compute the prior distribution and the corresponding $g_{ij}$ and $d_i$ for determining the eigenvalues $\beta_k^2$ and the total REC $C_T$. These $W_{\text{train}} = 180$ intensity distributions, together with their theoretical eigentasks, are also used as the training data for logistic regression. A test set of $W_{\text{test}} = 20$ intensity distributions, with $w_{\text{test}} = 1$ per person, is used to evaluate the success probability of identifying each individual.
This calculation is repeated 20 times. In each repetition, the $w_{\text{test}} + w_{\text{train}} = 10$ images available for each person are randomly split so that 9 images are used for training and 1 image is used for testing. Repeating the simulation with different random choices of test images prevents the result from depending on any particular image and provides a more reliable estimate of the overall performance.
}

\subsection{Discussion}

Here, we want to discuss the key advantages of applying the quantum learning formalism \cite{hu2023tackling} to imaging problems, in comparison to traditional quantum parameter estimation approaches based on the Cramer-Rao bound. A central insight is that practical imaging tasks involve intrinsic ambiguities: (i) parameterization—it is unclear which set of parameters best captures the task-relevant information, and different choices can lead to markedly different performance; (ii) finite-sample limitations—with limited data, only a subset of parameters can be reliably estimated, making it crucial to identify and retain well-estimated ones, which is especially challenging in the inherently infinite-dimensional setting of imaging. While quantum parameter estimation remains a powerful tool when the goal is to estimate well-defined parameters encoded in a system, this quantum learning approach is better suited for more general tasks that involve prior structure, model uncertainty, and finite sample limitations. We believe this approach opens the door to a range of practically important applications where traditional estimation strategies may fall short.

A core difficulty in analyzing imaging problems is that imaging inherently involve a large number of degrees of freedom. Conventional parameter estimation provides no clear guidance on which parameters to extract or how many to include in downstream analyses, such as logistic regression in our simulation. In this simulation, our objective is to distinguish between two cases, rather than to estimate a well-defined set of parameters. As such, it is unclear which quantities should be treated as parameters, and if so, up to what order. For example, one might consider the intensity values at each pixel as unknown parameters, or alternatively use the Fourier components or the moments of the intensity distribution—all of which, in principle, contain the full information and can be incorporated into the Fisher information framework. Furthermore, if moments are chosen as features, there remains ambiguity regarding how many moments to include. If we rely solely on Fisher information or traditional parameter estimation approaches, such questions must be addressed carefully on a case-by-case basis, lacking both conceptual simplicity and a unified theoretical foundation.
In fact, we view this ambiguity as a significant drawback in earlier discussions of superresolution based on Fisher information \cite{pirandola2018advances,sorelli2021optimal,grace2020approaching,tsang2019quantum,tsang2017subdiffraction,zhou2019modern,wang2021superresolution,nair2016far,lupo2016ultimate,napoli2019towards,yu2018quantum,ang2017quantum,yang2016farfield,tang2016fault,paur2016achieving,tham2017beating,parniak2018beating,zanforlin2022optical,santamaria2024single,tan2023quantum,rouviere2024ultra,tan2023quantum}, where a fixed set of parameters is assumed from the outset without a clear justification for their selection.


By contrast, the quantum learning framework identifies a natural basis—the eigentasks—derived from the measurement model and prior knowledge, which serve as meaningful and physically motivated features for downstream tasks such as logistic regression. While parameters such as pixel intensities, Fourier components, or moments can be chosen as inputs, the features that can actually be estimated should be determined by the measurement system itself, together with the known prior distribution over the inputs. These are precisely the eigentasks. Importantly, the total REC and the eigentasks are invariant under reparameterization, as explicitly proven in Proposition~\ref{prop:reparameterization} of the Supplemental Material. 
The eigentasks span the space of functions expressible as linear combinations of the measured features. If a subset of them can be estimated with sufficient accuracy, then any learnable function must lie within their span, making them a principled and well-justified choice for regression.
Crucially, the number of detection events $S$ determines how many eigentasks can be reliably estimated, and thus which features should be retained for classification. This is particularly important in superresolution problems, where the performance exhibits step-like behavior closely tied to the size of the sources. As a result, precision quantified solely by Fisher information is inadequate for guiding practical decisions—particularly when the prior is complex, the sample size is limited, or the task is not explicitly parameter estimation.

\section{Comparison with the Chernoff bound}\label{SI:Chernoff}

\begin{figure}[!tb]
\begin{center}
\includegraphics[width=0.9\columnwidth]{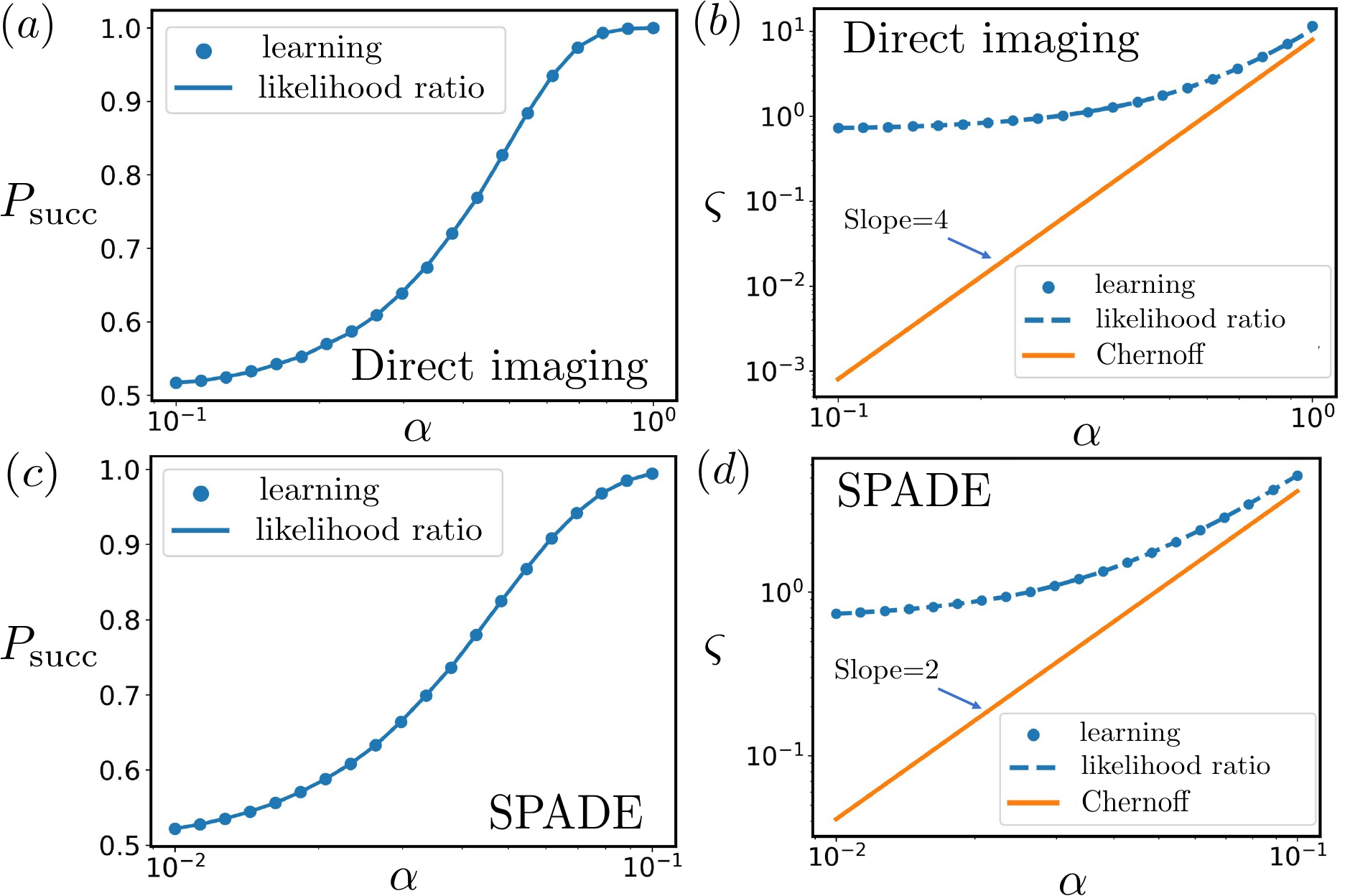}
\caption{\ykwang{Comparison of the learning approach, likelihood-ratio method, and Chernoff bound for discriminating a single point source from two point sources.
(a), (b) Direct imaging with sample number $S = 10^7$. Panel (a) shows the success probability $P_{\text{succ}}$ for the three approaches, and panel (b) shows the corresponding exponential index $\varsigma = -\log(1 - P_{\text{succ}})$.
(c), (d) SPADE measurements with sample number $S = 10^6$. Panel (c) shows the success probability $P_{\text{succ}}$, and panel (d) shows the exponential index $\varsigma = -\log(1 - P_{\text{succ}})$.
The legend ``learning'' denotes the learning-based approach introduced in this work, ``likelihood ratio'' denotes the likelihood-ratio test, and ``Chernoff'' shows the theoretical Chernoff bound. In all panels, the single-source and two-source hypotheses occur with equal prior probability, and the two point sources have equal brightness with separation $\alpha d_{\text{sep}}$, where $d_{\text{sep}} = 0.6$ and the PSF width is $\sigma = 1$. $\alpha$ and $\varsigma$ are plotted in log scale, while $P_{\text{succ}}$ is plotted in linear scale.
} }
\label{fig:chernoff}
\end{center}
\end{figure}

\begin{figure}[!tb]
\begin{center}
\includegraphics[width=1\columnwidth]{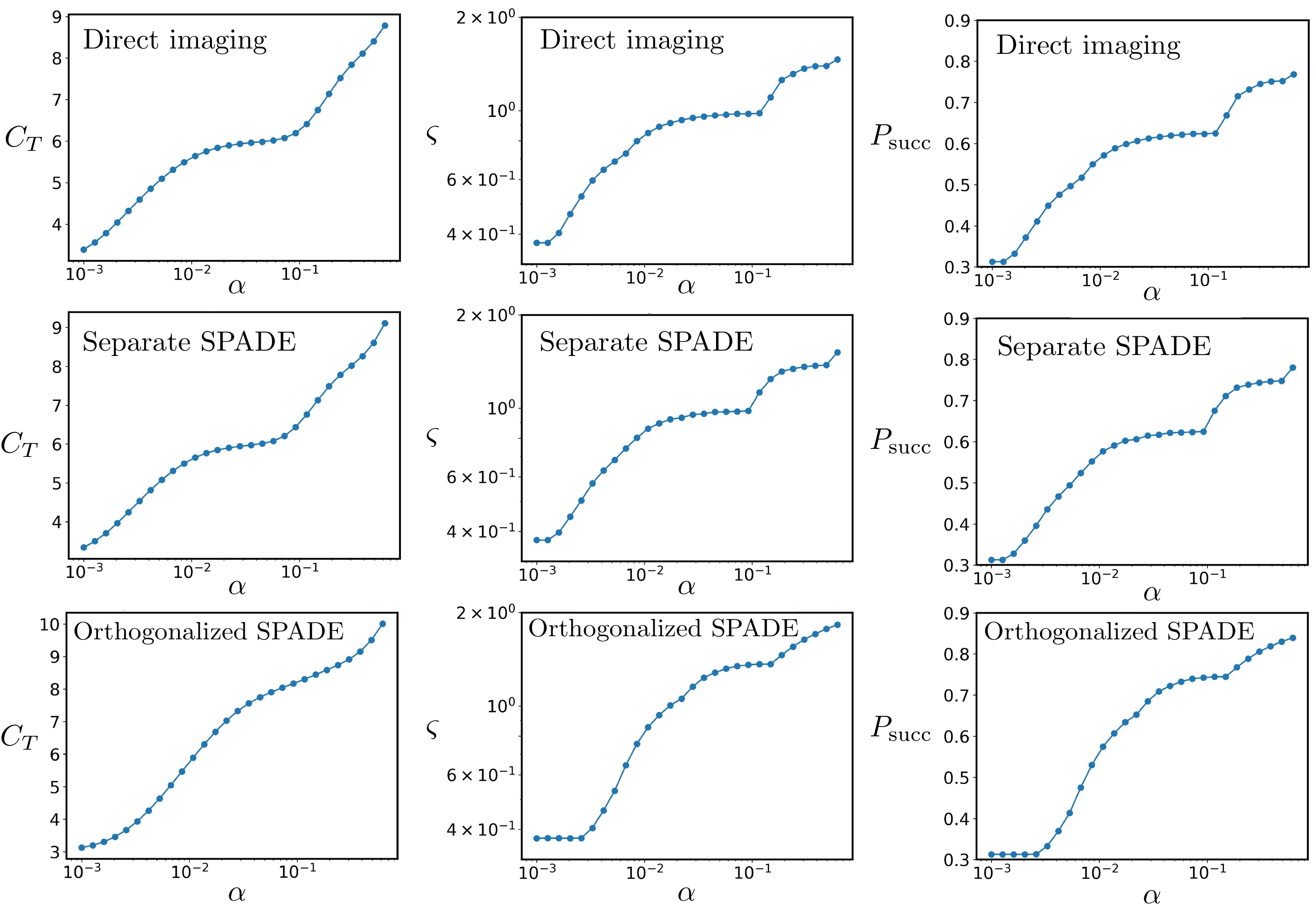}
\caption{\ykwang{Simulation of face-recognition performance as a function of the separation parameter $\alpha$ using the the learning-based approach, comparing three measurement strategies. The task involves identifying an individual from a set of 20 people, where each person has 9 training images and 1 additional image reserved for testing. Each face image is split into three compact pieces, consistent with the simulation setup of the previous section. All panels use $S = 10^{10}$ samples for each method. For every method, we report the total REC $C_T$, the success probability $P_{\text{succ}}$, and the corresponding exponential index $\varsigma = -\log(1 - P_{\text{succ}})$. The quantities $P_{\text{succ}}$ and $\varsigma$ are obtained by running $100$ repetitions and averaging the resulting success probabilities. } }
\label{fig:rate_face}
\end{center}
\end{figure}

\ykwang{
Many machine-learning tasks can be viewed as discrimination problems. It is therefore natural to compare learning-based discrimination with the predictions obtained from more traditional statistical tools. In this section, we compare the learning approach with the likelihood-ratio method and with the corresponding Chernoff bounds.


Before going into details, we first emphasize that although many machine-learning tasks can be formulated as discrimination problems, this does not trivialize the learning-based approach. A key advantage of learning methods is their ability to handle highly complex image structures. In contrast, the likelihood-ratio method typically requires an accurate statistical model for the objects being imaged. For example, in a face-recognition task involving tens of individuals, it is essentially impossible to write down a closed-form likelihood function for the measured data that would allow an exact likelihood-ratio calculation. Learning-based approaches naturally excel in such high-dimensional, structured tasks, as is evident from their widespread use in everyday face-recognition systems, even though such tasks are, in principle, solvable by an optimal likelihood-ratio test.
Our quantum learning method inherits these advantages while also being able to exploit the underlying structure of quantum measurements and quantum states, allowing it to operate effectively in the quantum-imaging setting considered here.

It is difficult to directly compare the likelihood-ratio method, the learning-based method, and the corresponding Chernoff bounds for the complex face-recognition example presented in the main text. We therefore begin by carrying out this comparison in a simpler and well-studied setting: distinguishing between a single point source and two point sources, a central problem in the theory  of superresolution \cite{zanforlin2022optical,huang2021quantum,zhang2020super,lu2018quantum,grace2022identifying}. We then use the insights gained from this simple example to discuss the behavior of the quantum learning–based method in the face-recognition task. 
For this elementary discrimination problem, previous works have predicted that for direct imaging, the error probability scales as $P_{\text{err}}\sim \exp(-\alpha^4 S)$ up to constant prefactors, whereas the SPADE measurement achieves $P_{\text{err}}\sim \exp(-\alpha^2 S)$, where $S$ is the sample number. The error probability is simply related to the success probability by $P_{\text{err}}=1-P_{\text{succ}}$. These scalings follow from the Chernoff bound \cite{audenaert2007discriminating,chernoff1952measure,cover1999elements}, which upper-bounds the decay rate of the error probability via $P_{\text{err}}\sim\exp(-S C)$ with $C = -\log\left(\min_{0\le t\le 1}\sum_i p_0(i)^{t} p_1(i)^{1-t}\right)$,  where $p_{0,1}(i)$ denote the probabilities of obtaining outcome $i$ under hypotheses 0 and 1, respectively.
As shown in Fig.~\ref{fig:chernoff}, our numerically calculated Chernoff bound exhibits the same scaling over $\alpha$ as reported in previous work \cite{zanforlin2022optical,huang2021quantum,zhang2020super,lu2018quantum,grace2022identifying}.

We further evaluate the performance of the likelihood ratio method, which compares how likely the observed data are under two competing hypotheses and selects the hypothesis with the higher likelihood. The likelihood ratio method decides between two hypotheses $H_0$ and $H_1$ by computing
$\Lambda(\vec{x}) = {P(\vec{x} \mid H_1)}/{P(\vec{x} \mid H_0)}$,
and choosing $H_1$ if $\Lambda(\vec{x}) > 1$, and $H_0$ otherwise, where $\vec{x}$ denotes the outcomes of $S$ samples. To apply this method, we first construct the success-probability distributions for imaging a single point source and two point sources, corresponding to the two hypotheses in each measurement method, which yields explicit formulas for $P(\vec{x} \mid H_0)$ and $P(\vec{x} \mid H_1)$. Based on any observed sample outcome $\vec{x}$, we then compute $\Lambda(\vec{x})$ to make the prediction.

We also provide the performance of distinguishing the two hypotheses using the learning approach developed in our work in Fig. \ref{fig:chernoff}. In this calculation, we assume prior information that the single–point–source case and the two–point–source case occur with equal probability, which allows us to compute $D$, $G$, and the corresponding eigenvectors and eigenvalues based on Eq. \ref{SI_eq:eigenproblem}. Following Sec. \ref{SI:demonstrative_example} of the Supplemental Material, these eigenvectors are then used to construct the eigentasks $\xi_{ik} = \sum_j r_{kj} {P}_i(j)$, where ${P}_i(j)$ is the probability distribution for the $j$th outcome obtained from the $i$th test instance, which may correspond to either a single point source or two point sources.
We use these truncated eigentask vectors $\vec{\xi}_i = [\xi_{0i}, \dots, \xi_{\mathcal{K}i}]$ as the input features for training the logistic regression model. To avoid the instability caused by having very few real samples in this simple example, we employ model-based data augmentation that generates synthetic eigentask features around each class’s prototype model, yielding a sufficiently large and robust training set.
Later, for each test instance, we sample the $S$ outcomes to obtain the empirical probability distribution $\hat{P}_i(j)$ and compute the corresponding empirical eigentask $\hat{\xi}_{ik} = \sum_j r_{kj} \hat{P}_i(j)$, which is then used as the input to the trained classifier for evaluation. Sec. \ref{SI:demonstrative_example} in the Supplemental Material has shown that the success probability depends on the truncation order $\mathcal{K}$. Hence, we vary $\mathcal{K}$ to compute the corresponding success probability $P_{\text{succ}}$ and select the peak value across all truncation orders. The plot reports the peak success probability achieved by this learning-based approach.

It is clear from Fig. \ref{fig:chernoff} that the learning approach and the likelihood-ratio method perform similarly for this simple discrimination task. However, the likelihood ratio quickly becomes difficult to implement for practical problems such as face recognition, where the structure of the data is highly complex. In contrast, our learning-based approach naturally captures this structure through training, and therefore remains effective even in such complicated settings. Importantly, the learning approach avoids the need to construct the likelihood function $P(\vec{x} \mid H_{0,1})$, which is a significant advantage in practical applications.

As $\alpha$ increases, or equivalently as the separation between the two point sources becomes larger, both the learning approach and the likelihood-ratio method eventually approach the performance predicted by the Chernoff bound. We note that in a binary discrimination task, the minimal possible success probability $P_{\text{succ}}$ is $50\%$, corresponding to random guessing. In contrast, the Chernoff bound predicts $P_{\text{succ}} = 1 - \exp(-\text{constant} \times S)$, which can be smaller than $50\%$—for instance, by formally taking the sample number $S=0$. This is expected because the Chernoff bound is an asymptotic bound that applies only when $S$ is sufficiently large. This asymptotic nature explains why both the learning approach and the likelihood-ratio method initially outperform the Chernoff prediction for small $\alpha$, but converge to the Chernoff scaling as $\alpha$ becomes larger.

In principle, the learning-based approach should not outperform the limit predicted by the Chernoff bound, since the Chernoff bound characterizes the fundamental performance limit of a discrimination task. However, we emphasize that the Chernoff bound is not the complete story for quantifying performance in many discrimination problems. Its limitations already appear in this simple example of distinguishing a single point source from two point sources. To approach the scaling of the exponential index $\varsigma = -\log(1 - P_{\text{succ}})$ predicted by the Chernoff bound, we increase $\alpha$ while keeping the sample number $S$ fixed. When the predicted scaling of $\varsigma = -\log(1 - P_{\text{succ}})$ is finally reached, the success probability $P_{\text{succ}}$ is already above $99\%$. In other words, the Chernoff bound accurately reflects performance only in the regime where the task is almost certain to succeed in this example. In practical applications, however, our attention is not on how quickly $P_{\text{succ}}$ increases once it is already above $99\%$, but on when we can reach a moderate level of performance such as $70\%$ and what strategy should be adopted to achieve it. Such nonasymptotic behavior is clearly not captured by the Chernoff bound in our simulation, and the Chernoff bound also does not indicate what strategy should be used to reach the desired performance when we face a complex task for which the likelihood ratio method is not easily applicable. By contrast, our learning-based approach provides the total REC as a meaningful figure of merit in the finite-sample regime, and, more importantly, offers a principled strategy for tackling discrimination tasks when sample number $S$ is finite.

For more complex tasks such as the face recognition example in the main text, it becomes difficult to make a direct comparison with the likelihood ratio method. We do not know how to construct the likelihood function,  which is not feasible for complex imaging tasks in practice. In addition, we are not working in the regime where the success probability approaches $100\%$, so the Chernoff bound would not provide a meaningful benchmark. Instead, we show in Fig. \ref{fig:rate_face} the performance obtained from the learning-based approach. In the plot, the exponential index $\varsigma = -\log(1 - P_{\text{succ}})$ is shown as a function of $\alpha$ for a fixed sample number, using exactly the same setting as in the main text and in Sec. \ref{SI:face} of the Supplemental Material. As the success probability depends on the truncation order $\mathcal{K}$, we vary $\mathcal{K}$ and take the peak success probability in this calculation.
The calculation shows a step-like structure. Because $C_T$ varies in discrete steps as a function of $\alpha$, the number of measurable features also grows in a stepwise manner. Correspondingly, both the success probability $P_{\text{succ}}$ and the exponential index $\varsigma = -\log(1 - P_{\text{succ}})$ display this step-like behavior as $\alpha$ increases. Importantly, the success probability in this regime is far from $100\%$, and the learning approach enables us to probe performance well before the asymptotic regime predicted by the Chernoff bound is reached, while also providing a principled strategy for handling complex discrimination tasks in the finite sample regime.
}

\end{document}